\documentclass[format=acmsmall, review=false]{acmart}

\usepackage{acm-ec-20} 

\usepackage{amsthm}

\usepackage{algorithm}
\usepackage[noend]{algpseudocode}

\usepackage{pgfplots}

\usepackage{mathtools}
 \usepackage{graphicx}
 \usepackage{float}
 \usepackage{tikz, xcolor}
 \usetikzlibrary{arrows.meta}
\usetikzlibrary {shapes}
\usepackage{footmisc}

\usepackage[most]{tcolorbox}

  \usetikzlibrary{decorations.pathmorphing}
  
  \usetikzlibrary{decorations.pathreplacing}
  \usetikzlibrary{positioning, calc}

\usepackage{tcolorbox}

\setcitestyle{acmnumeric}

\newcommand*{\draft}{}

\ifdefined\draft
    \newcommand{\patrick}[1]{\textcolor{blue}{[Patrick: #1]}}
    \newcommand{\andy}[1]{\textcolor{red}{[Andy: #1]}}
    \newcommand{\brendan}[1]{\textcolor{orange}{[Brendan: #1]}}
\else
    \newcommand{\patrick}[1]{}
    \newcommand{\andy}[1]{}
    \newcommand{\brendan}[1]{}
\fi

\title{The Carnot Bound: Limits and Possibilities for Bandwidth-Efficient Consensus}

\date{Nov 2025}

\author{Andrew Lewis-Pye\textsuperscript{1,2}}
\author{Patrick O'Grady\textsuperscript{2}}

\renewcommand{\shortauthors}{Lewis-Pye and O'Grady}

\affiliation{%
\textsuperscript{1}\institution{London School of Economics}
\country{UK}
}

\affiliation{%
\textsuperscript{2}\institution{Commonware}
\country{USA}
}

\begin{abstract}
In leader-based protocols for State Machine Replication (SMR), the 
leader's outgoing bandwidth is a natural throughput bottleneck. 
Erasure coding can alleviate this by allowing the leader to send 
each processor a single fragment of each block, rather than a full 
copy. The \emph{data expansion rate}, the ratio of total data sent 
to payload size, determines how close throughput can get to the 
underlying network bandwidth.

We investigate the fundamental limits and possibilities for 
bandwidth-efficient leader-based consensus. On the negative side, we 
prove that protocols with 2-round finality (one round of voting) 
cannot achieve a data expansion rate below approximately $2.5$, a 
bound that is matched by existing protocols. On the positive side, 
we show that protocols with 3-round finality (two rounds of voting) 
can do significantly better. The key insight is that the second 
voting round provides a recovery mechanism: leaders can attempt 
aggressive erasure codes and safely fall back to more conservative 
ones when reconstruction fails, without compromising consistency.

We present two protocols with 3-round finality realising this 
approach, targeting two natural formulations of the replication 
problem. Carnot~1 solves \emph{Extractable SMR}, in which any 
correct processor can efficiently reconstruct any finalised block 
from the fragments held by correct processors, but processors are 
not required to hold full blocks locally. Extractable SMR suffices 
for settings such as data availability layers, where on-demand 
reconstruction is enough. Carnot~1 assumes $n \geq 4f+1$ 
processors (of which at most $f$ may be Byzantine) and achieves a 
clean design requiring no additional fragment dissemination beyond 
the initial protocol messages. Carnot~2 solves full \emph{SMR}, in which 
every correct processor eventually receives every finalised 
transaction. It operates under the optimal resilience assumption 
$n \geq 3f+1$, at the cost of additional fragment dissemination
when Byzantine processors interfere. Both protocols  can incorporate
multi-view leaders  to maximise throughput and
minimise latency: since a multi-view leader pipelines proposals,
block times are limited only by bandwidth and per-view protocol
overhead, and are of the order of a millisecond in realistic
settings. Under favourable conditions, both protocols allow
leaders to use data expansion rates approaching~$1$; under 
adversarial conditions, leaders can revert to safe expansion rates 
of approximately $1.33$ and $1.5$, respectively---both still well 
below the $2.5$ lower bound for protocols with 2-round finality.
\end{abstract}

\begin{document}

\maketitle

\section{Introduction} \label{intro}

State Machine Replication (SMR)~\cite{schneider1990implementing} is a fundamental primitive for distributed computing that allows a collection of processors to maintain a consistent, shared log of transactions despite the failure, or even malicious behaviour, of some participants. Originally developed for fault-tolerant systems, SMR has become the algorithmic backbone of modern blockchains and decentralised applications. The core challenge is to ensure that all correct processors agree on the same sequence of transactions (Consistency), while also guaranteeing that new transactions are eventually included (Liveness), even in the presence of \emph{Byzantine} faults, where corrupted processors may behave arbitrarily.

Protocols for SMR typically operate in the \emph{partially synchronous} model~\cite{DLS88}, in which messages may be delayed arbitrarily before an unknown time GST (the Global Stabilisation Time), but are delivered within a known bound $\Delta$ thereafter. This model captures realistic network conditions, including temporary partitions, and underpins many widely deployed protocols, from PBFT~\cite{castro1999practical} to the consensus layers of modern blockchains such as Ethereum~\cite{buterin2017casper}.

\vspace{0.2cm}
\noindent \textbf{The throughput challenge.} A central goal in recent protocol design is to maximise \emph{throughput}: the rate at which transactions can be finalised. Transactions are typically grouped into \emph{blocks}, and protocols proceed by having processors agree on a sequence of blocks. Two broad families of protocols have emerged. In \emph{DAG-based} protocols~\cite{keidar2021all,spiegelman2022bullshark,shrestha2025sailfish,keidar2022cordial}, multiple processors propose blocks concurrently, organising them into a directed acyclic graph from which a linear ordering is then extracted. In \emph{leader-based} protocols~\cite{castro1999practical,yin2019hotstuff,shoup2025kudzu,chou2025minimmit}, a single designated leader proposes each block, and the remaining processors vote to accept or reject it.

DAG-based protocols can achieve high throughput through their concurrency, but at the cost of higher \emph{round-latency},\footnote{State-of-the-art DAG-based protocols, such as Sailfish~\cite{shrestha2025sailfish}, can finalise transactions in designated `leader blocks' in the same optimal number of communication rounds as leader-based protocols, but transactions in most blocks require additional rounds before finalisation.} in the sense that finalisation requires more rounds of communication. Leader-based protocols offer lower round-latency, but the leader's outgoing bandwidth is a natural bottleneck, which can be alleviated using \emph{erasure coding}. Rather than sending a full copy of each block to every processor, the leader encodes the block's payload and sends each processor a single fragment, from which the full payload can be reconstructed once sufficiently many fragments are collected. The efficiency of this encoding is captured by the \emph{data expansion rate}: if the payload has size $\beta$, then the total data the leader must send is approximately $d \cdot \beta$, where $d$ is the data expansion rate. As examined by Lewis-Pye, Nayak and Shrestha~\cite{lewis2025pipes}, this factor $d$ directly governs the throughput bottleneck. For a network in which each processor has bandwidth $S$, i.e., processors  can send $S$ bits per unit time, payloads of size $\beta$ require approximately $d\beta / S$ time to disseminate. The maximum throughput, measured in payload bits finalised per unit time, is therefore approximately $S/d$. To approach throughputs matching the underlying network bandwidth, one must push the data expansion rate as close to $1$ as possible.

\vspace{0.2cm}
\noindent \textbf{Finality and voting rounds.} Before stating our results, we establish some basic terminology. We consider a set of $n$ processors, of which at most $f$ may display Byzantine (arbitrary) behaviour. In leader-based protocols, which are the focus of this paper, the protocol proceeds in sequential \emph{views}, each with a designated \emph{leader} who assembles transactions into a block and proposes it to the other processors, who then vote on whether to accept it. We use the term \emph{$r$-round finality} to describe a protocol in which finality requires $r$ rounds of communication: one round for the leader's proposal, followed by $r-1$ rounds of voting. Protocols with \emph{2-round finality}, consisting of one proposal round and one voting round, include Minimmit~\cite{chou2025minimmit}, Hydrangea~\cite{shrestha2025hydrangea}, and Kudzu~\cite{shoup2025kudzu}, and typically assume $n \geq 5f+1$ ($n\geq 5f-1$ is necessary and sufficient~\cite{kuznetsov2021revisiting}). Protocols with \emph{3-round finality}, consisting of one proposal round and two voting rounds, include Simplex~\cite{chan2023simplex} and the protocols we present in this paper, and typically operate under the (optimal) assumption that $n\geq 3f+1$.

\vspace{0.2cm}
\noindent \textbf{Our contributions.} This paper investigates the fundamental limits and possibilities for bandwidth-efficient leader-based consensus using erasure coding. We establish a lower bound on the data expansion rate for protocols with 2-round finality, and show that protocols with 3-round finality can do significantly better.

\vspace{0.1cm}
\noindent \emph{An impossibility result for 2-round finality (Section~\ref{imposs}).} We prove that protocols with 2-round finality cannot achieve a data expansion rate below $2.5$. This bound is tight: E-Minimmit~\cite{chou2025minimmit} and Kudzu~\cite{shoup2025kudzu} achieve data expansion rates of approximately $2.5$.

\vspace{0.1cm}
\noindent \emph{Circumventing the bound with 3-round finality.} Protocols with 3-round finality can do significantly better. 
Indeed, some versions of DispersedSimplex~\cite{shoup2023sing} already achieves a 
data expansion rate of approximately $1.5$ with 3-round finality 
(using an erasure coding approach similar to that described in Section~\ref{1.5} of this paper). 
The question is whether one can push the rate further toward~$1$. 
The key insight is that the additional voting round in 3-round 
finality protocols provides a crucial \emph{recovery mechanism}.  When the leader uses an aggressive erasure code (with a high reconstruction threshold $k$), it is possible that some correct processors reconstruct the payload, perhaps with the aid of fragments from Byzantine processors, while others cannot. In a 2-round finality protocol, this possibility can lead to unrecoverable configurations, in which consistency and liveness cannot both be maintained. With two rounds of voting, however, the second round can detect whether data availability has been achieved and, if not, allow the protocol to nullify the view and retry. This means leaders can \emph{attempt} low data expansion rates and safely fall back to higher rates when reconstruction fails.

Specifically, a correct leader can use an $(n,k)$-erasure code for any $k$ up to $n-1$,\footnote{We consider $k<n$ so that the leader need not disseminate their own fragment.} meaning that $k$ of $n$ fragments suffice for reconstruction. This produces  a data expansion rate of $n/k$. In the optimistic case, when the leader is correct, the network is synchronous, and at most $f_a$ processors are actually faulty, the leader can set $k$ as high as $n - f_a - 1$, giving a data expansion rate that approaches $1$ as $f_a$ becomes small relative to $n$. When conditions deteriorate, the protocol automatically reverts to a safe expansion rate without interrupting progress.

\vspace{0.1cm}
\noindent \emph{Two formulations of the replication problem.} We present two protocols realising this approach, targeting two natural formulations of the replication problem. In standard SMR, every correct processor eventually receives every finalised transaction. A weaker but practically important variant is \emph{Extractable SMR}~\cite{lewis2025morpheus}, in which correct processors agree on the same sequence of finalised blocks, and any correct processor can efficiently reconstruct any finalised block from the fragments held by correct processors collectively, but processors are not required to hold full blocks locally. Indeed, many well-known protocols, including HotStuff~\cite{yin2019hotstuff}, implicitly solve Extractable SMR rather than full SMR: they ensure data availability for finalised blocks but do not specify the mechanism by which every correct processor retrieves the full payload. Extractable SMR suffices for settings such as data availability layers, where on-demand reconstruction is enough.

\vspace{0.1cm}
The two protocols we present are Carnot\footnote{The name is inspired by the Carnot heat engine, which achieves the theoretical maximum efficiency for converting heat into work. Similarly, our protocols aim to approach the theoretical maximum efficiency for converting network bandwidth into throughput.}~1, a protocol for Extractable SMR assuming $n \geq 4f+1$, and Carnot~2, which solves full SMR under the weaker (and optimal) assumption $n \geq 3f+1$. The safe data expansion rate depends on the resilience assumption: approximately $1.33$ for $n \geq 4f+1$, or approximately $1.5$ for $n \geq 3f+1$.

\vspace{0.1cm}
\noindent \emph{Carnot~1: a clean design solving Extractable SMR for $n \geq 4f+1$.} Our first protocol assumes $n \geq 4f+1$ and achieves a particularly clean design. A key concern when using aggressive erasure codes is ensuring \emph{data availability}: if a correct processor advances to the next view on the basis of a block whose payload it cannot fully reconstruct, it must be guaranteed that the payload can eventually be recovered from the messages received by correct processors collectively. In many protocols, this guarantee requires correct processors to perform additional rounds of fragment dissemination when the leader is Byzantine, adding communication overhead precisely in the cases where resources are most constrained. Carnot~1 avoids this entirely: processors echo their certified fragment once upon receiving it from the leader and voting, but no further echoing is ever required. The protocol achieves this through a careful choice of quorum sizes that guarantees every view produces either a compact data-availability certificate or a mechanism for skipping the view, without additional communication.

\vspace{0.2cm} 
Carnot~1 also incorporates \emph{multi-view leaders}, where a single leader proposes blocks across multiple consecutive views (a \emph{superview}), to eliminate the inter-proposal gap that arises with rotating leaders.

\vspace{0.2cm}
\noindent \emph{Carnot~2: an SMR protocol for $n \geq 3f+1$.} Our second protocol solves full SMR under the standard optimal resilience assumption $n \geq 3f+1$, at the cost of requiring additional fragment dissemination when Byzantine processors interfere. When a processor reconstructs a payload but receives evidence that others have not done so, it re-encodes the payload using a more conservative erasure code and disseminates the resulting fragments, ensuring that all correct processors can eventually reconstruct the payload as well. Under standard operation, with reliable communication and when processors act correctly, no extra dissemination occurs.  Carnot~2 also incorporates multi-view leaders.


\vspace{0.1cm}
Both protocols build on the Simplex consensus protocol~\cite{chan2023simplex}, a simple and efficient protocol for partial synchrony with rotating leaders and two rounds of voting per view. Our protocols extend Simplex with erasure coding, multi-view leaders, and the mechanisms described above. To summarise the trade-off between the two: Carnot~1 is simpler and more communication-efficient under adversarial conditions, but requires a stronger resilience assumption and solves Extractable SMR; Carnot~2 tolerates the optimal number of faults and solves full SMR, but may require additional communication when Byzantine processors interfere.

\vspace{0.2cm}
\noindent \emph{Small block times.} While our main focus is bandwidth efficiency, the use of multi-view leaders has a further consequence that deserves emphasis: \emph{small block times}. A recent line of work~\cite{scaffino2026gatling,elsheimy2026cadence} aims to minimise the \emph{inter-proposal time}, i.e., the time between consecutive block proposals, pushing it below the message delay bound $\Delta$. Since a multi-view leader pipelines proposals---it begins sending the block for each view as soon as it has finished sending the block for the previous view, without waiting for votes or certificates---the inter-proposal time for Carnot is bounded neither by $\Delta$ nor by the actual message delay $\delta$. Instead, block times are determined by bandwidth and per-view protocol overhead alone. The analysis of Section~\ref{anal2} makes this precise: at equilibrium, the time between successive proposals is $(13+\log n)n\lambda/(S-Dd)$, where $\lambda$ is the length of a hash, $S$ is per-processor bandwidth, $D$ is the incoming transaction rate, and $d$ is the data expansion rate. For realistic parameter values this is of the order of a millisecond. Block times of this order also clarify a point of terminology. Leaders that control multiple consecutive views are commonly called `stable' or `slowly rotating', but both terms suggest tenures that are long in real time: with millisecond block times, a leader controlling 100 consecutive views holds office for a fraction of a second, which is less time than a single slot in many deployed protocols. Since it is the wall-clock length of a leader's tenure that matters for concerns such as censorship, a superview is perhaps best seen as taking the single block the leader would propose in a protocol with larger block times, and breaking it into many small blocks, each finalised with low latency. We therefore adopt the (hopefully less misleading) term \emph{multi-view leaders}. We also note that Carnot achieves its small block times without the synchronised clocks required by~\cite{scaffino2026gatling,elsheimy2026cadence}, and without sacrificing \emph{predictable validity}: the guarantee that a proposer can validate the transactions it includes against an up-to-date state of the log at proposal time. Indeed, as discussed in Section~\ref{rw}, the variants of Gatling~\cite{scaffino2026gatling} that retain predictable validity operate under precisely the multi-view-leader regime considered here, but without erasure coding, so that the leader's bandwidth remains a throughput bottleneck.

\vspace{0.2cm}
\noindent \textbf{Paper structure.}  Section~\ref{setup} describes the model and defines the key building blocks, including erasure codes and Merkle trees, and the formal definitions of SMR and Extractable SMR. Section~\ref{intu} provides an informal overview of the ideas behind our protocols.  Section~\ref{formal} gives the formal specification  of Carnot~1: analysis in  the standard model of partial synchrony appears in Appendix~\ref{anal1}.   Section \ref{2int} describes the intuition behind Carnot~2. Section \ref{2spec} gives the formal specification of Carnot~2, while Appendix \ref{anal3} presents analysis  in the standard model of partial synchrony. Section \ref{anal2} then analyses Carnot~2 in the Pipes model~\cite{lewis2025pipes}, which considers network bandwidth, and so allows for a formal analysis of throughput and block times.  Section~\ref{imposs} presents the impossibility result for 2-round finality. Section \ref{exper} describes the results of our experiments, which compare latency and throughput for Carnot 1 and 2 against the state-of-the-art.  Section \ref{rw} describes related work, and Section \ref{disc} contains a final discussion. 

%
%

\section{The Setup} \label{setup}

We consider a set $\Pi = \{p_1, \ldots, p_{n} \}$ of $n$ processors. At most $f$ processors may become corrupted by the adversary during the course of the execution (i.e., the adversary is \emph{adaptive}), and may then display  \emph{Byzantine} (arbitrary) behaviour.
Processors that never become corrupted by the adversary are referred to as \emph{correct}. While discussing Carnot 1, we assume $n\geq 4f+1$. While discussing Carnot 2,  we assume $n\geq 3f+1$. 

\vspace{0.2cm}
\noindent \textbf{Cryptographic assumptions}. Our cryptographic assumptions are standard for papers on this topic. Processors communicate by point-to-point authenticated channels. We use a cryptographic signature scheme, a public key infrastructure (PKI) to validate signatures, and a collision resistant hash function $H$. As expanded upon later in this section, we also  use threshold signatures and erasure codes.

  We assume a computationally bounded adversary. Following a common standard in distributed computing and for simplicity of presentation (to avoid the analysis of negligible error probabilities), we assume these cryptographic schemes are perfect, i.e., we restrict attention to executions in which the adversary is unable to break these cryptographic schemes.

  \vspace{0.2cm}
\noindent \textbf{The partial synchrony model}. As noted above, processors communicate using point-to-point authenticated channels. We consider the standard partial synchrony model \cite{DLS88}, whereby the execution is divided into discrete timeslots $t\in \mathbb{N}_{\geq 0}$ and a message sent at time $t$ must arrive at time $t'>t$ with $t'\leq \max\{\text{GST},t\} + \Delta$. While $\Delta$ is known, the value of GST is unknown to the protocol. We also write $\delta$ to denote the (unknown) least upper bound on message delay after GST (noting that $\delta$ may be significantly less than the known bound $\Delta$). The adversary chooses GST and also message delivery times, subject to the constraints already defined.

Correct processors begin the protocol execution before GST and are not assumed to have synchronised clocks.  For simplicity, we do assume that
the clocks of correct processors all proceed in real time,
meaning that if
$t'>t$ then the local clock of correct $p$ at time $t'$ is $t'-t$ in
advance of its value at time $t$. Using standard arguments, our
  protocol and analysis can
be extended in a straightforward way to the case
in which there is a known upper bound on the difference
between the clock speeds of correct processors.

\vspace{0.2cm}
\noindent \textbf{Threshold signatures}. A $k$-of-$n$ threshold signature scheme allows \emph{signature shares} from any $k$ processors on a given message to be combined to form a \emph{certificate} on that message. Forming a certificate on any message is infeasible given less than $k$ signature shares.  Such schemes can be implemented using BLS signatures \cite{boneh2001short}.    

\vspace{0.2cm}
\noindent \textbf{Erasure codes}.
 For each $k\in [1,n]$, we suppose given an $(n,k)$-erasure code, which uniquely encodes any bit string $C$ of length $\beta$ as a sequence of $n$ \emph{fragments}, $c_1,\dots,c_n$,  in such a way that any $k$  fragments and $\beta$ suffice to efficiently reconstruct $C$. We suppose all fragments have the same size (as a function of $n,k$ and $\beta$). 
Reed-Solomon codes can be used to realise an $(n,k)$-erasure code so that each fragment has size $\approx \beta/k$. For example, if  $n=3f+1$ and $k=2f$, this leads to a data expansion rate of roughly $1.5$, i.e., the combined size of all $n$ fragments is roughly 1.5$\beta$. The data expansion rate approaches 1 as $k$ approaches $n$.  

\vspace{0.2cm}
\noindent \textbf{Merkle trees}. We use Merkle trees in the standard way to allow a processor $p$ to commit to a sequence of values $v_1,\dots,v_n$. To form the commitment, $p$ constructs a full binary tree in which the leaves are the hashes of $v_1,\dots,v_n$ and every other node is the hash of its two children. The commitment is the root of the tree, $r$ say. 
To \emph{open} the commitment at position $i$, $p$ specifies $v_i$ and a \emph{validation path  from $r$ to $v_i$ at position $i$}: the validation path specifies the sibling of each node on the path from the hash of $v_i$ at position $i$ to the root. 

\vspace{0.2cm}
\noindent \textbf{Encoding, certified fragments, and tags}. We use techniques introduced by Cachin and Tessaro \cite{cachin2005asynchronous} for the purpose of \emph{asynchronous verifiable information dispersal (AVID)}. These techniques combine the use of Merkle trees and erasure codes. Given $k\in [1,n]$ and  a bit string $C$ of length $\beta$, let $c_1,\dots,c_n$ be the corresponding fragments produced by our $(n,k)$-erasure code. Form a Merkle tree whose leaves are the hashes of $c_1,\dots, c_n$ and let $r$ be the root of this tree. For each $i\in [n]$, let $\pi_i$ be a validation path  from $r$ to $c_i$ at position $i$. We define the \emph{tag} $\tau(C,k):= (\beta, k,r)$ and set: 
\[ \text{Encode}(C,k):= (\tau(C,k), \{ (c_i,\pi_i ) \}_{i\in [n]} ). \]

\noindent If $z= (\beta', k,r')$ for some $\beta'\in \mathbb{N}$ and some hash value $r'$, we say $(c,\pi)$ is a \emph{certified fragment of $z$ at $i$} if both: 
\begin{itemize} 
\item $c$ is of the correct length (given $n$ and $k$) to be a fragment of a message of length $\beta'$, and;
\item $\pi$ is a validation path from $r'$ to $c$ at position $i$. 
\end{itemize}

\vspace{0.2cm}
\noindent \textbf{Decoding}. The function Decode takes inputs of the form
\[ (z,  \{ (c_i,\pi_i ) \}_{i\in I} ), \]
where $z=(\beta, k,r)$ for $\beta\in \mathbb{N}$, $k\in [1,n]$,  $r$ is a hash value, $I\subseteq [n]$ with $|I|=k$, and each $(c_i,\pi_i)$ is a certified fragment of $z$ at $i$. It then reconstructs a message $C$ of length $\beta$ from the fragments $\{ c_i \}_{i\in I}$. If this reconstruction fails (e.g., due to a formatting error) it outputs $\bot$. Otherwise, it computes $\tau(C,k)=(\beta, k, r')$. If $r\neq r'$ it outputs $\bot$, and otherwise outputs $C$.

\vspace{0.2cm}
\noindent \textbf{Transactions}. Transactions are messages of a
distinguished form,  signed by the \emph{environment}. Each timeslot, each processor may receive some
finite set of transactions directly from the environment. We make the standard assumption that transactions are unique (repeat transactions can be produced using an increasing `ticker'  or timestamps \cite{castro1999practical}).

\vspace{0.2cm}
\noindent \textbf{SMR and Extractable SMR: informal discussion}.  Many protocols for State Machine Replication (SMR)~\cite{schneider1990implementing} do not \emph{explicitly} specify the mechanism by which correct processors retrieve the full finalised log of transactions. Instead, they ensure \emph{data availability} for the log. In Narwhal~\cite{danezis2022narwhal}, for example, blocks of transactions are sent to all processors, who then send acknowledgement messages confirming receipt. A set of $f+1$ such acknowledgements acts as a \emph{data availability certificate} for the block: it proves that at least one correct processor has received the block, and can therefore supply it to others upon request. Consensus can then be run on certificates rather than full blocks. 
The notion of \emph{Extractable SMR} was introduced in~\cite{lewis2025morpheus} to formalise the task being solved by such protocols. Roughly speaking, the requirement is that, while an individual processor may not be able to determine the full finalised log (without extra communication), the correct processors collectively receive sufficient messages to determine it. In short, the difference between SMR and Extractable SMR is that  SMR requires every correct processor to eventually hold the full finalised log, while Extractable SMR requires only that the log be recoverable from the information held by correct processors collectively.

\vspace{0.2cm}
\noindent \textbf{SMR: definition}.  Write $\sigma \preceq \tau$ to denote that the string $\sigma$ is a prefix of the string $\tau$. 
  If $\mathcal{P}$ is a protocol for \emph{SMR}, then it must specify a function $\mathcal{F}$ that maps any set of messages to a sequence of transactions. Let $M^*$ be the set of all messages that are received by at least one (potentially Byzantine) processor during the execution. For any timeslot $t$, let $M_{p_i}(t)$ be the set of all messages that are received by $p_i$ at timeslots  $\leq t$. We require the following conditions to hold:

\vspace{0.1cm} 
\noindent \emph{Consistency}.  For any $M_1\subseteq M_2  \subseteq M^*$,  $\mathcal{F}(M_1)\preceq \mathcal{F}(M_2)$.

\vspace{0.1cm} 
\noindent \emph{Liveness}.  If $p_i$ and $p_j$ are correct and if $p_i$ receives the transaction $\text{tr}$ then, for  some  $t$,  $\text{tr}\in \mathcal{F}(M_{p_j}(t))$. 

\vspace{0.1cm} 
At any time-slot $t$,  $\text{log}_i(t):=\mathcal{F}(M_{p_i}(t))$ is the sequence of transactions \emph{finalised} by $p_i$. 
This definition of consistency ensures that correct
processors never finalise incompatible sequences: for any sets of messages 
$M_1,M_2\subseteq M^*$ that two such processors might have received,
$ \mathcal{F}(M_1) \preceq \mathcal{F}(M_1 \cup M_2)$ and
$\mathcal{F}(M_2)\preceq \mathcal{F}(M_1\cup M_2)$.  

\vspace{0.2cm}
\noindent \textbf{Extractable SMR: definition}.  
  If $\mathcal{P}$ is a protocol for \emph{Extractable SMR}, then (as for SMR)  it must specify a function $\mathcal{F}$ that maps any set of messages to a sequence of transactions. Let $M^*$ be defined as in the definition of SMR.  For any timeslot $t$, let $M_c(t)$ be the set of all messages that are received by at least one correct processor at a timeslot $\leq t$. We require the following conditions to hold:

\vspace{0.1cm} 
\noindent \emph{Consistency}.  For any $M_1\subseteq M_2  \subseteq M^*$,  $\mathcal{F}(M_1)\preceq \mathcal{F}(M_2)$.

\vspace{0.1cm} 
\noindent \emph{Liveness}. If correct $p_i$ receives the transaction $\text{tr}$, there must exist $t$ such that $\text{tr}\in \mathcal{F}(M_c(t))$. 

\vspace{0.2cm}
\noindent \textbf{Blocks, parents, ancestors, and descendants}. We specify a protocol that produces \emph{blocks}.   There is a unique \emph{genesis} block $b_{\text{gen}}$, which is considered \emph{finalised} at the start of the protocol execution. Each block $b$ other than the genesis block has a unique \emph{parent}. The \emph{ancestors} of $b$ are $b$ and all ancestors of its parent (while the genesis block has only itself as ancestor), and each block has the genesis block as an ancestor.  Block $b'$ is a \emph{descendant} of $b$ if $b$ is an ancestor of $b'$. Two blocks are \emph{inconsistent} if neither is an ancestor of the other.

\section{Carnot: the intuition} \label{intu} 

In this section, we informally describe the intuition behind our positive results. In Section \ref{imposs}, we show that achieving a data expansion rate lower than 2.5 is not possible for protocols with 2-round finality. In this section, we therefore consider protocols with two rounds of voting (3-round finality). We start by supposing $n\geq 3f+1$. First, in Section \ref{simplex}, we review the Simplex protocol~\cite{chan2023simplex}, upon which our protocols will be based. Then, in Section \ref{sl}, we briefly consider the throughput advantages of multi-view leaders. In Sections \ref{ber}-\ref{lower}, we consider increasingly advanced approaches to reducing data expansion.

\subsection{Recalling Simplex} \label{simplex} 
Simplex~\cite{chan2023simplex} is a  protocol for partial synchrony assuming $n \geq 3f+1$, with rotating leaders. It proceeds in sequential \emph{views} $v = 1, 2, \ldots$, each with leader $\mathtt{lead}(v)$.

\vspace{0.2cm} 
\noindent \textbf{Two rounds of voting}. The protocol uses two rounds of voting per view. When a processor enters view $v$, it sets a timer to expire in time $3\Delta$. When the leader enters view $v$, it immediately proposes a new block.   Upon receiving a valid proposal $b$, each processor disseminates (sends to all) a \emph{stage-1} vote for $b$. A set of $n-f$ stage-1 votes for $b$, each by a different processor, is called a \emph{stage-1 notarisation} for $b$. Upon receiving a stage-1 notarisation for $b$, a processor moves to view $v+1$. If, upon entering $v+1$, the processor's timer for view $v$ has not yet fired, it disseminates a \emph{stage-2} vote for $b$. A set of $n-f$ stage-2 votes is called a \emph{stage-2 notarisation} for $b$, and $b$ is \emph{finalised} when it receives such a notarisation. 

\vspace{0.2cm} 
\noindent \textbf{Nullifications}. If the leader is faulty or slow, processors time out after $3\Delta$ and disseminate a \emph{nullify}$(v)$ message. A set of $n-f$ nullify$(v)$ messages is called a \emph{nullification} for view $v$; processors also move to view $v+1$ upon receiving a nullification. The critical design constraint is that each correct processor either sends a stage-2 vote or a nullify$(v)$ message, but never both. It follows by the standard quorum intersection argument ($(n-f)+(n-f)-n\geq f+1$ when $n\geq 3f+1$) that a stage-2 notarisation and a nullification for the same view cannot coexist. Similarly, two distinct blocks cannot both receive a stage-1 notarisation in the same view.

\vspace{0.2cm}
\noindent \textbf{Progression through views}. Correct processors disseminate stage-1 notarisations and nullifications upon first receipt. If any correct processor leaves view $v$ upon receiving a stage-1 notarisation, then all correct processors receive it and also leave the view. Otherwise, all correct processors eventually time out and disseminate nullify$(v)$ messages, producing a nullification.

\vspace{0.2cm} 
\noindent \textbf{Consistency}. We stipulated above that correct processors will disseminate stage-1votes upon receipt of a \emph{valid} proposal from the leader. To ensure consistency, we further stipulate that $p_i$ will only regard a view $v$ block with parent $b'$ for view $v'$ as a valid proposal if it has received a stage-1 notarisation for $b'$ and nullifications for all views in the open interval $(v',v)$. We noted above that, if  a view $v$ block $b$ is finalised, then view $v$ does not receive a nullification. This means that correct processors cannot vote for any block inconsistent with $b$ in a subsequent view, since no such proposal could be valid in the absence of a nullification for view $v$.  

\vspace{0.2cm} 
\noindent \textbf{Liveness}. Recall that $\delta$ denotes the (unknown) least upper bound on message delay after GST. The key lemma is that views are synchronised: if the first correct processor to enter view $v$ does so at $t\geq \text{GST}$, then all correct processors enter view $v$ by $t+\delta$, because processors disseminate stage-1 notarisations and nullifications upon first receipt.

Suppose the leader of view $v$ is correct and proposes $b$ with parent $b'$ for view $v'$. The leader enters view $v$ and proposes by $t+\delta$. Since the leader has entered view $v$, it must have received nullifications for all views in $(v',v)$ as well as a stage-1 notarisation for $b'$; it disseminates all of these upon first receipt. All correct processors therefore receive these, together with the proposal $b$, by $t+2\delta$. They send stage-1 votes by $t+2\delta$ and receive a stage-1 notarisation for $b$ by $t+ 3\delta$. Since $\delta \leq \Delta$, this occurs before any timer fires (the earliest a timer can fire is $t + 3\Delta \geq t+3\delta$), so all correct processors send stage-2 votes upon entering view $v+1$, and $b$ is finalised by $t+4\delta$.

\subsection{The advantage of multi-view leaders} \label{sl}

\noindent \textbf{The cost of rotating leaders}. In Simplex, each view has a different leader. Before proposing, the new leader must receive a stage-1 notarisation or nullification from the previous view, and this creates a significant gap between consecutive proposals during which no block-data is being disseminated, limiting throughput and increasing latency. One can partially address this with \emph{optimistic proposals}~\cite{doidge2024moonshot}, whereby the leader of view $v+1$ proposes a block upon receiving the proposal for view $v$, without waiting for a stage-1 notarisation. This reduces the gap, but does not eliminate it: the new leader must still receive the previous leader's proposal before it can propose its own block.

\vspace{0.2cm}
\noindent \textbf{Multi-view leaders and pipelining}. A more direct solution is to give the same leader multiple successive views. If a correct leader has $x$ consecutive views, it can pipeline proposals: as soon as it sends the block for view $v$, it can immediately begin sending the block for view $v+1$, without waiting for any messages. This eliminates the inter-proposal gap entirely for all but the first view of each leader's tenure, allowing throughput to approach the underlying network bandwidth as $x$ grows. In our protocol, we formalise this using \emph{superviews}: each superview is a sequence of $x$ views with a single leader, and leaders rotate between superviews.

\subsection{Basic erasure coding}  \label{ber} 

We now turn to the question of how each individual block is disseminated. In Simplex, the leader sends a copy of the full block to every other processor, so that the total data sent is approximately $n$ times the block size, giving a data expansion rate of $n$. A natural approach to reducing this is to use erasure coding. In this section, we describe a straightforward approach producing a data expansion rate of 3, using methods first described in~\cite{cachin2005asynchronous}. In Sections~\ref{1.5} and~\ref{lower}, we consider how to achieve lower data expansion rates.

\vspace{0.2cm}
\noindent \textbf{Erasure-coded proposals}. Rather than sending the entire block to each processor, the leader separates out the block's \emph{payload} $C$ (its sequence of transactions) and encodes it using an $(n,n-2f)$-erasure code, producing $n$ fragments of size approximately $|C|/(n-2f)$ each, together with a short commitment (the \emph{tag} $\tau(C,n-2f)$, as defined in Section~\ref{setup}). The block $b$ itself is now just a small tuple $(v, \tau(C,n-2f), h)$ signed by the leader, where $h$ is the hash of the parent block. The leader sends each processor $p_i$ a single \emph{certified fragment} $(b, i, c_i, \pi_i)$, where $c_i$ is $p_i$'s fragment of the payload and $\pi_i$ is a validation path allowing $p_i$ to verify that its fragment is consistent with the tag. The total data sent by the leader is approximately $n/(n-2f)$ times the payload size.

\vspace{0.2cm}
\noindent \textbf{Voting and fragment dissemination}. Upon receiving and verifying its certified fragment, processor $p_i$ disseminates both the fragment and a stage-1 vote for $b$. If $n-f$ stage-1 votes are collected, a stage-1 notarisation is formed. To proceed safely to the next view, however, a processor must not only receive a stage-1 notarisation but also collect enough fragments to reconstruct the payload.

\vspace{0.2cm}
\noindent \textbf{Why $n-2f$ fragments must suffice}. Consider a correct processor $p_i$ that receives a stage-1 notarisation for $b$. Since $p_i$ disseminates the notarisation upon receipt, all correct processors will eventually receive it too. But can all correct processors also reconstruct the payload? The notarisation contains $n-f$ votes, of which at least $n-2f$ are from correct processors. These correct voters each disseminate their fragment, so every correct processor is guaranteed to receive at least $n-2f$ distinct fragments. With an $(n,n-2f)$-erasure code, this suffices for reconstruction, ensuring that all correct processors can reconstruct the payload and proceed to the next view. (Note that $p_i$ itself may have received more than $n-2f$ fragments. For example, if the leader is correct, $p_i$ will eventually receive fragments from all  correct processors. The point is that $n-2f$ is the number we can guarantee for \emph{every} correct processor.)

\vspace{0.2cm}
\noindent \textbf{The data expansion rate is 3}. The data expansion rate is $n/(n-2f)$, which equals $(3f+1)/(f+1)$ when $n=3f+1$, approaching 3 as $f$ grows.

\vspace{0.2cm}
\noindent \textbf{Stage-2 voting and finalisation}. Once a processor has received a stage-1 notarisation and reconstructed the payload, it proceeds to the next view and, if its timer has not fired, sends a stage-2 vote, exactly as in Simplex. The safety and liveness arguments carry over from Section~\ref{simplex} without essential modification: the key properties --- that each correct processor sends at most one stage-1 vote and either a stage-2 vote or a nullification per view --- are unchanged.

\vspace{0.2cm}
\noindent \textbf{Threshold certificates}. In the description above, a stage-1 notarisation is a set of $n-f$ individual votes, each of which must be disseminated. In practice, one can reduce the communication overhead by using threshold signatures: each processor's vote includes a signature share, and $n-f$ shares can be combined into a single compact \emph{threshold certificate} that any processor can verify. 

\subsection{Achieving a data expansion rate of 1.5}  \label{1.5} 

The data expansion rate of 3 arose because a stage-1 notarisation only guarantees that $n-2f$ fragments reach every correct processor, forcing the use of an $(n,n-2f)$-erasure code. To reduce the data expansion rate, we need to increase the number of fragments required for reconstruction. In this section, we describe how to achieve a data expansion rate of $n/(n-f)$, which equals $(3f+1)/(2f+1)$ when $n=3f+1$, approaching $1.5$ as $f$ grows. 

\vspace{0.2cm}
\noindent \textbf{Using an $(n,n-f)$-erasure code}. Blocks are formed using erasure coding as before, but now with an $(n,n-f)$-erasure code: $n-f$ fragments are required for reconstruction. A processor $p_i$ proceeds to view $v+1$ upon receiving both a stage-1 notarisation for a view $v$ block $b$ and fragments of $b$ from $n-f$ distinct processors. Under standard operation, i.e., correct leaders during synchrony, these $n-f$ fragments arrive naturally as part of stage-1 voting, and the protocol behaves just as in Section~\ref{ber}.

\vspace{0.2cm}
\noindent \textbf{The problem: ensuring data availability}. The difficulty arises when the leader is faulty. In such cases, a correct processor $p_i$ might collect $n-f$ fragments and form a stage-1 notarisation, but some other correct processor $p_j$ might not receive enough fragments to reconstruct the payload. Since we now require $n-f$ fragments rather than $n-2f$, the guarantee from Section~\ref{ber} (that the correct voters alone provide enough fragments) no longer suffices. We therefore need an additional mechanism to ensure that all correct processors eventually receive the fragments they need.

\vspace{0.2cm}
\noindent \textbf{Fragment echoing}. One solution is as follows. Upon proceeding to view $v+1$, processor $p_i$ continues with the instructions for the new view but also, in the background (without slowing the critical path), sets a timer.  When the timer expires, $p_i$ checks which processors have not yet sent their fragment of $b$, and sends each such processor $p_j$ their fragment (i.e., $p_j$'s fragment, which $p_i$ can reconstruct from the decoded payload)  along with the stage-1 notarisation for $b$. Additionally, any processor $p_j$ that receives its own fragment of $b$ together with a stage-1 notarisation (from any processor, not just the leader) is required to \emph{echo} that fragment by disseminating it to all processors, even if $p_j$ has since moved to a later view or has previously voted for a different block. The stage-1 notarisation requirement here is important: it limits the amount of echoing that Byzantine processors can provoke, since extra echoing is triggered only for a block that has actually been notarised.

\vspace{0.2cm}
\noindent \textbf{Two types of extra communication}. Under standard operation, no extra fragment echoing occurs. When it does occur, there are two types: (a)~$p_i$ may have to send $p_j$ their fragment (if $p_i$ has not received that fragment from $p_j$); (b)~$p_j$ may subsequently have to echo their fragment to all processors upon receiving it together with the stage-1 notarisation.


\subsection{Pushing the data expansion rate lower than 1.5}  \label{lower} 

While a data expansion rate of $1.5$ may be necessary when $f$ processors are actually faulty, when fewer processors are faulty it should be possible for leaders to use lower expansion rates. The idea is to allow leaders to \emph{attempt} lower expansion rates and revert to a safe expansion rate if reconstruction fails. 

\vspace{0.2cm}
\noindent \textbf{The difficulty}. When the leader uses an $(n,k)$-erasure code with $k > n-f$, the $n-f$ fragments echoed by correct voters may not suffice for reconstruction. If \emph{no} correct processor can reconstruct the payload, this is unproblematic: all correct processors eventually time out and nullify the view. The difficulty arises when \emph{some} correct processors reconstruct the payload (perhaps with the aid of fragments from Byzantine processors) while others cannot. In this case, unlike in Section~\ref{1.5}, it does not suffice to have correct processors echo their fragments: even if all correct processors receive and echo their fragments, the $n-f$ fragments from correct processors may not be enough for reconstruction when $k > n-f$.

 \vspace{0.2cm}
\noindent \textbf{Carnot 2: an approach for $n\geq 3f+1$}. Perhaps the most natural solution, which we present as Carnot~2, is as follows. When a processor decodes the payload and moves to the next view, but does not receive timely  assurance that all correct processors can do likewise (the precise conditions triggering this fallback are specified in Section~\ref{2spec}), it reverts to encoding the payload using an $(n,n-f)$-erasure code\footnote{We actually use an $(n,n-f-1)$-erasure code, so that the leader does not need to echo their own fragment.}  and sends out the relevant fragments. Upon receiving their fragment together with the corresponding stage-1 certificate, other processors echo their fragment to all. In this way, if any correct processor leaves the view upon decoding the block, it can be sure all others will eventually recover it. While this approach works for $n\geq 3f+1$, it has the drawback that Byzantine action can trigger  extra fragment echoing.

\vspace{0.2cm}
\noindent \textbf{Carnot 1: avoiding extra echoing for $n\geq 4f+1$}. Before presenting Carnot 2, we present a simple approach that assumes $n\geq 4f+1$\footnote{Note that the approach of Section~\ref{1.5} already achieves a data expansion rate of $n/(n-f) = (4f+1)/(3f+1)$ when $n=4f+1$, which approaches $4/3 \approx 1.33$ as $f$ grows. The techniques in this section are concerned with pushing the rate below this value.} but avoids \emph{any} extra fragment echoing: processors echo their certified fragment once upon receiving it from the leader and voting, but no further echoing is ever required. The key is to note that  if $f+1$ stage-2 votes for $b$ are collected, at least one must be from a correct processor, and a correct processor only sends a stage-2 vote after successfully reconstructing the payload. The $f+1$ votes can therefore be combined into a compact threshold certificate --- which we call a stage-2 \emph{M-certificate} (where `M' stands for `mini') --- that serves as a certificate of data availability. Since any block receiving an M-certificate must also have received a stage-1 notarisation (because each correct stage-2 voter must first have seen one), an M-certificate on its own suffices for view progression: a processor that receives an M-certificate for a view $v$ block can proceed to view $v+1$ without needing to collect $n-f$ fragments directly. This eliminates the need for further fragment echoing upon receipt of an M-certificate.
We use stage-2 notarisations of size $n-f$ and nullifications of size $2f+1$. An  \emph{N-certificate} is a threshold certificate formed from a nullification. Two properties follow.

\vspace{0.2cm}
\noindent \textbf{(a) A stage-2 notarisation and an N-certificate for the same view cannot coexist}. This follows from the standard quorum intersection argument: $(n-f) + (2f+1) - n = f+1$, so any stage-2 notarisation and any nullification for the same view must share at least one correct processor. Since no correct processor sends both a stage-2 vote and a nullify message for the same view, such a pair cannot form.

\vspace{0.2cm}
\noindent \textbf{(b) Each view produces an M-certificate or an N-certificate (or both)}. Recall that an M-certificate requires only $f+1$ stage-2 votes. If more than $f$ correct processors send stage-2 votes for view $v$, then, since all stage-2 votes by correct processors are for the same block, at least $f+1$ stage-2 votes for the same block are collected, producing an M-certificate. If at most $f$ correct processors send stage-2 votes, then at least $n-2f \geq 2f+1$ correct processors send nullify$(v)$ messages, producing an N-certificate.

\vspace{0.2cm}
\noindent \textbf{Consequences for view progression}. Property~(a) guarantees Consistency will still be satisfied.  Property~(b) means we can require either an M-certificate  or an N-certificate for view progression.  Since an M-certificate guarantees that at least one correct processor has decoded the payload, data availability is ensured without any additional fragment echoing. In Section \ref{formal}, we will implement this using the following terminology: each processor maintains a local variable $\mathtt{blocks}$, which  stores all blocks for which it has received an M-certificate. A  processor advances to view $v+1$ only  upon adding a view $v$ block to its local value $\mathtt{blocks}$,  or else upon receiving an N-certificate for view $v$.


\vspace{0.2cm}
\noindent \textbf{Carnot 1 vs Carnot 2}. To summarise, Carnot~2 works under the weaker assumption $n\geq 3f+1$ and solves SMR, but requires extra fragment echoing when Byzantine processors interfere. Carnot~1 assumes $n\geq 4f+1$ and solves Extractable SMR, but is simpler and requires no extra fragment echoing beyond the initial echo upon voting.

\section{Carnot 1: The formal specification} \label{formal} 
 We say `disseminate' to mean `send to all processors'. When a correct processor is instructed to send a message to itself, it regards that message as immediately
received.   The pseudocode uses a number of message types, local
variables, predicates, functions and procedures, detailed below. 

\vspace{0.2cm}
\noindent  \textbf{Superviews and the parameter $x$}. Each superview is a sequence of $x$ views.  For $w\in \mathbb{N}_{\geq 1}$, \emph{superview} $w$ is the set of views in $[(w-1)x+1, (w-1)x+x]$. If $v \equiv1 \text{ mod }x$, then $v$ is called an \emph{initial view}. 

\vspace{0.2cm}
\noindent  \textbf{The function} $\mathtt{lead}(v)$.  If view $v$ belongs to superview $w$, we set $\mathtt{lead}(v):= p_{j+1}$, where $j \equiv w \text{ mod }n$. So, all views in a superview have the same leader, while leaders for superviews rotate.

\vspace{0.2cm}
\noindent \textbf{Blocks}. Recall that $\tau$ is the tag function, as defined in Section \ref{setup}. The \emph{genesis block} is the tuple $b_{\text{gen}}:=(0,\tau(\lambda,n),\lambda)$, where $\lambda$ denotes the empty sequence (of length 0).  For some\footnote{As noted previously, we use the interval $[n-f-1,n-1]$, rather than $[n-f,n]$, so that leaders do not have to disseminate their own fragments.} $k\in [n-f-1,n-1]$, a block other than the genesis block, with associated payload $C$,  is a tuple $b=(v,\tau(C,k), h)$ signed by $\mathtt{lead}(v)$, where:
\begin{itemize}
\item $v\in \mathbb{N}_{\geq 1}$ is the view corresponding to $b$;
\item $\tau(C,k)$ is the tag resulting from $C$ and $k$;
\item $h$ is the hash of $b$'s parent block.
\end{itemize}
We write $b.\text{view}$, $b.\text{tag}$ and $b.\text{par}$ to denote the corresponding entries of $b$, and refer to $k$ as the \emph{reconstruction parameter} for $b$.  If $b.\text{view}=v$, we also refer to $b$ as a `view $v$ block' or a `block for view $v$'. If view $v$ is in superview $w$, we may also refer to $b$ as a block for superview $w$.  If $(c_i,\pi_i)$ is a certified fragment of $\tau(C,k)$ at $i$, we also say that the tuple $(b,i,c_i,\pi_i)$ is \emph{certified fragment of} $b$ at $i$: leaders will disseminate messages of this form when they propose a block. 

\vspace{0.2cm}
\noindent \textbf{Votes}. A  \emph{stage-$1$ vote} by $p_i\in \Pi$ for the block $b$ is a message of the form $(\text{vote},b,1,i)$, signed by $p_i$.   A  \emph{stage-$2$ vote} by $p_i\in \Pi$ for the block $b$ is a message of the form $(\text{vote},b,2,i,\rho_i,\rho_i')$, where: 
\begin{itemize} 
\item $\rho_i$ is a (valid) signature share from $p_i$ on the message $(\text{vote},b,2)$, using an $(f+1)$-of-$n$ threshold signature scheme; 
\item $\rho_i'$ is a signature share from $p_i$ on the message $(\text{vote},b,2)$, using an $(n-f)$-of-$n$ threshold signature scheme. 
\end{itemize}

\vspace{0.2cm}
\noindent \textbf{Notarisations and certificates}. These are defined as follows: 
\begin{itemize} 
\item A  \emph{stage-$1$ notarisation} for the block $b$ is a set of at least $n-\frac{3}{2}f$ stage-$1$ votes\footnote{Where $\frac{3}{2}f$ is not an integer, this means a notarisation requires $\lceil n-\frac{3}{2}f \rceil$ votes.} for $b$, each by a different processor in $\Pi$. (By a \emph{stage-$1$ notarisation}, we mean a stage-$1$ notarisation for some block.) 

\item A  \emph{stage-$2$ M-notarisation} \footnote{ `M' stands for `mini'.} for the block $b$  is a set of $f+1$ stage-$2$ votes for $b$, each by a different processor in $\Pi$.
A \emph{stage-$2$ M-certificate} for the block $b$ is the message $(\text{M-Cert},b,\rho)$, where $\rho$ is an $(f+1)$-of-$n$ threshold certificate on the message $(\text{vote},b,2)$.\footnote{We also say `M-certificate' to mean `stage-2 M-certificate'. } 

\item A  \emph{stage-$2$ notarisation} for the block $b$   is a set of $n-f$ stage-$2$ votes for $b$, each by a different processor in $\Pi$.
A \emph{stage-$2$ certificate} for the block $b$ is the message $(\text{Cert},b,\rho)$, where $\rho$ is an $(n-f)$-of-$n$ threshold certificate on the message $(\text{vote},b,2)$. 

\end{itemize} 

\vspace{0.2cm}
\noindent \textbf{Nullifications and N-certificates}. For $v\in \mathbb{N}_{\geq 1}$, a nullify$(v)$ message by $p_i$ is of the form $(\text{nullify},v,i,\rho_i)$, where $\rho_i$ is a (valid) signature share from $p_i$ on the message $(\text{nullify},v)$, using a $(2f+1)$-of-$n$ threshold signature scheme. A \emph{nullification} for view $v$ is a set of $2f+1$ nullify$(v)$ messages, each by a different processor in $\Pi$. (By a \emph{nullification}, we mean a nullification for some view.) An \emph{N-certificate} for view $v$ is a message $(\text{N-cert}, v,\rho)$, where $\rho$ is a $(2f+1)$-of-$n$ threshold certificate on the message $(\text{nullify,}v)$.

\vspace{0.2cm}
\noindent \textbf{The local variable} $\mathtt{v}$. Initially set to 1, this variable specifies the present view of a processor.

\vspace{0.2cm}
\noindent \textbf{The local variable} $\mathtt{b}$. Initially set to $b_{\text{gen}}$, this variable is used by leaders to choose a parent block. 

\vspace{0.2cm}
\noindent \textbf{The local variable} $\mathtt{S}$. This variable is maintained locally by each processor $p_i$ and stores all messages received. It is considered to be automatically updated, i.e., we do not give explicit instructions in the pseudocode updating $\mathtt{S}$.  If $p_i$ receives a stage-$2$ (M-)notarisation or a nullification, then it automatically forms the associated certificate, and adds it to $\mathtt{S}$. 

\vspace{0.2cm}
\noindent \textbf{The local timer} $\mathtt{T}$. Each processor $p_i$ maintains a local timer $\mathtt{T}$, which is initially set to 0 and increments in real-time. (Processors will be explicitly instructed to reset their timer to 0 upon entering a new view.)

\vspace{0.2cm}
\noindent \textbf{New certificates}. An N-certificate $Q$ for view $v$ is regarded as \emph{new} at timeslot $t$ if $Q\in \mathtt{S}$ and $ \mathtt{S}$ did not contain an N-certificate for view $v$ at any timeslot $t'<t$. Similarly, a stage-$2$ certificate/M-certificate $Q$ for $b$ is new at timeslot $t$ if $Q\in \mathtt{S}$ and $ \mathtt{S}$ did not contain a stage-$2$ certificate/M-certificate for $b$ at any timeslot $t'<t$.

\vspace{0.2cm}
\noindent \textbf{The local variable} $\mathtt{blocks}$. Initially set to $\{ b_{\text{gen}} \}$, this local variable is automatically updated by $p_i$, without explicit instructions in the pseudocode. At any timeslot, $\mathtt{blocks}$ contains $b_{\text{gen}}$ and all blocks $b$ such that both conditions below are satisfied: 
\begin{itemize}
\item[(i)] $\mathtt{S}$ contains a stage-2  M-certificate for $b$;
\item[(ii)] There exists $b'\in \mathtt{blocks}$ with $H(b')=b.\text{par}$, i.e., the parent of $b$ is in $\mathtt{blocks}$. 
\end{itemize} 

\vspace{0.2cm}
\noindent \textbf{The local variable} $\mathtt{blocks}^*$. Initially set to $\{ b_{\text{gen}} \}$, this local variable is automatically updated by $p_i$, without explicit instructions in the pseudocode. It is similar to $\mathtt{blocks}$, but is used to determine when $p_i$ can disseminate stage-2 votes. At any timeslot, $\mathtt{blocks}^*$ contains $b_{\text{gen}}$ and all blocks $b$ such that, for $k$ with $b.\text{tag}=(\beta,k,r)$,  the conditions below are all satisfied: 
\begin{itemize}
\item[(i)] Processor $p_i$ has received a stage-1 notarisation for $b$;
\item[(ii)] There exists $I\subset [n]$ with $|I|=k$ such that, for each $j\in I$, $p_i$ has received $(b,j,c_j,\pi_j)$ from $p_j$, which is a certified fragment of $b$ at $j$.  On input $(b.\text{tag}, \{ (c_j,\pi_j) \}_{j\in I})$, Decode does not output $\bot$;
\item[(iii)] There exists $b'\in \mathtt{blocks}$ with $H(b')=b.\text{par}$, i.e., the parent of $b$ is in $\mathtt{blocks}$. 
\end{itemize}

\vspace{0.2cm}
\noindent \textbf{The local variables} $\mathtt{nullified}(v)$, $1\mathtt{voted}(v)$, and  $2\mathtt{voted}(v)$. These are used by $p_i$ to record whether it has yet sent a nullify$(v)$ message and whether  it has sent stage-1 or stage-2 votes for a view $v$ block. All three values are initially set to false for all $v$. 

\vspace{0.2cm}
\noindent \textbf{The function $g$}.  This function is used by the leader of a view to determine the value of $k$ for erasure coding. We allow flexibility with respect to the precise definition of $g$ but, for the sake of concreteness, we suppose (for now) that the output is always in $[n-f-1,n-1]$ and is a function of the received messages $\mathtt{S}$ and the view  $v$. When proving liveness in Appendices \ref{anal1} and \ref{anal3}, we also suppose $g(\mathtt{S},v)=n-f-1$ when $v$ is an initial view (while the output may be greater for non-initial views).\footnote{We make this assumption to simplify  the proof of liveness. In practice, one may wish to set higher values of $k$ for initial views in some cases.}

\vspace{0.2cm}
\noindent \textbf{The procedure} ProposeBlock. This procedure is executed by a  leader $p_i$ to produce and send out a new block. To execute the procedure, $p_i$ proceeds as follows:
\begin{itemize}
\item Let $\mathtt{v}$ be as locally defined for $p_i$ and suppose view $\mathtt{v}$ is in superview $w$. 
\item If $\mathtt{v}$ is initial and $p_i$ has not previously proposed any blocks for superview $w$, set $v:=\mathtt{v}$ and $b^\ast:=\mathtt{b}$. The block $b^\ast$ will be the parent of the new proposed block.  Otherwise, let $v-1$ be the greatest view in superview $w$ such that $p_i$ has already proposed a view $v-1$ block $b'$, and set $b^\ast:=b'$. 
\item  Form a payload $C$, containing all transactions  received  but not known to be included in the payloads of ancestors of $b^\ast$.\footnote{In the standard model of partial synchrony, where a block fragment of any size arrives within time $\Delta$ of sending after GST, there is no need to limit the size of blocks. In realistic contexts where processors have limited bandwidth (as formalised by the Pipes model \cite{lewis2025pipes}), it may be necessary to limit the size of blocks to ensure liveness with fixed time-outs. This is discussed further in Section \ref{anal2}.}
\item Set $k:=g(\mathtt{S},v)$ and calculate Encode$(C,k):=(\tau(C,k), \{ (c_j,\pi_j ) \}_{j\in [n]} )$. 
\item Set $b:=(v,\tau(C,k),H(b^\ast))$ signed by $p_i$ and, for each $j\in [n] \setminus \{ i \}$, send $(b,j,c_j,\pi_j)$ to $p_j$, i.e., send the certified fragment of $b$ at  $j$ to $p_j$ (we say `$p_i$ proposes $b$'). 
\end{itemize}

\vspace{0.2cm}
\noindent \textbf{The local predicate $\mathtt{ProposeReady}$}.  This predicate is used to determine whether a leader $p_i$ is ready to propose a new block. Let $\mathtt{v}$ be as locally defined for $p_i$ and suppose view $\mathtt{v}$ is in superview $w$. There are two cases: 

\vspace{0.1cm} 
\noindent \emph{Case 1}: View  $\mathtt{v}$ is initial and $p_i$ has not previously proposed any blocks for superview $w$. In this case, $\mathtt{ProposeReady}=\text{true}$.  This means that a correct processors will propose a new block immediately upon entering any initial view for which they are leader. 

\vspace{0.1cm} 
\noindent \emph{Case 2}: Otherwise. In this case,  the idea is that (in a context where the time taken to send data depends on the amount of data sent) $\mathtt{ProposeReady}$ should  be true if and only if $p_i$ has `finished' sending the previous block (although it may not yet have been received by other processors) and has not already proposed blocks for all views in the superview.\footnote{In realistic scenarios, one may also wish to consider further conditions, such as having received at least a certain number of transactions to include in a new block.} These issues make sense in the context of the Pipes model \cite{lewis2025pipes}, where processors have limited bandwidth, and will be discussed further in Section \ref{anal2}. For now, working only in the standard model of partial synchrony, we allow flexibility as to how  $\mathtt{ProposeReady}$ should be defined in Case 2.   Formally, we  assume only that, if $p_i=\mathtt{lead}(v)$ is correct and if correct $p_j$ enters non-initial view $v+1$ at $t$ upon adding a view $v$ block to its local value $\mathtt{blocks}$, then  there must exist $t'<t$ such that the following holds: $\mathtt{ProposeReady}=\text{true}$ as locally defined for $p_i$ at $t'$ and  $p_i$ proposes a block for view $v+1$ at $t'$.

 \vspace{0.2cm}
\noindent \textbf{When $p_i$ has received a votable fragment for view $v$}. If $b=(v,\sigma,h)$ is a view $v$ block signed by $\mathtt{lead}(v)$ and $(b,i,c_i,\pi_i)$ is a certified fragment  of $b$ at $i$, then $p_i$ regards this certified fragment as \emph{votable for view $v$} if: 
\begin{enumerate}
\item[(i)] there exists $b'\in \mathtt{blocks}$ with $H(b')=h$, $b'.\text{view}<v$, and; 
\item[(ii)] $\mathtt{S}$ contains an N-certificate for each view in the open interval $(b'.\text{view},v)$.
\end{enumerate}

\vspace{0.2cm}
\noindent \textbf{The local predicate $\mathtt{TimeoutReady}$}.  This predicate is used to determine when $p_i$ should send a nullify$(\mathtt{v})$ message. It is set to true if any of the following conditions apply, and is otherwise false: 
\begin{enumerate} 
\item[(a)] $\mathtt{v}$ is not initial, $\mathtt{T}=\Delta$, and  $1\mathtt{voted}(\mathtt{v})=2\mathtt{voted}(\mathtt{v})=\text{false}$, or; 
\item[(b)] $\mathtt{v}$ is not initial, $\mathtt{T}=2\Delta$, and  $2\mathtt{voted}(\mathtt{v})=\text{false}$, or; 
\item[(c)] $\mathtt{v}$ is initial, $\mathtt{T}=2\Delta$, and  $1\mathtt{voted}(\mathtt{v})=2\mathtt{voted}(\mathtt{v})=\text{false}$, or;                 
\item[(d)] $\mathtt{v}$ is initial, $\mathtt{T}=3\Delta$, and  $2\mathtt{voted}(\mathtt{v})=\text{false}$:
\end{enumerate} 
\noindent To understand the time-outs above, note that if $v$ is non-initial, then, during synchrony, a correct leader for view $v$ will propose a view $v$ block before $p_i$ enters the view.  If $v$ is initial, then a correct leader for view $v$ will propose a view $v$ block within $\Delta$ time-slots of $p_i$ entering the view. Then $p_i$ must wait at most a further  $\Delta$ time-slots before sending their stage-1 vote, and then at most a further $\Delta$ time-slots before sending a stage-2 vote for the proposal. 

\vspace{0.2cm}
\noindent \textbf{The function $\mathcal{F}$ for Extractable SMR}. Given any set of messages $M$, let $b_1,\dots,b_m$ be the longest sequence of blocks such that, for each $i<m$, $b_{i}$ is the parent of $b_{i+1}$, and, for each $i\leq m$, all of the following conditions are satisfied: 
\begin{itemize} 
\item There exists $j\geq  i$ such that $M$ contains a stage-2 certificate for $b_j$. 
\item Let $b_i.\text{tag}=(\beta,k,r)$. There exists $I\subset [n]$ with $|I|=k$ such that, for each $j\in I$, $M$ contains $(b,j,c_j,\pi_j)$, which is a certified fragment of $b$ at $j$.  On input $(b.\text{tag}, \{ (c_j,\pi_j) \}_{j\in I})$, Decode does not output $\bot$;
\end{itemize} 
Then we define $\mathcal{F}(M)$ to be the sequence of transactions formed by concatenating the payloads of $b_1,\dots,b_m$. If there does not exist a unique longest sequence $b_1,\dots,b_m$ as specified above for $M$, then we set  $\mathcal{F}(M)$ to be the empty sequence.

\vspace{0.2cm} For ease of reference, local variables are displayed in the table below. The pseudocode appears in Algorithm 1. Appendix \ref{anal1} verifies Consistency and Liveness in the standard model of partial synchrony, and also discusses communication complexity.

\begin{table}[h!]
  \begin{center}
    \label{tab:table8}
    \begin{tabular}{l|l} 
      \textbf{Variable} & \textbf{Description} \\
      \hline
      $\mathtt{v}$ & Initially 1, specifies the present view \\
       $\mathtt{b}$ & Initially $b_{\text{gen}}$, used to specify parents \\
      $\mathtt{T}$ & Initially 0, a local timer reset upon entering each view \\
$\mathtt{nullified}(v)$ & Initially false, specifies whether already sent nullify$(\mathtt{v})$ message \\
$1\mathtt{voted}(v)$ & Initially  $\text{false}$, records whether stage-1 vote already disseminated \\
$2\mathtt{voted}(v)$ & Initially $\text{false}$, records whether stage-2 vote already disseminated \\

 $\mathtt{blocks}$ & Initially contains only $b_{\text{gen}}$, records all blocks with stage-2 M-certificates \\
 &  Automatically updated \\
  $\mathtt{blocks}^*$ & Initially contains only $b_{\text{gen}}$, records blocks with stage-1 notarisations \\
 &  and recovered payloads. Automatically updated \\
 $\mathtt{S}$ & Records all received messages, automatically updated \\
& Initially contains only $b_{\text{gen}}$  \\

    \end{tabular}
        \caption{Local variables for Carnot 1}

  \end{center}
\end{table}

\setcounter{algorithm}{0}

 \begin{algorithm}
\caption{: Carnot 1, the instructions for $p_i$.}
 \label{alg1}
\begin{algorithmic}[1]


\State  At every timeslot $t$:  

\State 

   \State  \hspace{0.1cm} Disseminate new N-certificates;  \label{Ndis}  \Comment `new' as defined in Section \ref{formal}

      \State  \hspace{0.1cm} Disseminate new stage-2 certificates and new stage-2 M-certificates;  \label{2dis} 

\State 

 \State   \hspace{0.1cm} If $p_i=\mathtt{lead}(\mathtt{v})$ and $\mathtt{ProposeReady}=$ true:

     \State \hspace{0.3cm} $\text{ProposeBlock}$;  \label{esendblock}   \Comment Propose a new block

     \State

      \State   \hspace{0.1cm} If $p_i$ has received $(b,i,c_i,\pi_i)$ that is votable for view $\mathtt{v}$:  \Comment `votable' as defined in Section \ref{formal}
      \State \hspace{0.3cm}  If $1\mathtt{voted}(\mathtt{v})=\text{false}$;  \label{evotecheck}
      \State \hspace{0.5cm} Set $1\mathtt{voted}(\mathtt{v}):=\text{true}$;  Disseminate  a stage-1 vote for $b$ by $p_i$; \label{vote1}   \Comment Disseminate stage-1 vote
      \State \hspace{0.5cm} If $p_i\neq \mathtt{lead}(\mathtt{v})$, disseminate  $(b,i,c_i,\pi_i)$; \label{fdis}

        \State

         \State   \hspace{0.1cm}  If $\mathtt{blocks}^*$ contains a view $\mathtt{v}$ block $b$, $\mathtt{nullified(\mathtt{v})}=$ false and $2\mathtt{voted}(\mathtt{v})=\text{false}$: \label{13}
          \State  \hspace{0.3cm} Set $2\mathtt{voted}(\mathtt{v}):=\text{true}$; Disseminate a stage-2 vote for $b$ by $p_i$;  \label{vote2} \Comment Disseminate stage-2 vote 

        \State

         \State   \hspace{0.1cm}  If $\mathtt{blocks}$ contains a view $\mathtt{v}$ block $b$: 
         
         \State \hspace{0.3cm}  If $\mathtt{nullified(\mathtt{v})}=$ false and $2\mathtt{voted}(\mathtt{v})=\text{false}$: \label{13b}
          \State  \hspace{0.6cm} Set $2\mathtt{voted}(\mathtt{v}):=\text{true}$; Disseminate a stage-2 vote for $b$ by $p_i$;  \label{vote2b} \Comment Disseminate stage-2 vote 
   
         \State  \hspace{0.3cm}  Set  $\mathtt{b}:=b$, $\mathtt{v}:=\mathtt{v}+1$,  $\mathtt{T}:=0$; \label{enewview2} 
         \Comment Update $\mathtt{b}$ and go to next view

               \State
               
               \State  \hspace{0.1cm} If $\mathtt{nullified}(\mathtt{v})=$ false and $\mathtt{TimeoutReady}=\text{true} $:   
                 
                \State \hspace{0.3cm} For all $v\geq \mathtt{v}$ in the same superview as $\mathtt{v}$: 
                 \State \hspace{0.5cm}  Set $\mathtt{nullified}(v):=$ true and disseminate a $\text{nullify}(v)$ message by $p_i$; \label{time-out} 
                 
                 \State \Comment Send nullify$(\mathtt{v})$ upon time-out

        \State 

        \State   \hspace{0.1cm}  If  $\mathtt{S}$ contains an N-certificate for $\mathtt{v}$:
   
        \State  \hspace{0.3cm}  Set $\mathtt{v}:=\mathtt{v}+1$, $\mathtt{T}:=0$; \label{enewview1} 
         \Comment Go to next view

%

       \end{algorithmic}
\end{algorithm}

\section{Carnot 2: the intuition} \label{2int}

In this section, we expand on the intuition behind Carnot~2, which operates under the 
optimal resilience assumption $n \geq 3f+1$ and solves full SMR: 
every correct processor eventually receives every finalised 
transaction. The key difference from Carnot~1 is a recovery 
mechanism that solves SMR rather than Extractable SMR.  Additionally, Carnot~2 incorporates 
a modification to stage-1 voting that streamlines operation 
within superviews. While Carnot~1 was specified with the standard 
partial synchrony model in mind, for Carnot~2 we take care to give 
a specification that also operates cleanly in the pipes 
model~\cite{lewis2025pipes}, which captures the effect of processor 
bandwidth and message sizes on latency and throughput. This 
motivates several of the design choices below and enables the 
throughput analysis of Section~\ref{anal2}.

\subsection{The local variable 
$\mathtt{blocks}$} \label{c2blocks}

In Carnot~1, a processor $p$ progressed to view $v+1$ upon receiving an M-certificate for a view $v$ block $b$ (or an N-certificate for view $v$). The M-certificate sufficed to prove data availability for $b$, even if $p$ has not actually recovered the block. However, since Carnot~2 must solve SMR, data availability no longer suffices. As in Carnot~1, each processor maintains a local variable 
$\mathtt{blocks}$. A block $b$ for view $v$ with parent $b'$ 
(where $b'.\text{view} = v'$) is now added to $\mathtt{blocks}$ by $p_i$ when 
all of the following conditions are met:
\begin{itemize}
\item[(i)] $p_i$ has received a stage-1 certificate for $b$;
\item[(ii)]  $p_i$ has reconstructed the payload of $b$ from certified 
fragments (or recovery fragments, described below);
\item[(iii)]  the parent $b'$ is in $\mathtt{blocks}$;
\item[(iv)]  $\mathtt{S}$ contains an N-certificate for each 
view in the open interval $(v', v)$.\footnote{The requirement for N-certificates in the definition of 
$\mathtt{blocks}$ may appear surprising: in Carnot~1, it was 
sufficient to require an M-certificate or an N-certificate for view 
progression. This requirement is motivated by a modification to 
stage-1 voting described in Section~\ref{c2streamline}, which 
streamlines superview operation by allowing processors to vote 
before seeing stage-1 certificates for ancestor blocks.}
\end{itemize}
 A block 
$b$ is \emph{finalised} when some descendant $b^*$ of $b$ is in 
$\mathtt{blocks}$ and $b^*$ receives a stage-2 certificate.
Since Carnot~2 solves full SMR, we must ensure that every correct 
processor eventually reconstructs the payload of every finalised 
block. This is the role of the recovery mechanism described next.

\subsection{The recovery mechanism} \label{c2recovery}

When a processor $p_i$ adds a block $b$ to $\mathtt{blocks}$, it 
has reconstructed $b$'s payload. However,  other correct processors may 
not yet have enough fragments to do the same. Since Carnot~2 solves 
full SMR, $p_i$ must help them. The mechanism is as follows.

\vspace{0.2cm}
\noindent \textbf{The timer}. Upon adding $b$ to $\mathtt{blocks}$, 
$p_i$ sets a timer (distinct from the timer for nullification). While waiting for the timer to expire, $p_i$ 
continues with other instructions (proposing and voting in 
other views as normal); the timer does not impact the critical 
path during synchrony and when processors act correctly.

When the timer expires, $p_i$ checks which processors have not yet 
sent $p_i$ a stage-2 vote for $b$. Those that have must already 
have reconstructed $b$'s payload (since a correct processor only 
sends a stage-2 vote after adding $b$ to $\mathtt{blocks}$), so 
they need no help. For each remaining processor $p_j$, $p_i$ sends 
$p_j$ (up to) two \emph{certified recovery fragments}: the recovery 
fragment at position $j$ (so that $p_j$ can echo it to all 
processors) and the recovery fragment at position $i$ (giving $p_j$ 
one more fragment toward reconstruction). However,  $p_i$ does not send any 
fragment that has already been exchanged between $p_i$ and $p_j$. We will expand on the form of recovery fragments below: as already explained in Section \ref{lower}, recovery fragments are formed using an  $(n, n-f-1)$-encoding. 

Upon receiving its own recovery fragment together with the 
corresponding stage-1 certificate, processor $p_j$ \emph{echoes} 
that fragment by sending  it to all processors from whom it has not received a stage-2 vote for $b$.  As recovery 
fragments accumulate from multiple sources (direct sends from 
processors whose timers have fired, plus echoed fragments), $p_j$ 
eventually collects enough to reconstruct $b$'s payload. 


\vspace{0.2cm}
\noindent \textbf{Recovery fragments when $k = n-f-1$}. An important 
special case arises when the reconstruction parameter for $b$ is 
$k = n-f-1$. In this case, the primary $(n, n-f-1)$-encoding is 
already conservative enough for recovery: $n-f-1$ fragments being echoed by correct processors suffices 
for reconstruction. Certified recovery fragments are 
therefore \emph{identical} to certified primary fragments in this case. This 
means that $p_i$ need not send a recovery fragment to any processor 
from whom it has already received a certified fragment, nor to any 
processor to whom it has already sent one. In many cases, this 
results in no extra communication at all.

\vspace{0.2cm}
\noindent \textbf{Recovery fragments when $k > n-f-1$: the 
verification problem}. When $k > n-f-1$, the recovery mechanism 
requires re-encoding the payload using the more conservative 
$(n, n-f-1)$-code. This introduces a subtle issue concerning 
verifiability.

In the primary encoding, each fragment includes a validation path against the Merkle root committed to in the block header, 
which is signed by the leader. This allows any processor to verify 
that a fragment is consistent with the proposed block. Recovery 
fragments, however, belong to a \emph{different} erasure coding of 
the same payload, with a different Merkle tree and a different root. 
A processor receiving a recovery fragment has no way to verify it 
against the block header, because the header only contains the 
Merkle root for the primary $(n,k)$-encoding.

This creates an opportunity for attack. A Byzantine processor could 
claim to have reconstructed the payload, produce recovery fragments 
for an arbitrary message, and disseminate them. Correct processors, 
unable to verify these fragments, would echo them. While this does 
not violate consistency, it allows Byzantine processors to cause 
correct processors to disseminate an arbitrary amount of data, which 
is undesirable.

\vspace{0.2cm}
\noindent \textbf{Dual Merkle roots}. The solution is to have the 
leader commit to the recovery encoding at the time of proposal. 
When proposing a block with reconstruction parameter $k > n-f-1$, 
the leader computes \emph{two} erasure encodings of the payload: the 
primary $(n,k)$-encoding and a recovery $(n,n-f-1)$-encoding. The 
block header includes the Merkle roots of both encodings. Since the 
leader already has the full payload, this requires only additional 
computation (two encodings instead of one, potentially carried out in parallel) and adds only a single 
tag to the block header, with no extra communication.

Recovery fragments can now be verified: a processor receiving a 
recovery fragment checks it against the recovery Merkle root in the 
block header, just as it would check a primary fragment against the 
primary Merkle root. Byzantine processors can no longer produce fake 
recovery fragments that will be accepted by correct processors.

\vspace{0.2cm}
\noindent \textbf{Handling a Byzantine leader}. A Byzantine leader 
could commit to a bogus recovery Merkle root, i.e., one that does not 
correspond to the same payload as the primary encoding. To guard 
against this, a correct processor that reconstructs the payload from 
the primary encoding must verify the recovery Merkle root 
\emph{before} adding the block to $\mathtt{blocks}$. It does so by 
re-encoding the payload using the $(n,n-f-1)$-code, computing the 
resulting Merkle root, and checking that it matches the one in the 
block header. If it does not match, the processor knows the leader 
is Byzantine: it does not add the block to $\mathtt{blocks}$, does 
not send a stage-2 vote, and instead nullifies the view.

\vspace{0.2cm}
\noindent \textbf{Encoding compatibility}. With a systematic 
$(n,k)$-erasure code, the payload is divided evenly among the first 
$k$ fragment positions, with the remaining $n-k$ positions holding 
parity data. Similarly, the recovery $(n,n-f-1)$-code divides the 
payload among the first $n-f-1$ positions. Since $k \geq n-f-1$, 
the recovery encoding distributes the same payload across fewer 
systematic positions, meaning each recovery fragment contains more 
data than the corresponding primary fragment. By arranging the 
encoding so that the recovery fragment at each systematic position 
$i \leq n-f-1$ \emph{extends} the primary fragment at position 
$i$ — that is, the primary fragment is a prefix of the recovery 
fragment — a processor sending recovery fragments to $p_j$ at a 
systematic position need only send the \emph{difference}: the 
additional $|C|/(n-f-1) - |C|/k$ data beyond what $p_j$ already 
holds from the primary fragment (together with the new validation path). For 
positions $i > n-f-1$, which hold only parity data in both 
encodings, the full recovery fragment must be sent. This 
optimisation significantly reduces the amount of extra data 
transmitted during recovery, particularly when $k$ is close to 
$n-f-1$.

\subsection{Streamlining superviews} \label{c2streamline}

As noted previously, our aim is to give a specification of Carnot~2 
that allows us to establish Consistency and Liveness in the standard 
model of partial synchrony, but that also operates cleanly in the 
pipes model~\cite{lewis2025pipes}, which captures the effect of 
processor bandwidth and message sizes on latency and throughput. The 
pipes model brings to light certain considerations for optimising 
latency within each superview, which we now describe.

\vspace{0.2cm} 
\noindent \textbf{A timing issue}. For simplicity, suppose (momentarily) that all blocks within a superview are of 
the same size. As in Carnot~1, the leader begins sending each new 
block as soon as it has finished sending the previous one, even if 
other processors have not yet received their fragments. If the 
leader enters the superview at time $t$, then to a first 
approximation\footnote{A more precise calculation is carried out in 
Section~\ref{anal2}.} it finishes sending a first block $b_1$ at 
time $t + s^*$, a second block $b_2$ at time $t + 2s^*$, and so on, 
where $s^*$ is a parameter determined by the block size and 
bandwidth. To minimise latency, we want blocks to be sent as 
frequently as possible  (we want $s^*$ to be small)  since 
transactions then wait less time for inclusion.

However, if $s^*$ is small compared to $\delta$, and if processors cannot disseminate a stage-1 vote for a block 
before receiving a stage-1 certificate for the parent, a cascade of 
$\delta$ delays arises that progressively increases latency within 
the superview. To see this, note that correct processors cannot 
disseminate stage-1 votes for $b_1$ before time $t + s^* + \delta$ 
at the earliest. Since voting for $b_2$ requires a stage-1 
certificate for $b_1$, correct processors cannot vote for $b_2$ 
before $t + s^* + 2\delta$, nor for $b_3$ before 
$t + s^* + 3\delta$, and so on. These are only lower bounds. When processor bandwidths are limited, actual  finalisation requires additional time, including the time for 
processors other than the leader to disseminate their fragments so 
that the payload can be reconstructed, but the lower bounds 
themselves are already growing. As the superview progresses, the 
earliest possible time at which processors can vote on block $b_v$ 
(at least $t + s^* + v\delta$) falls increasingly behind the time 
the leader finishes sending it ($t + vs^*$), causing latency to 
grow with each successive block.

\vspace{0.2cm} 
\noindent \textbf{The solution}.
The approach we take is to allow processors to disseminate a stage-1 vote 
for a block $b$ \emph{before} adding the parent to 
$\mathtt{blocks}$, while still requiring $b$ and all its ancestors 
to be in $\mathtt{blocks}$ before disseminating a stage-2 vote. 
This breaks the cascade: roughly, $b_v$ can be finalised by 
$t + vs^* + 2\delta$  (plus a constant that accounts for fragment dissemination time in the pipes model; see Section~\ref{anal2}), with 
the overhead remaining constant rather than growing with $v$. 
Consistency is not threatened, because of condition~(iv) for 
addition to $\mathtt{blocks}$ specified in 
Section~\ref{c2blocks}: if $b$ for view $v$ is added to 
$\mathtt{blocks}$ and is finalised, then view $v$ does not receive a 
nullification, which prevents any block incompatible with $b$ from 
being added to $\mathtt{blocks}$ in a subsequent view.

\vspace{0.2cm}
\noindent \textbf{Superview-level progression}. Since processors now 
disseminate stage-1 votes before seeing stage-1 certificates for 
ancestor blocks, the notion of sequential progression through views 
within a superview is no longer meaningful: a processor may vote on 
a view $v+1$ block before it has finished processing the view $v$ 
block. Instead, processors progress sequentially through 
\emph{superviews}. Once a processor enters a given superview, it 
sends votes for any view within that superview as soon as it 
receives the requisite messages, without waiting for earlier views 
to complete. The leader disseminates erasure-coded blocks in 
sequential order, and other processors echo the corresponding 
fragments as they arrive.

\vspace{0.2cm}
\noindent \textbf{Timeouts}. This raises the question of how 
timeouts should be orchestrated. In Carnot~1, where processors 
advance through views one at a time, each view has a simple timer 
that fires after a fixed delay. With superview-level progression, we 
instead use a family of conditions, parameterised by the view number 
$v$ within the superview. Roughly, a processor disseminates a 
nullify$(v)$ message, and also nullify$(v')$ messages for all 
subsequent views $v'$ in the same superview, if any of the 
following occurs:
\begin{itemize}
\item[(a)] it does not receive its fragment of a block for view $v$ 
within a time limit that depends on $v$, the bandwidth, the maximum 
block size, and $\Delta$;
\item[(b)] it does not add a view $v$ block to $\mathtt{blocks}$ 
within a corresponding time limit;
\item[(c)] $v$ is not initial and it does not see a view $v-1$ block finalised within a 
corresponding time limit.
\end{itemize}
The time limits increase with $v$ to account for the fact that later 
views within a superview naturally take longer to complete (since the 
leader sends blocks sequentially).   The 
precise values are specified in Section~\ref{2spec}. Note that when 
a processor nullifies view $v$, it also nullifies all subsequent 
views in the same superview. This is to limit the impact of Byzantine leaders.

\section{Carnot 2: The formal specification} \label{2spec} 
   The pseudocode uses a number of message types, local
variables, predicates, functions and procedures. Those different from Section \ref{formal} are described below. 

\vspace{0.2cm}
\noindent  \textbf{The function} $\mathtt{lead}(w)$.  For superview $w$, the leader is $\mathtt{lead}(w):= p_{j+1}$, where $j \equiv w \text{ mod }n$. 

\vspace{0.2cm}
\noindent \textbf{Blocks}. In Section \ref{formal}, a block other than the genesis block\footnote{The genesis block can remain unchanged or, to give it the same structure as other blocks, can now be of the form $(0,\tau(\lambda,n),\tau(\lambda,n),\lambda)$.},  with associated payload $C$, was a tuple $b=(v,\tau(C,k), h)$ signed by $\mathtt{lead}(v)$ (for some  $k\in [n-f-1,n-1]$). Now blocks are tuples of the form  $b=(v,\tau(C,k), \tau(C,n-f-1), h)$ signed by $\mathtt{lead}(w)$, where view $v$ belongs to superview $w$.  To be correctly formed, it must hold that if  $\tau(C,k)= (\beta, k,r)$ for some $\beta\in \mathbb{N}$ and some hash value $r$, then $\tau(C,n-f-1)=(\beta,n-f-1,r')$ for some hash value $r'$.  As in Section \ref{formal}, we set $b.\text{tag}:=\tau(C,k)$, but now we also set $b.\text{rtag}:= \tau(C,n-f-1)$.  If $(c_i,\pi_i)$ is a certified fragment of $\tau(C,k)$ at $i$, we also say that the tuple $(b,i,c_i,\pi_i)$ is a \emph{certified fragment of} $b$ at $i$.

\vspace{0.2cm}
\noindent \textbf{Stage-$d$ votes}. For $d\in \{ 1,2 \}$, a  \emph{stage-$d$ vote} by $p_i\in \Pi$ for the block $b$ is a message of the form $(\text{vote},b,d,i,\rho_i)$, where $\rho_i$  is a signature share from $p_i$ on the message $(\text{vote},b,d)$, using an $(n-f)$-of-$n$ threshold signature scheme.

\vspace{0.2cm}
\noindent \textbf{Stage-$d$ notarisations and certificates}.  A  \emph{stage-$d$ notarisation} for the block $b$ is a set of at least $n-f$ stage-$d$ votes for $b$, each by a different processor in $\Pi$.  A \emph{stage-$d$ certificate} for the block $b$ is the message $(d\text{Cert},b,\rho)$, where $\rho$ is an $(n-f)$-of-$n$ threshold certificate on the message $(\text{vote},b,d)$. 

\vspace{0.2cm}
\noindent \textbf{Nullifications and N-certificates}. For $v\in \mathbb{N}_{\geq 1}$, a nullify$(v)$ message by $p_i$ is of the form $(\text{nullify},v,i,\rho_i)$, where $\rho_i$ is a signature share from $p_i$ on the message $(\text{nullify},v)$, using an $(n-f)$-of-$n$ threshold signature scheme. A \emph{nullification} for view $v$ is a set of $(n-f)$ nullify$(v)$ messages, each by a different processor in $\Pi$.  An \emph{N-certificate} for view $v$ is a message $(\text{N-cert}, v,\rho)$, where $\rho$ is a $(n-f)$-of-$n$ threshold certificate on the message $(\text{nullify,}v)$.

\vspace{0.2cm}
\noindent \textbf{The local variable} N-$\mathtt{certificates}$. Initially empty, this local variable for $p_i$ is automatically updated by $p_i$ (without explicit instructions in the pseudocode) to contain all  N-certificates received by $p_i$. If $p_i$ receives a nullification, then it automatically forms the associated certificate, and adds it to N-$\mathtt{certificates}$.


\vspace{0.2cm} 
\noindent \textbf{The local value $\mathtt{blocks}$}.  For any set of messages $M$, $\mathtt{blocks}(M)$ is set to be the smallest set of blocks containing 
$b_{\text{gen}}$ and all blocks $b=(v,(\beta,k,r),(\beta,n-f-1,r'), h)$ such that (i)-(iv) below are satisfied: 
\begin{itemize}
\item[(i)] $M$ contains a stage-1 certificate for $b$;
\item[(ii)] $M$ contains certified fragments (or \emph{recovery fragments}, see below) of $b$ sufficient to recover the payload and verify that it decodes correctly, and that encoding the payload with reconstruction parameters $k$ and $n-f-1$ produces tags $(\beta,k,r)$ and $(\beta,n-f-1,r')$ respectively;
\item[(iii)] There exists $b'\in \mathtt{blocks}(M)$ with $H(b')=b.\text{par}$ and $b'.\text{view}<v$, i.e., the parent $b'$ of $b$ is in $\mathtt{blocks}(M)$. Let $v':=b'.\text{view}$;
\item[(iv)]  $M$ contains an N-certificate for each 
view in the open interval $(v', v)$.
\end{itemize} 
We also write  $\mathtt{blocks}$ to denote $\mathtt{blocks}(\mathtt{S})$. 

\vspace{0.2cm}
\noindent \textbf{The local variable} $\mathtt{w}$. Initially set to 1, this variable specifies the present superview of a processor.

\vspace{0.2cm}
\noindent \textbf{The local timers} $\mathtt{T}(b)$. When $p_i$ adds $b$ to $\mathtt{blocks}$, it sets a timer to expire\footnote{While waiting for the timer expire, $p_i$ continues with other instructions. It is important to note that the timer does not impact the critical path under good conditions, and is only used to determine whether and when $p_i$ should send extra fragments of $b$.} in time $s$. Here, $s$ is a timing parameter. When the reconstruction parameter $k$ for $b$ is greater than $n-f-1$, $s$ should be roughly the expected time  (during synchrony and when processors act correctly) until receiving stage-2 votes from all (or most) processors.  When $k=n-f-1$, $s$ can be chosen more conservatively to be the minimum of the latter value and the expected time until $p_i$ receives fragments of $b$ from all (or most) processors. 

\vspace{0.2cm}
\noindent \textbf{Certified recovery fragments}.  If the reconstruction parameter $k$ for $b$ is $n-f-1$, then a \emph{certified recovery fragment} of $b$ at $i$ is identical to a certified fragment of $b$ at $i$. If $k>n-f-1$, then suppose $b=(v,\tau(C,k), \tau(C,n-f-1),h)$ signed by $\mathtt{lead}(w)$ (where $w$ is the superview containing $v$).  Suppose $ \text{Encode}(C,n-f-1):= (\tau(C,n-f-1), \{ (c_i,\pi_i ) \}_{i\in [n]} )$. Then the \emph{certified recovery fragment}\footnote{For proving Consistency and Liveness in the standard model of partial synchrony, it does not matter whether we use the `encoding compatibility' optimisation of Section \ref{c2recovery}. We consider implementations using this optimisation  in Section \ref{exper}.} of $b$ at $i$ is the tuple $(\text{rec}, b,i,c_i,\pi_i)$.  Suppose $\tau(C,n-f-1)= (\beta, n-f-1,r')$ for some $\beta\in \mathbb{N}$ and some hash value $r'$. For the recovery fragment to be correctly formed, it must hold that: 
\begin{itemize} 
\item $c_i$ is of the correct length (given $n$ and $f$) to be a fragment of a message of length $\beta$, and;
\item $\pi_i$ is a validation path from $r'$ to $c_i$ at position $i$. 
\end{itemize}

\vspace{0.2cm}
\noindent \textbf{The procedure} ProposeBlock. Let $\mathtt{w}$ be as locally defined for $p_i$. If $p_i=\mathtt{lead}(\mathtt{w})$, this procedure is executed by $p_i$ to produce and send out a new block. To execute the procedure, $p_i$ proceeds as follows:
\begin{itemize}
\item If $p_i$ has not previously proposed any blocks for superview $\mathtt{w}$, set $v$ to be the first view of superview $\mathtt{w}$  and set $b^\ast:=\mathtt{b}$. The block $b^\ast$ will be the parent of the new proposed block.  Otherwise, let $v-1$ be the greatest view in superview $\mathtt{w}$ such that $p_i$ has already proposed a view $v-1$ block $b'$, and set $b^\ast:=b'$. 
\item  Form a payload $C$, containing all transactions  received  but not included in the payloads of ancestors of $b^\ast$.
\item Set $k:=g(\mathtt{S},v)$ and calculate Encode$(C,k)=(\tau(C,k), \{ (c_j,\pi_j ) \}_{j\in [n]} )$ and  Encode$(C,n-f-1)=(\tau(C,n-f-1), \{ (c'_j,\pi'_j ) \}_{j\in [n]} )$.
\item Set $b:=(v,\tau(C,k),  \tau(C,n-f-1), H(b^\ast))$ signed by $p_i$ and, for each $j\in [n] \setminus \{ i \}$, send $(b,j,c_j,\pi_j)$ to $p_j$, i.e., send the certified fragment of $b$ at  $j$ to $p_j$ (we say `$p_i$ proposes $b$'). 
\end{itemize}

\vspace{0.2cm}
\noindent \textbf{The local predicate $\mathtt{ProposeReady}$}.  This predicate is essentially the same as in Section \ref{formal}, but must now be redefined to incorporate the fact that processors no longer maintain a local variable $\mathtt{v}$. If $\mathtt{w}$ is as locally defined for $p_i$ and $p_i=\mathtt{lead}(\mathtt{w})$, there are two cases: 

\vspace{0.1cm} 
\noindent \emph{Case 1}: $p_i$ has not previously proposed any blocks for superview $\mathtt{w}$. In this case, $\mathtt{ProposeReady}=\text{true}$.

\vspace{0.1cm} 
\noindent \emph{Case 2}: Otherwise.  Then $\mathtt{ProposeReady}$ is true if and only if $p_i$ has `finished' sending the previous block  and has not already proposed blocks for all views in the superview. As in Section \ref{formal},  such considerations make sense in the context of the Pipes model \cite{lewis2025pipes}, where processors have limited bandwidth, and will be discussed further in Section \ref{anal2}. Our formal assumptions on the $\mathtt{ProposeReady}$ predicate when analysing the protocol in the standard model of partial synchrony (with no limit on the size of messages that can be sent in a timeslot) will be made explicit when we define the $\mathtt{TimeoutReady}$ predicates below.

\vspace{0.2cm}
\noindent \textbf{The local predicates $\mathtt{TimeoutReady}(v)$ for $v\in \mathtt{N}_{\geq 1}$}.  Recall that $s$ is a timing parameter, used to determine when $p_i$ should send extra fragments. We also consider a second parameter $s^*$, which should be thought of as depending on bandwidth and maximum block size. If $v$ is the $j^{\text{th}}$ view in its superview, then  $\mathtt{TimeoutReady}(v)$ is set to true if any  of the three following conditions apply, and is otherwise false: 
\begin{enumerate} 
\item[(a)] $\mathtt{T}=3\Delta+s +js^*$, and  $1\mathtt{voted}(v)=2\mathtt{voted}(v)=\text{false}$, or;                 
\item[(b)] $\mathtt{T}=4\Delta+2s+js^*$, and  $2\mathtt{voted}(v)=\text{false}$:
\item[(c)]  $\mathtt{T}=5\Delta+2s+js^*$ and there does not exist any view $v$ block in $\mathtt{blocks}$ for which there also exists a stage-2 certificate in $\mathtt{S}$. 
\end{enumerate} 
We note that, for $j=1$, the threshold in (a) above is $3\Delta+s +s^*$. However, the \emph{difference} between the corresponding thresholds for (general) $j$ and $j+1$ is only $s^*$. Except at throughputs close to the full processor bandwidth, and under realistic network conditions, $s^*$ should also be thought of as a small fraction $\delta$ (we will consider these matters in more detail in Section \ref{anal2}). \emph{So the difference in timeouts for successive views is small}. Similar considerations also apply to the timeout thresholds in (b) and (c). 

To explain the timeout thresholds above, we note that our recovery mechanism will ensure all correct processors enter each superview within time $2\Delta+s$ of each other after GST. 
When proving Liveness in the standard model of partial synchrony, we will formally assume only that if the leader $p_i$ of superview $w$ is correct and enters the superview at $t\geq \text{GST}$, then it proposes the $j^{\text{th}}$ block for the superview by $t+js^*$. This means that if $p_j$ enters at $t'\geq \text{GST}$, the leader $p_i$ enters by $t'+2\Delta +s$. Processor $p_j$ then receives a certified fragment of $p_i$'s proposal for the first view of the superview by $t'+3\Delta+s+s^*$, and so on. 


\vspace{0.2cm}
\noindent \textbf{The function $\mathcal{F}$ for SMR}.  Given any set of messages $M$, let $b_1,\dots,b_m$ be the longest sequence of blocks in $\mathtt{blocks}(M)$ such that, for each $i<m$, $b_{i}$ is the parent of $b_{i+1}$, and, for each $i\leq m$, there exists $j\geq  i$ such that $M$ contains a stage-2 certificate for $b_j$. Then we define $\mathcal{F}(M)$ to be the sequence of transactions formed by concatenating the payloads of $b_1,\dots,b_m$. If there does not exist a unique longest sequence $b_1,\dots,b_m$ as specified above for $M$, then we set  $\mathcal{F}(M)$ to be the empty sequence.

\vspace{0.2cm}
 The pseudocode is shown in Algorithms \ref{alg3} and \ref{alg4}. Appendix \ref{anal3} analyses Carnot~2 in the standard model of partial synchrony.

 \begin{algorithm}
\caption{: Carnot 2, the instructions for $p_i$, MAIN LOOP.}
 \label{alg3}
\begin{algorithmic}[1]


\State  At every timeslot $t$:  

\State 

   \State  \hspace{0.1cm} Disseminate new N-certificates and stage-1 certificates; \label{2Ndis}   \Comment `new' as defined in Section \ref{formal}

\State 

 \State   \hspace{0.1cm} If $p_i=\mathtt{lead}(\mathtt{w})$ and $\mathtt{ProposeReady}=$ true:

     \State \hspace{0.3cm} $\text{ProposeBlock}$;  \label{2sendblock}   \Comment Propose a new block

     \State
       
      \State   \hspace{0.1cm} For each view $v$ in superview $\mathtt{w}$ such that  $1\mathtt{voted}(v)=\text{false}$:  \label{2votecheck}
      \State   \hspace{0.3cm} If $p_i$ has received a certified fragment $(b,i,c_i,\pi_i)$ of a view $v$ block $b$:  
           \State \hspace{0.5cm} Set $1\mathtt{voted}(v):=\text{true}$;  Disseminate  a stage-1 vote for $b$ by $p_i$; \label{2vote1}   \Comment Disseminate stage-1 vote
      \State \hspace{0.5cm} If $p_i\neq \mathtt{lead}(\mathtt{w})$, disseminate  $(b,i,c_i,\pi_i)$; \label{2fdis}

        \State
        
                  \State \hspace{0.1cm} For each new $b\in \mathtt{blocks}$: 
          
          \State \hspace{0.3cm} Trigger the timer $\mathtt{T}(b)$ to expire in time $s$;  \Comment $s$ a parameter

%

         \State  \hspace{0.3cm}   Let $v=b.\text{view}$. If $v >\mathtt{b}.\text{view}$, set  $\mathtt{b}:=b$;
         
           \State  \hspace{0.3cm}  If  $\mathtt{nullified}(v)=2\mathtt{voted}(v)=$ false: 
           
              \State  \hspace{0.5cm}   Disseminate a stage-2 vote for $b$ by $p_i$;  Set $2\mathtt{voted}(v):=$ true;  \label{2vote2} \Comment Disseminate stage-2 vote

               \State
               
               \State  \hspace{0.1cm} If there exists a least view $v$ in superview $\mathtt{w}$ with $\mathtt{TimeoutReady}(v)=\text{true} $:

                \State \hspace{0.3cm} For all $v'\geq v$ in superview $\mathtt{w}$ with $\mathtt{nullified}(v')=2\mathtt{voted}(v')=$ false: 
                 \State \hspace{0.5cm}  Set $\mathtt{nullified}(v'):=$ true and disseminate a $\text{nullify}(v')$ message by $p_i$; \label{2time-out} 
                 
                 \State \Comment Nullify upon time-out

        \State 

        \State   \hspace{0.1cm}  If  N-$\mathtt{certificates} \ \cup\  \mathtt{blocks}$ contains either an N-certificate  or a block for each  \label{newwclause} 
        
         \State   \hspace{0.1cm}  view in superview $\mathtt{w}$: 
   
        \State  \hspace{0.3cm}  Set $\mathtt{w}:=\mathtt{w}+1$, $\mathtt{T}:=0$; \label{2newsuperview} 
         \Comment Go to next superview

       \end{algorithmic}
\end{algorithm}

 \begin{algorithm}
\caption{: Carnot 2, the instructions for $p_i$ to enumerate extra fragments}
 \label{alg4}
\begin{algorithmic}[1]

\State  At every timeslot $t$:

              \State 
              
              \State \hspace{0.1cm} For each $b$ such that $\mathtt{T}(b)$ expires at $t$: \label{f1a}

\State \hspace{0.3cm} Let $w$ be the superview containing $b.\text{view}$;

\State \hspace{0.3cm} For each $j \in [n]$ with $p_j \neq \mathtt{lead}(w)$ and from whom $p_i$ has not received a stage-2 vote for $b$:

\State \hspace{0.5cm} Send $p_j$ the certified recovery fragment of $b$ at $j$, unless already sent to or received from $p_j$;

\State \hspace{0.5cm} If $p_i \neq \mathtt{lead}(w)$, send $p_j$ the certified recovery fragment of $b$ at $i$, unless already sent to or 

\State \hspace{0.5cm}  received from $p_j$; \label{f1b}

\State 

\State  \hspace{0.1cm}  For any block $b$ s.t.\ all of the following apply:  \label{sendowns}
\State  \hspace{0.1cm}  (i) $p_i$ has received $F$ which is a certified recovery fragment of $b$ at $i$;
\State  \hspace{0.1cm}  (ii) $p_i$  has not previously sent $F$ to any processor;
\State  \hspace{0.1cm}  (iii) For the superview $w$ containing  $b.\text{view}$, $p_i\neq \mathtt{lead}(w) $, and; 
\State  \hspace{0.1cm} (iv) $\mathtt{S}$ contains a stage-1 certificate for $b$:  

\State \hspace{0.3cm} For each $j \in [n]$ with $p_j \neq \mathtt{lead}(w)$ and from whom $p_i$ has not received a stage-2 vote for $b$:

\State \hspace{0.3cm}  Send $F$ to $p_j$ if not previously received from $p_j$;      \label{f2b} 

\end{algorithmic}
\end{algorithm}

\section{Analysis in the Pipes Model} \label{anal2}

Standard models in distributed computing treat delivery times as 
independent of message size: a message sent at time $t$ in the 
synchronous setting is guaranteed to arrive by $t+\Delta$, regardless 
of how much data it carries. This makes such models ill-suited to 
reasoning about throughput or real-world latency, since they abstract 
away the bandwidth constraints that govern practical performance. The 
pipes model of Lewis-Pye, Nayak, and Shrestha~\cite{lewis2025pipes} 
addresses this by introducing explicit per-processor bandwidth, 
allowing one to express latency as a function of bandwidth, the 
incoming transaction rate, the message delay, $n$, and 
protocol-specific parameters, and to identify the \emph{latency 
bottleneck}, i.e., the maximum incoming transaction rate the protocol can 
sustain without unbounded latency. In this section, we analyse Carnot~2 in the pipes model. 

\vspace{0.2cm}
\noindent \textbf{Why Carnot~2?} As anticipated in Section~\ref{2int}, 
the specification of Carnot~2 was crafted with the pipes model in 
mind. The way blocks are erasure-coded sequentially within a 
superview, the way fragments are echoed, and the form of the timeout 
conditions of Section~\ref{2spec} are all chosen so that throughput 
can be analysed directly in terms of the bandwidth $S$, the superview 
length $x$, and the erasure coding parameter $k$, without requiring 
further protocol-level adjustments. Carnot~1, by contrast, was 
specified with the standard model of partial synchrony in mind, and 
adapting it for a pipes-model analysis would require non-trivial 
changes to its view-level timing; we leave this to future work.

\vspace{0.2cm}
\noindent \textbf{Structure of the section.} 
Section~\ref{pipes_model} recalls the relevant aspects of the pipes 
model. Section~\ref{pipes_approach} gives a high-level overview of 
how the model applies to Carnot~2, identifying the key quantities the 
analysis must track. Section~\ref{pipes_analysis} carries out the 
analysis itself. Section \ref{pipes_compare}  compares with Dispersed Simplex~\cite{shoup2023sing}, which is the natural comparison point for leader-based protocols using erasure codes. 
Finally, Section \ref{DAG_compare}  compares with DAG-based protocols, taking Sailfish~\cite{shrestha2025sailfish}  as a concrete comparison point.

%
%
\subsection{The Model} \label{pipes_model}

We briefly recall the relevant aspects of the Pipes model~\cite{lewis2025pipes}.

\vspace{0.2cm}
\noindent \textbf{Processors and communication.}
We consider a set of $n$ processors $\Pi=\{p_1, \ldots, p_n\}$, each maintaining a direct channel to every other processor.  Time is divided into discrete timeslots, and we assume a uniform message delay of $\delta$ slots for sending a single information parcel between any two processors.

\vspace{0.2cm}
\noindent \textbf{Bandwidth.}
Each processor has a bandwidth of $S$ bits per timeslot. Formally, each processor maintains a single \emph{upload buffer} and a single \emph{download buffer}. When a processor sends a message at time slot $t$, the corresponding bits (each addressed to a recipient) are added to its upload buffer. At the end of timeslot $t$, $S$ bits (or all bits in the buffer if there are less than $S$) are removed from the upload buffer in FIFO order and appear on the download buffers of their intended recipients at $t + \delta$. At the start of each timeslot,  $S$ bits (or all bits in the download buffer if there are less than $S$) are removed from the download buffer and are received by the processor. So, upload and download rates are both bounded by $S$, and are assumed equal for all processors.

\vspace{0.2cm}
\noindent \textbf{Clients and transaction arrival.}
Each logical processor is modelled as a \emph{consensus processor} paired with a \emph{client processor}, connected by an infinite-bandwidth zero-delay channel. External transactions arrive at client processors and are forwarded to the corresponding consensus processor. In the single-sender (leader-based) setting of Carnot, the next leader's client processor receives transactions at a rate of $D$ bits per timeslot. When analysing DAG-based protocols, such as Sailfish~\cite{shrestha2025sailfish}, we suppose the $D$ transaction bits per timeslot arriving at the network as a whole are divided evenly between the $n$ processors, so that each receives $D/n$ transaction bits per timeslot. 

\vspace{0.2cm}
\noindent \textbf{Latency.}
The \emph{latency} of a transaction is the time from when it arrives at a correct client processor to when it is finalised by all correct processors. A protocol has \emph{bounded latency} at arrival rate $D$ if latency remains finite over an arbitrarily long execution; the \emph{latency bottleneck} is the supremum of arrival rates for which bounded latency holds.

\vspace{0.2cm}
\noindent \textbf{Simplifying assumptions.}
Following~\cite{lewis2025pipes}, we ignore computational costs (signature verification, erasure encoding and decoding), assume all processors have the same bandwidth $S$, and assume uniform  delay $\delta$ between all pairs. These assumptions mean the analysis reflects an idealised lower bound on achievable latency, abstracting away implementation-level overheads. Following~\cite{lewis2025pipes}, we also restrict analysis to the \emph{good case} when the network is synchronous and processors are correct. There are two reasons for this. First, real-world deployments of SMR protocols experience substantial faulty or asynchronous behaviour only rarely; the common case is one in which the network behaves well and most processors follow the protocol. Protocol designers therefore typically optimise for good-case performance while ensuring safety and liveness are maintained under adversarial conditions. Second, analysis in the pipes model is already considerably more involved than classical round-based reasoning; focusing on the good case allows us to isolate the throughput and latency results of greatest practical interest while keeping the analysis tractable.

\subsection{Overview} \label{pipes_approach} 

To minimise latency, it is generally desirable for the leader to 
include all pending transactions\footnote{For fixed timeout values, 
i.e., fixed values of $s$ and $s^*$ in the case of Carnot~2, this 
holds so long as block sizes do not grow to the point where timeouts 
are triggered.} when forming a block proposal: any transaction not 
included must wait for a later block, and this wait can add 
substantially to its latency. Optimal block sizes are therefore a 
function of the time between successive block proposals, the incoming 
transaction rate $D$, the bandwidth $S$, the number of processors $n$, and other parameters, such as vote sizes. For Carnot~2, this 
means that block sizes depend on the position of the corresponding 
view within its superview. 

\vspace{0.2cm}  
The first block of each superview is the largest. Before proposing, 
the new leader must wait to add the final block of the previous 
superview to its local variable $\mathtt{blocks}$, and transactions 
accumulate at the client processor during this wait. For the second block, the 
leader need only wait until it has finished sending the first. Crucially, it can begin this dissemination before other processors have 
finished receiving fragments of the first block. The second block 
therefore need contain only those transactions that arrived while 
the first was being sent, and is correspondingly smaller. The same 
principle applies to the third block, which can be smaller still: 
because the second block was smaller than the first, less time was required to send 
it, so fewer transactions accumulated in the meantime. 

\vspace{0.2cm}  
Block sizes thus decrease towards an equilibrium value within each 
superview. To capture this behaviour, the analysis in 
Section~\ref{pipes_analysis} computes \emph{two} latencies. First, we 
compute transaction latency at equilibrium within a superview. With 
this in hand, we then compute latency for transactions included in 
the first block of a superview, which upper-bounds latency across 
all transactions. The relative importance of these two quantities 
depends on the superview length $x$: when $x$ is large, the 
equilibrium latency is the relevant value for most transactions.

\subsection{Analysis} \label{pipes_analysis} 

To carry out the analysis, we first describe some assumptions regarding messages sizes. 

\vspace{0.2cm} 
\noindent \textbf{Message sizes}. In Section \ref{2spec}, we specified a block (excluding payload) as a tuple $b=(v,\tau(C,k), \tau(C,n-f-1), h)$ signed by the leader of the superview.  We let $\lambda$ be a parameter used to specify the length of hashes, and approximate the length of this tuple as three hash values and one signature. Since commonly used signature schemes  have signatures of length approximately two hash values, we then approximate the size of $b$ as $5\lambda$.  In Section \ref{2spec}, votes were of the form $(\text{vote},b,d,i,\rho_i)$, where $\rho_i$  is a signature share from $p_i$. While this message includes $b$, it suffices to include a hash of $b$, so we approximate votes as being of size $3\lambda$. We suppose stage-1 certificates are of size $2\lambda$. If a block has payload of size $B$, and if erasure coding is carried out with data expansion rate $d$, we suppose the size of a fragment is $(Bd/n)+ (5+\log n)\lambda$: the $\log n$ term corresponds to the length of the validation path.  Since we restrict to the good case, we need not size nullify messages, N-certificates, or recovery fragments, which are not transmitted in correct, synchronous executions.

\vspace{0.2cm} 
\noindent \textbf{Assumptions regarding message priorities}. When the leader $p_i$  sends a fragment to $p_j$, we suppose that $p_j$ begins forwarding that fragment to others from the moment it begins receipt. As in \cite{lewis2025pipes}, we suppose message sending from any processor to all others is `balanced', so that if $p_i$ sends a message to all others, they will receive it at the same time (subject to constraints on their download buffers). When a processor is simultaneously instructed to send multiple messages, we suppose they prioritise stage-2 votes, then stage-1 votes, then stage-1 certificates, and then finally blocks and their corresponding fragments. 

\vspace{0.2cm} 
\noindent \textbf{Calculating latency at equilibrium}. First, we must calculate block sizes. Suppose the leader $p_i$ starts sending certified fragments of $b$ at $t$. While sending these fragments, $p_i$ must also disseminate one stage-2 vote and one stage-1 certificate per view, and can then immediately disseminate a stage-1 vote for $b$. This takes time:\footnote{For the sake of simplicity, we suppose that sending a message to all others requires sending $n$ copies of the message, rather than $n-1$.} 

\[ T:=\frac{Bd}{S} +\frac{(5+\log n)\lambda n}{S}+\frac{3\lambda n}{S}+\frac{2\lambda n}{S}+\frac{3\lambda n}{S}= \frac{Bd+ (13+\log n)n\lambda}{S}.\] 

\noindent During this interval of length  $T$, $DT$ transaction bits arrive at the client processor. So, at equilibrium, $B=DT$. This means: 
\[ \frac{D(Bd +(13+\log n)n\lambda)}{S}=B.\]
\noindent Therefore: 
\[ B=\frac{D(13+\log n)n\lambda}{S-Dd}. \]  
Feeding the value of $B$ back into our expression for $T$ gives: 
\[T= (13+\log n)n\lambda \left( \frac{Dd}{S(S-Dd)}+\frac{1}{S} \right)= \frac{ (13+\log n)n\lambda}{S-Dd}.\]

\noindent Recall that the leader  $p_i$ starts sending fragments at $t$, and that other processors start forwarding them immediately upon receipt. Since sending is balanced,  download buffers are never a bottleneck in this analysis.  This means other processors have received and finished sending their fragments by $t+T+\delta$. At this point, they must disseminate a stage-1 vote, which takes time $3\lambda n/S +\delta$ to be removed from their upload buffers and be received by others. At this time, they will disseminate stage-2 votes, which take a further time  $3\lambda n/S +\delta$ to be received by others. The block is therefore finalised by all correct processors by: 
\[ t+ T+\frac{6n\lambda}{S} +3\delta= t+ \frac{ (13+\log n)n\lambda}{S-Dd}+ \frac{6n\lambda}{S} +3\delta.\] 
To complete the latency calculation, we must include the time (at most) $T$ that each transaction waits to be included in a block, giving total latency: 
\begin{equation} \label{eqlat}  \frac{ (13+\log n)2n\lambda}{S-Dd}+ \frac{6n\lambda}{S} +3\delta.\end{equation} 

\vspace{0.2cm} 
\noindent \textbf{Interpreting the formula}. We note that for an incoming transaction rate well below $S/d$, and for reasonable values of $n$ and $S$, the first two terms above will be small, and the $3\delta$ term will dominate. The first term only becomes significant as $D$ approaches $S/d$ and the denominator tends to 0. This means the latency tends to infinity as $D$ approaches $S/d$: the latter value  is the latency bottleneck.

\vspace{0.2cm}
\noindent \textbf{Block times at equilibrium}. The value $T$ computed above is also the \emph{block time} at equilibrium: the leader begins sending the fragments of each block as soon as it has finished sending those of the previous block (together with the accompanying per-view votes and certificates), so successive proposals are separated by time
\[ T= \frac{(13+\log n)n\lambda}{S-Dd}.\]
Two features of this expression are worth noting. First, it does not contain $\delta$: block times are determined by bandwidth and per-view overhead alone, and may be far smaller than the message delay. Second, the payload contributes only through the term $Dd$ in the denominator. For incoming transaction rates well below $S/d$, the block time therefore approaches the floor $(13+\log n)n\lambda/S$, which is just the time the leader requires to send the per-view votes, certificates, and validation paths. For $\lambda=256$ and $S=10^9$ bits per second, this floor is approximately $0.5$ms when $n=100$, and approximately $2.2$ms when $n=400$. (At such block times, the binding constraints are of a different nature: the equilibrium block size is $B=DT$, so blocks are meaningfully non-empty only if transactions arrive at a rate of at least one per block time.) Block times of this order also put the superview length $x$ in perspective: a leader controlling $x=100$ consecutive views holds office for $xT$, i.e., a fraction of a second, as discussed in Section~\ref{intro}.

\vspace{0.2cm}
\noindent \textbf{Calculating first block latency}. To calculate latency for transactions included in the first block of a superview, suppose the previous leader starts sending the final block of the previous superview at $t_1$. Let $T$ be as above, i.e.,  $T := \frac{(13+\log n)n\lambda}{S-Dd}$, and note that:  
\[ T  = \frac{DTd}{S} + \frac{(13+\log n)n\lambda}{S}, \]
\noindent which rearranges to 
\begin{equation} \label{clev}  \frac{Dd}{S} T = T - \frac{(13+\log n)n\lambda}{S}. \end{equation} 

\noindent From the analysis above, it follows that all correct processors will add the final block of the previous superview to their local value $\mathtt{blocks}$ by: 
\[ t_2:= t_1+T+ \frac{3n\lambda}{S} + 2\delta,\] 
\noindent  which means that at $t_2$ the leader of the next superview will form a block of size: 
\[ B:= D(T+ \frac{3n\lambda}{S} + 2\delta). \] 

\noindent All processors will receive and finish forwarding on their fragments of the new block by: 
 \[ t_3:= t_2+ \frac{Bd}{S}+  \frac{(5+\log n)n\lambda}{S}+\delta.\] 
 \noindent  From Equation (\ref{clev}), it follows that: 
\[ \frac{Bd}{S} = \frac{Dd}{S} T + \frac{Dd}{S}\left( \frac{3n\lambda}{S} + 2\delta \right) 
= T - \frac{(13+\log n)n\lambda}{S} + \frac{Dd}{S}\left( \frac{3n\lambda}{S} + 2\delta \right). \]
Adding $\frac{(5+\log n)n\lambda}{S}$ collapses the $(13+\log n)-(5+\log n)=8$, giving 
\[ \frac{Bd}{S} + \frac{(5+\log n)n\lambda}{S} 
= T - \frac{8n\lambda}{S} + \frac{Dd}{S}\left( \frac{3n\lambda}{S} + 2\delta \right).  \]
So: 
\[ t_3-t_2= \frac{Bd}{S} + \frac{(5+\log n)n\lambda}{S} +\delta 
= \frac{(13+\log n)n\lambda}{S-Dd} - \frac{8n\lambda}{S} + \frac{3Ddn\lambda}{S^2} + \frac{2Dd\delta}{S} +\delta. \]

\noindent For reasonable parameter values $3Ddn\lambda/S^2$ will be small. Dropping this term gives: 

\[ t_3-t_2=  \frac{(13+\log n)n\lambda}{S-Dd} - \frac{8n\lambda}{S} + \frac{2Dd\delta}{S} +\delta. \]
\noindent The block will then be finalised by: 
\[ t_3+ \frac{6n\lambda}{S} + 2\delta,\] 
giving total latency: 
\begin{equation} \frac{2(13 + \log n)n\lambda}{S-Dd} +\frac{n\lambda}{S} + \frac{2Dd\delta}{S}+5\delta. \end{equation}

\subsection{Comparison with DispersedSimplex} \label{pipes_compare} 

The natural single-sender comparison point for Carnot~2 is 
DispersedSimplex~\cite{shoup2023sing}, which also uses erasure coding for 
block dissemination and achieves $n\geq 3f+1$ resilience. The 
comparison, however, involves a few subtleties, since 
DispersedSimplex admits several variants. We discuss each in turn. 

\vspace{0.2cm} 
\noindent \textbf{The pipes-paper analysis.} An analysis of 
DispersedSimplex in the pipes model already appears in 
\cite{lewis2025pipes}. However, that analysis assumes a single leader that remains in place
\emph{throughout the duration of the execution}, allowing for fixed block
sizes, and considers the basic variant with data expansion rate~$3$.

\vspace{0.2cm} 
\noindent \textbf{Achieving data expansion rate 1.5.} For a like-for-like 
comparison with Carnot~2, one should consider the version of 
DispersedSimplex that uses rotating multi-view leaders (`stable leaders' in the terminology of \cite{shoup2023sing}) and that  achieves data
expansion rate $1.5$, since this is the lowest rate
DispersedSimplex can attain. This variant is also discussed in 
\cite{shoup2023sing}, but here further complexities arise. 

\vspace{0.2cm} 
\noindent  \emph{The formally specified version}. In the version that 
\cite{shoup2023sing} formally specifies with pseudocode, the leader must enter
each view before proposing for that view. This produces significantly greater latency than the equilibrium figure for Carnot~2 (Equation (\ref{eqlat})) at the same data expansion rate.   
Carrying out the 
analysis in the pipes model (see Appendix~\ref{pipesDSanal}) gives latency: 
\begin{equation} 
 \left( \frac{(6+\log n)2n\lambda}{S} + 4\delta \right) \cdot \frac{1}{1 - 1.5D/S} + \frac{3n\lambda}{S} + \delta.
 \end{equation} 

\vspace{0.2cm} 
\noindent  \emph{The pipelined variant}. \cite{shoup2023sing} also informally 
discusses the possibility of pipelining: having the leader send blocks 
without first entering each view. The pipelined variant is not 
formally specified, however, and raises subtleties of the 
kind addressed in Section \ref{c2streamline} of this paper. In particular, if processors must wait 
to receive a stage-1 certificate for the parent before voting on a 
block, then the pipelined variant is subject to exactly the timing 
issue discussed in Section\ref{c2streamline}: the propagation of $\delta$ 
delays causes latency to grow without bound within the superview. 
For sufficiently large blocks (i.e., for sufficiently high incoming 
transaction rates), this is not an issue, since the bandwidth-limited 
send time for each block exceeds $\delta$ and the $\delta$ delays 
are absorbed; in this regime the pipelined variant matches\footnote{This holds modulo minor differences caused by different approaches to message formatting.} the 
latency of Carnot~2 \emph{at data expansion rate~$1.5$}. At lower 
throughputs, however, the propagation of $\delta$ delays causes 
latency to grow within the superview, and the issue cannot be 
resolved without a mechanism analogous to that of Carnot~2.

\vspace{0.2cm} 
\noindent \textbf{Summary.} In the regime where the most aggressive 
informal variant of (multi-view leader) DispersedSimplex is well-behaved (sufficiently
high throughput), it matches Carnot~2 \emph{at data expansion 
rate~$1.5$}.  In contrast, Carnot~2 (i)~avoids the timing pathology 
at lower throughputs without requiring a high-throughput assumption, 
and (ii)~can push the data expansion rate below~$1.5$, towards~$1$, 
giving a strictly higher throughput bottleneck $S/d$. The formally 
specified version of DispersedSimplex with data expansion rate~$1.5$ 
has higher latency than Carnot~2 at the same data expansion rate.

\vspace{0.2cm} 
\noindent Figure \ref{Firstpic} illustrates these comparisons with concrete parameter values. In the figure, we 
 set $S=10^9$ bits per second, $\lambda=256$, $n=400$, and $\delta=0.1$. The displayed numbers for DispersedSimplex, are for the formally specified multi-view-leader version of the protocol, with data expansion rate 1.5.

\begin{figure}
    \centering
    \includegraphics[width=0.8\linewidth]{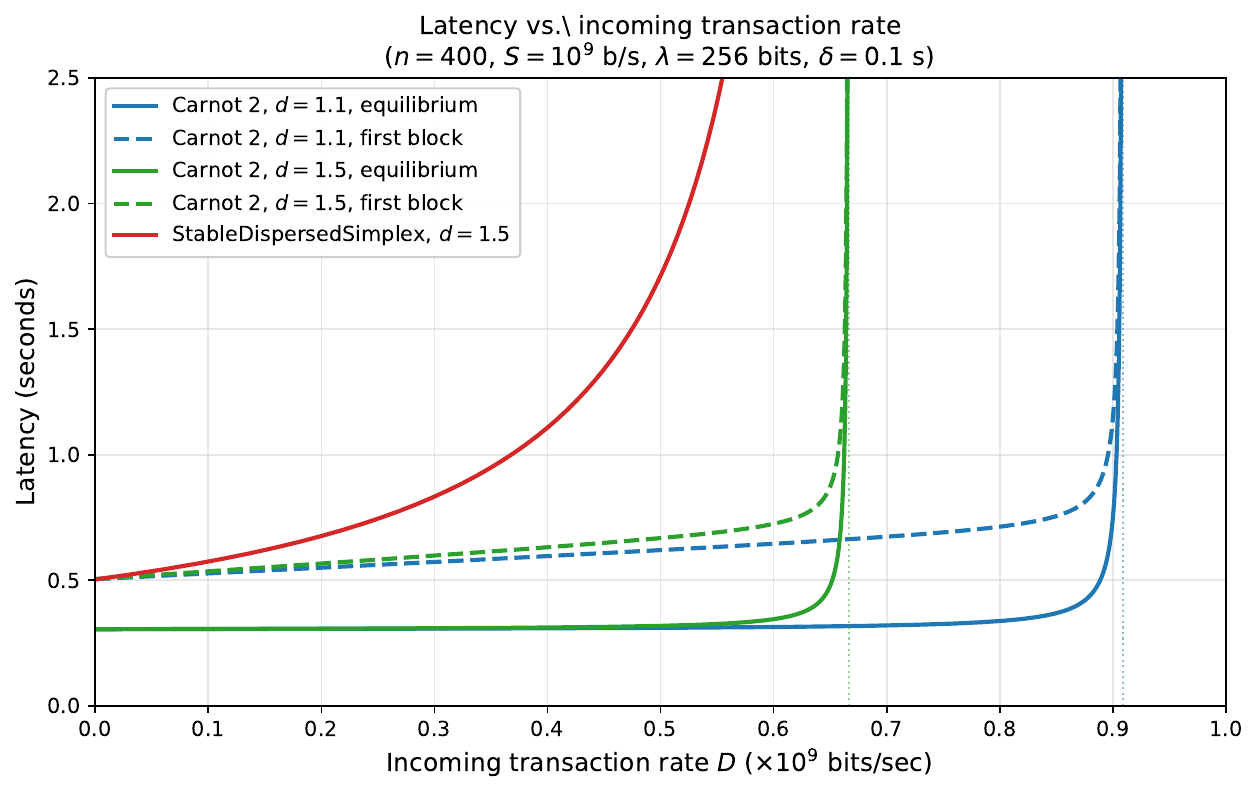}
    \caption{Latency for Carnot~2 and DispersedSimplex.}
    \label{Firstpic}
\end{figure}

\subsection{Comparison with Sailfish} \label{DAG_compare} 

We use Sailfish as a concrete point of comparison for DAG-based 
protocols, since it has competitive latency among such protocols. 
An analysis of Sailfish in the pipes model was already carried out 
in \cite{lewis2025pipes}, and we import their analysis here. As 
noted in \cite{lewis2025pipes}, although the Sailfish paper suggests 
using Reliable Broadcast as the underlying mechanism for block 
propagation, doing so defeats the purpose of using a DAG-based 
protocol: the requirement that processors echo each other's 
proposals reduces the latency bottleneck from $O(S)$ to $O(S/n)$. 
We therefore follow \cite{lewis2025pipes} in supposing that block 
propagation is carried out via a form of Consistent Broadcast: 
processors send their blocks to all others, who then disseminate a 
signed message acknowledging receipt. A quorum of such signatures 
then certifies data availability for the block. In this form, the 
protocol solves Extractable SMR rather than SMR. The latency 
formula derived in \cite{lewis2025pipes} is: 
\begin{equation} \label{LatSailCon2} \left( \frac{9\lambda n^2}{S}+9\delta \right) \left(1+ \frac{8}{9((S/D)-1)} \right). 
\end{equation}
We note that this figure is \emph{not} the time to finalise leader blocks: latency incorporates the time for transactions to be included in a block, the time for that block to be pointed to by a leader block, and then the time to finalise that leader block (in the good case). 

\vspace{0.2cm} 
\noindent Figure \ref{2pic} illustrates  latencies for Carnot~2 and Sailfish, for the same parameter values as Figure \ref{Firstpic}. 

\vspace{0.2cm} 
\noindent  The basic trade-off between Carnot~2 and Sailfish is that, while Carnot~2 has significantly lower latency over a wide range of parameter values and solves SMR rather than just Extractable SMR, Sailfish has a latency bottleneck that is greater by a factor $d$ (the data expansion rate for Carnot~2). 

\begin{figure}
    \centering
    \includegraphics[width=0.8\linewidth]{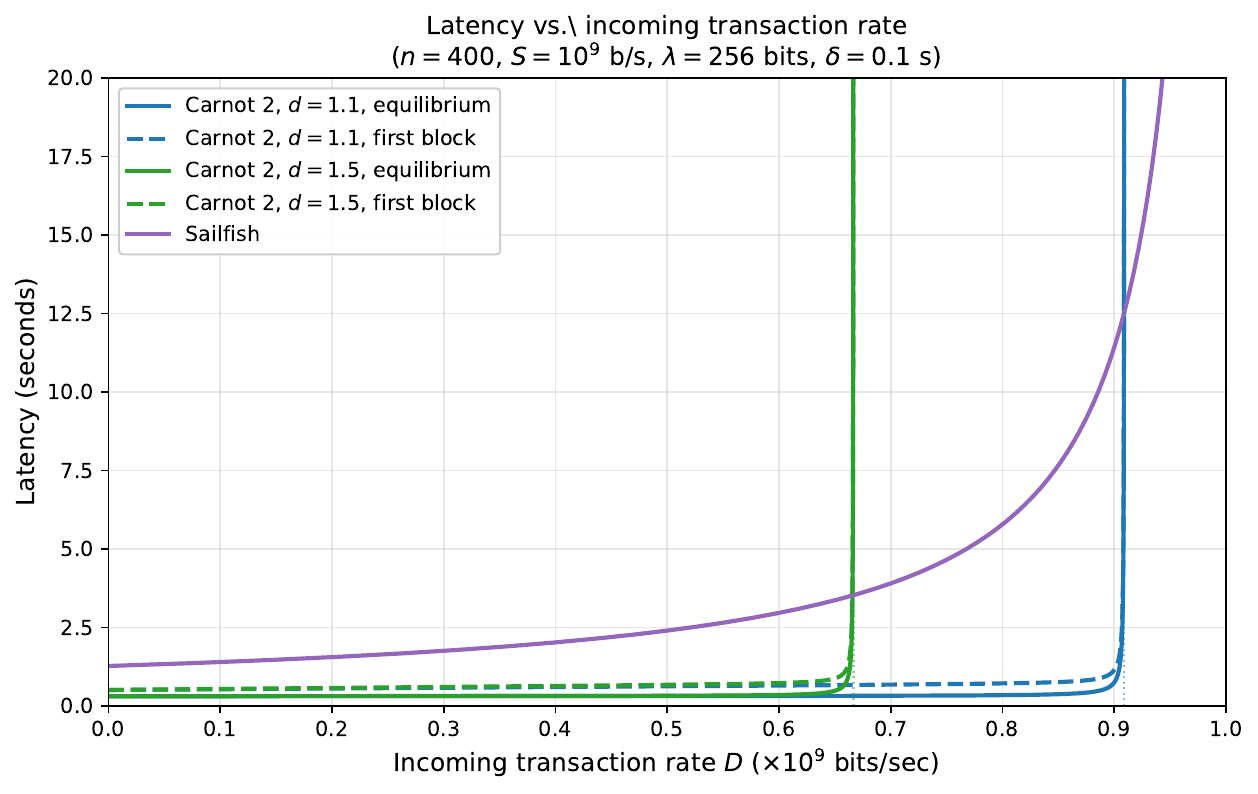}
    \caption{Latency for Carnot~2 and Sailfish: parameters are the same as for Figure \ref{Firstpic}.}
    \label{2pic}
\end{figure}

\section{The data expansion bound for 2-round finality} \label{imposs} 

In this section, we establish bounds on coding efficiency for protocols with 2-round finality. We first explain the connection between SMR protocols with 2-round finality and Byzantine Broadcast. We then define 2-round Byzantine Broadcast formally, state our theorem, and prove it using an indistinguishability argument.

\vspace{0.2cm}
\noindent \textbf{From 2-round finality to Byzantine Broadcast}. An SMR protocol has \emph{2-round finality} if, when the leader is correct, the network is synchronous, and sufficiently many processors are correct, a block proposed by the leader is finalised after two rounds of message exchange.  Protocols with 2-round finality, such as E-Minimmit~\cite{chou2025minimmit}, Hydrangea~\cite{shrestha2025hydrangea}, and Kudzu~\cite{shoup2025kudzu}, assume $n\geq 5f+1$ ($n\geq 5f-1$ is necessary and sufficient~\cite{kuznetsov2021revisiting}).

We formalise the relevant notion of Byzantine Broadcast below, and then explain how any SMR protocol with 2-round finality yields a solution.

\subsection{2-round Byzantine Broadcast} \label{2rBB}

We define 2-round Byzantine Broadcast (2-round BB) in the partially synchronous model with $\Delta = 1$. Let $f^*\leq f$ be a parameter. 

\begin{itemize} 
\item We consider a set $\Pi$ of $n$ processors, of which at most $f$ display Byzantine faults. 
\item One processor is designated the \emph{broadcaster}. All processors are told the identity of the broadcaster. 
\item The broadcaster is given an input in some set $V$. The set $V$ is known to all processors. 
\item The protocol must satisfy the following conditions: 
\begin{itemize} 
\item \textbf{Termination}. All correct processors must eventually output a value in $V$. 

\item \textbf{Agreement}. No two correct processors output different values.

\item \textbf{Validity}. If $\text{GST}=0$, the broadcaster is correct with input $v$, and at most $f^*$ processors are faulty, then all correct processors output $v$ by timeslot 2. 
\end{itemize} 
\end{itemize} 

\vspace{0.2cm}
\noindent \textbf{From SMR with 2-round finality to 2-round BB}. Any SMR protocol with 2-round finality yields a protocol for 2-round BB as follows. Given an input $v$, the broadcaster acts as the first leader and proposes a block containing $v$. Processors then execute the full SMR protocol and output the first finalised payload. Validity follows from the protocol's liveness guarantee: if $\text{GST}=0$, the broadcaster is correct, and at most $f^*$ processors are faulty (for these protocols, $f^*=f$), the block is finalised within two rounds. Agreement follows from Consistency. Termination follows from the protocol's general liveness mechanism---processors eventually finalise some block, even if this requires additional rounds. Our lower bound therefore applies to any SMR protocol with 2-round finality.

\begin{theorem} \label{2rBBT}
Suppose $f\geq 2$, $n\geq 3f+1$ and that $\mathcal{P}$ is a polynomial-time protocol solving 2-round BB. Let $P\subseteq \Pi$ be any set of size $n-2f-f^*$ that does not contain the broadcaster. Then, in any execution of $\mathcal{P}$ in which the broadcaster is correct, the broadcaster's input can be computed in polynomial time from the messages sent by the broadcaster to processors in $P$ at timeslot 0.
\end{theorem}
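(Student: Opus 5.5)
We argue by indistinguishability. The plan is to show that the messages the broadcaster sends to $P$ at timeslot~0 already determine the broadcaster's input, by exhibiting a polynomial-time simulation that recovers it from those messages alone.

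\textbf{The simulation.} Fix an execution with correct broadcaster $p_b$ and input $v$, and let $m_P$ be the collection of messages $p_b$ sends at timeslot~0 to processors in $P$. Partition $\Pi\setminus(P\cup\{p_b\})$ into sets $F_1$ and $F_2$ with $|F_1|=f^*-1$ (or $f^*$, depending on how the broadcaster is counted) and $|F_2|\leq 2f$; the sizes are chosen so that $|P|+|F_1|=n-2f$. The decoder simulates, with $\mathrm{GST}=0$ and $\Delta=1$, an execution $E^*$ with the following features:
\begin{itemize}
\item the broadcaster and $F_1$ are silent (treated as faulty and crashed);
\item processors in $P$ receive exactly $m_P$ from the broadcaster at timeslot~0;
\item all other processors run the protocol honestly from their initial states.
\end{itemize}
Since the protocol is polynomial time, this simulation is polynomial time.

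\textbf{Step 1: show that the simulated processors terminate with a definite value.} I would compare $E^*$ with a genuine execution $E_1$, in which:
\begin{itemize}
\item $\mathrm{GST}=0$;
\item the broadcaster is Byzantine and sends only $m_P$ at timeslot~0, then stays silent;
\item $F_1$ is Byzantine and crashed;
\item at most $f$ processors are faulty in total (here the count $|F_1|+1\leq f$ is needed).
\end{itemize}
By Termination, all correct processors in $E_1$ eventually output some $v'$. Processors in $P\cup F_2$ cannot distinguish $E_1$ from $E^*$, so the decoder can simply run $E^*$ until they output, and then return $v'$. A minor technicality is that Termination in $E_1$ gives no explicit time bound. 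I would handle this by a known-bound argument, or by noting that the decoder runs the simulation until output occurs; polynomiality then requires that termination happen within polynomially many rounds, and I will assume this of the protocol or argue it separately.

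\textbf{Step 2 (main obstacle): show $v'=v$.} Here a chain of executions is used. Let $E_2$ be the execution with an honest broadcaster holding input $v$, $\mathrm{GST}=0$, and at most $f^*$ faulty processors. By Validity, correct processors in $E_2$ output $v$ by timeslot~2. The task is to construct $E_2$ so that some correct processor $q$ is also correct in $E_1$, and so that $q$'s view through timeslot~2 is identical in $E_1$ and $E_2$; then Agreement in $E_1$ forces $v'=v$.

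The tension is the following. In $E_2$ the broadcaster also sends messages to processors outside $P$, and these influence views by timeslot~2. To hide this, processors outside $P$ that received such messages are made Byzantine in $E_1$ and behave as in $E_2$, up to the fault budget of $f$. Conversely, processors that act differently in the two executions are made faulty in $E_2$, up to $f^*$. Meanwhile, the processors $F_2$ (at most $2f$ of them) must be split. One part of $F_2$ is Byzantine in $E_1$ and simulates having received the broadcaster's messages in $E_2$. The other part is either faulty in $E_2$ (silent, at most $f^*$) or has its messages delayed past timeslot~2.

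The size $n-2f-f^*$ is exactly the point at which this accounting balances: $f$ processors are corruptible in $E_1$, $f^*$ are corruptible in $E_2$, and a further $f$ are accounted for by asynchrony-style delay. Verifying this accounting is the main obstacle. It may require a hybrid over several intermediate executions (changing which of $F_2$ is honest one block at a time) and the hypotheses $f\geq 2$ and $n\geq 3f+1$ to guarantee that some correct processor is common to adjacent executions.
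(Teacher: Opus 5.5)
\textbf{The essential step is missing.} Step~2 is the entire content of the theorem, and you leave it unresolved.

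\textbf{Your stated plan for Step~2 cannot succeed.} You propose to link a validity execution with $\mathrm{GST}=0$ \emph{directly} to $E_1$, through a processor correct in both whose view agrees through timeslot~2. The counting fails:
\begin{itemize}
\item Such a pivot must lie in $P$, since any other processor correct in both executions receives the broadcaster's timeslot-0 message in one but not the other.
\item Every processor outside $P\cup\{L\}$ that is correct in both then has a different timeslot-1 view. In general it therefore sends the pivot a different timeslot-1 message.
\item So all $2f+f^*-1$ of these processors would have to be corrupt in one of the two executions, but only $(f-1)+f^*$ corruptions are available.
\end{itemize}

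\textbf{The missing idea is an intermediate asynchronous execution in which the broadcaster is correct.} The paper partitions $\Pi$ into $\{L\},P_1=P,P_2,P_3,P_4$ with $|P_2|=f$, $|P_3|=f^*$ and $|P_4|=f-1$. It chooses $t^*$ after all correct processors have output in $E_1$, and builds $E_2$ with $\mathrm{GST}=t^*$ as follows:
\begin{itemize}
\item $P_2$ is Byzantine and two-faced. Toward $P_1\cup P_2\cup P_3$ it behaves as in $E_1$. Toward $\{L\}\cup P_4$ it sends, at timeslot~1, what it would send had it heard from $L$.
\item $L$'s messages to $P_3$, and its post-timeslot-0 messages to $P_1$, are held until GST.
\item $P_4$'s post-timeslot-0 messages to $P_1\cup P_3$ are likewise held until GST.
\end{itemize}
Now consider the validity execution $E_3$, in which only $P_3$ is faulty and pretends not to have heard from $L$. $P_4$ (non-empty because $f\geq 2$) has the same view through timeslot~2 in $E_2$ as in $E_3$, so $P_4$ outputs $v$ in $E_2$. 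Agreement transfers $v$ to $P_1$ in $E_2$. Finally, $P_1$ cannot distinguish $E_2$ from $E_1$ before $t^*$. The ``further $f$'' in your accounting is thus a set that is Byzantine in the middle execution. Delay is used only to hide $L$'s messages to $P_3$ and $P_4$'s later messages.

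\textbf{Your simulated execution $E^*$ is also not the right one.} In the paper's $E_1$:
\begin{itemize}
\item the crashed set $P_4$ has size $f-1$, so together with $L$ there are exactly $f$ faults regardless of $f^*$;
\item $P_2$ and $P_3$ are correct;
\item crucially, $P_4$ sends its timeslot-0 messages correctly before crashing.
\end{itemize}
Your $F_1$ differs on both counts.
\begin{itemize}
\item \emph{Size.} The constraint $|P|+|F_1|=n-2f$ forces $|F_1|=f^*$, giving $f^*+1$ faults. This exceeds $f$ precisely in the case $f^*=f$ behind the $2.5$ bound.
\item \emph{Silence from timeslot~0.} Suppose $F_1$ plays the role of $P_4$. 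Its timeslot-0 messages to $P$ must then be delayed in $E_2$, to keep $P$'s view equal to that in $E^*$. So $P$'s timeslot-1 messages to $F_1$ differ from those in $E_3$, where they reflect $F_1$'s timeslot-0 messages. The $E_2$--$E_3$ link therefore fails.
\end{itemize}
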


\noindent The requirement $f\geq 2$ is used in the proof to ensure that a certain set of processors is non-empty for the indistinguishability argument (specifically, the set $P_4$ in Section~\ref{2rproof}).

\vspace{0.2cm}
Our proof establishes a natural way to compute the broadcaster's input from the messages sent to processors in $P$ at timeslot~0. Let $P$ be as in the statement of the theorem, and consider any execution $E$ in which $\text{GST}=0$ and:
\begin{itemize} 
\item The broadcaster has input $v$ and correctly sends messages to processors in $P$ at timeslot 0, but crashes immediately after doing so, without sending messages to any processor in $\Pi\setminus P$. 
\item A set of $f-1$ processors in $\Pi\setminus P$, other than the broadcaster, crash after correctly sending messages at  timeslot 0. 
\item All other processors act correctly. 
\end{itemize} 
We will show that all correct processors output $v$ in $E$. The broadcaster's input can therefore be computed from the messages sent by the broadcaster to processors in $P$ at timeslot 0, simply by simulating $E$ without knowledge of $v$. 

\vspace{0.2cm} 
\noindent \textbf{Interpreting the theorem}. Protocols with 2-round finality, such as E-Minimmit, Hydrangea, and Kudzu, assume $n\geq 5f+1$ and solve 2-round BB with $f^*=f$. When $n=5f+1$, Theorem \ref{2rBBT} establishes that the broadcaster's input can be recovered from the messages sent to any set of $n-2f-f^* = 2f+1$ processors, i.e., just over $2/5$ths of the total. If the broadcaster uses erasure coding and sends a single fragment to each processor, recovery from $2f+1$ fragments requires a reconstruction parameter $k\leq 2f+1$, giving a data expansion rate of at least $n/k \geq (5f+1)/(2f+1)$, which approaches $2.5$ as $f$ grows. This bound is exactly tight: E-Minimmit and Kudzu use erasure codes with reconstruction parameter $k=2f+1$, achieving data expansion rates of $(5f+1)/(2f+1)\approx 2.5$.

\subsection{The proof of Theorem \ref{2rBBT}} \label{2rproof}

Suppose $f\geq 2$, $n\geq 3f+1$, and that $\mathcal{P}$ is a polynomial-time protocol solving 2-round BB. Let $P\subseteq \Pi$ be any set of size $n-2f-f^*$ that does not contain the broadcaster, and let $v$ be an arbitrary input to the broadcaster. Writing $L$ for the broadcaster, we partition $\Pi$ into five disjoint sets $\{L\}, P_1, P_2, P_3, P_4$, where $P_1=P$, $|P_2|=f$, $|P_3|=f^*$, and $|P_4|=f-1$. Note that $P_4$ is non-empty because $f\geq 2$. We consider three executions, $E_1$, $E_2$, and $E_3$, as described below. 

\vspace{0.1cm}
\noindent \emph{Execution $E_1$} is specified as follows:
\begin{itemize}
\item $\text{GST}=0$.
\item The broadcaster correctly sends messages to processors in $P_1$ at timeslot 0, but crashes immediately after doing so, without sending messages to any processor in $\Pi\setminus P_1$.
\item All processors in $P_4$ crash after correctly sending messages at timeslot 0.
\item All other processors act correctly.
\end{itemize}

\noindent Note that the number of faulty processors in $E_1$ is $1+|P_4|=f$. By Termination, all correct processors must output by some timeslot in $E_1$. Choose $t^*>2$ greater than this timeslot.

\vspace{0.1cm}
\noindent \emph{Execution $E_2$} is specified as follows:
\begin{itemize}
\item $\text{GST}=t^*$.
\item The broadcaster is correct with input $v$. Processors in $P_2$ are faulty, while all other processors are correct.
\item Prior to $t^*$, processors in $P_2$ send the same messages to processors in $P_1 \cup P_2 \cup P_3$ as in $E_1$, i.e., they act as if they are correct but did not receive any message from the broadcaster.
\item At timeslot 0, processors in $P_2$ send the same messages to $\{L\}\cup P_4$ as in $E_1$ (note that timeslot-0 messages from correct processors are the same in all three executions, since they are sent before any messages are received). At timeslot 1, they send messages to $\{L\}\cup P_4$ as if correct and having received the broadcaster's timeslot-0 message at timeslot 1. They do not send messages to $\{L\}\cup P_4$ at later timeslots.
\item Messages from the broadcaster to processors in $P_3$ sent at timeslots in $[0,t^*)$, and messages from the broadcaster to processors in $P_1$ sent at timeslots in $(0,t^*)$, are delivered at $\text{GST}=t^*$.
\item Messages from processors in $P_4$ to processors in $P_1 \cup P_3$ sent at timeslots in $(0,t^*)$ are delivered at $\text{GST}=t^*$.
\item All other messages are delivered at the next timeslot.
\end{itemize}

\vspace{0.1cm} 
\noindent \emph{Execution $E_3$} is specified as follows: 
\begin{itemize} 
\item $\text{GST}=0$. 
\item All processors are correct, except those in $P_3$. 
\item Processors in $P_3$ act as if they are correct but did not receive any message from the broadcaster.
\end{itemize} 

\vspace{0.1cm} 
\noindent \textbf{Analysis}. In $E_3$, the broadcaster is correct with input $v$, $\text{GST}=0$, and at most $f^*$ processors are faulty, so by Validity all correct processors output $v$ by timeslot 2. 

We claim that $E_2$ and $E_3$ are indistinguishable for processors in $P_4$ through timeslot 2. First note that timeslot-0 messages from correct processors are the same in all three executions, and that processors in $P_2$ send the same timeslot-0 messages to $\{L\}\cup P_4$ in $E_2$ as in $E_3$ by construction. It follows that, in $E_2$ and $E_3$, the broadcaster, and processors in $P_1$ and $P_4$, all receive the same messages at timeslot 1 (in both executions, they receive the timeslot-0 messages of all processors, including the broadcaster's), and so have the same views through timeslot 1. Since the broadcaster is correct in both executions, it therefore sends the same messages at timeslot 1 in both, and these are received by processors in $P_4$ at timeslot 2. The same reasoning applies to the timeslot-1 messages of processors in $P_1$ and of processors in $P_4$ themselves. Processors in $P_3$ also send the same messages through timeslot 1 in both executions: in $E_2$ they are correct but have not received any message from the broadcaster (such messages being delayed until $t^*$), while in $E_3$ they received the broadcaster's message but act as if they did not. Finally, processors in $P_2$ send the same messages to $P_4$ at timeslots 0 and 1 in both executions, by construction. So processors in $P_4$ have the same view through timeslot 2 in $E_2$ as in $E_3$. Since $P_4$ is non-empty (as $f\geq 2$), and since processors in $P_4$ output $v$ by timeslot 2 in $E_3$, processors in $P_4$ output $v$ by timeslot 2 in $E_2$. By Agreement, all correct processors in $E_2$ output $v$; in particular, processors in $P_1$ output $v$ in $E_2$.

We claim that $E_1$ and $E_2$ are indistinguishable for processors in $P_1$ prior to timeslot $t^*$. In both executions, processors in $P_1$ receive the same timeslot-0 message from the broadcaster at timeslot 1, and receive no further messages from the broadcaster prior to $t^*$: in $E_1$ because the broadcaster crashes after timeslot 0, and in $E_2$ because the broadcaster's messages to $P_1$ sent at timeslots after 0 are delayed until $t^*$. In $E_1$, processors in $P_4$ crash after timeslot 0, so no processor in $P_1 \cup P_3$ receives any message from $P_4$ after timeslot 1. In $E_2$, messages from $P_4$ to $P_1 \cup P_3$ sent at timeslots after 0 are delayed until $t^*$, so again no processor in $P_1 \cup P_3$ receives any message from $P_4$ after timeslot 1 and before $t^*$. In $E_2$ and  $E_1$,  processors in $P_2$ send the same messages to processors in $P_1 \cup P_2 \cup P_3$ prior to $t^*$ by construction, while processors in $P_3$ are correct and have the same view prior to $t^*$ in both executions (in both cases, they receive no message from the broadcaster and no messages from $P_4$ after timeslot 1). So processors in $P_1$ have the same view in both executions prior to $t^*$. Since processors in $P_1$ output before $t^*$ in $E_1$, they output the same value in both executions. Therefore, processors in $P_1$ output $v$ in $E_1$. By Agreement, all correct processors output $v$ in $E_1$.

The broadcaster's input can therefore be computed from the messages sent by the broadcaster to processors in $P$ at timeslot 0, simply by simulating $E_1$.

\andy{Potentially comment on extension to Extractable SMR.}

\section{Experiments}  \label{exper} 
\andy{To be added.} 

\section{Related work}  \label{rw} 

\noindent \textbf{Classical Byzantine Consensus}. The study of
protocols for reaching consensus in the presence of Byzantine
faults was introduced by Lamport, Shostak, and
Pease~\cite{lamport1982byzantine}, with a treatment of SMR given later by
Schneider~\cite{schneider1993replication}. Dwork, Lynch and
Stockmeyer~\cite{DLS88} showed that $n\geq 3f+1$ is optimal for
partial synchrony. Standard protocols using this assumption, such
as PBFT~\cite{castro1999practical} and
Tendermint~\cite{buchman2016tendermint,buchman2018latest}, satisfy
3-round finality. As shown by~\cite{abraham2021good}, this is
optimal. More recently, Chan and
Pass~\cite{chan2023simplex} introduced Simplex, a particularly
simple protocol for partial synchrony with rotating leaders and
two rounds of voting per view. Simplex achieves 3-round finality
with $n \geq 3f+1$ and has a clean structure that lends itself
well to extension. Our protocols build directly on Simplex,
extending it with erasure coding, multi-view leaders, and the
mechanisms for pushing the data expansion rate towards $1$
described in this paper.

\vspace{0.2cm}
\noindent \textbf{Erasure coding in distributed protocols.}
The use of erasure codes to reduce communication costs in
distributed protocols has a long history. Cachin and
Tessaro~\cite{cachin2005asynchronous} introduced
\emph{asynchronous verifiable information dispersal (AVID)},
combining erasure codes with Merkle trees to allow a broadcaster to
distribute data among $n$ servers so that the data can be
recovered from any sufficiently large subset of fragments, with
each fragment verifiable against a commitment in the block
header. Our use of certified fragments follows this approach
directly. As detailed below, a number of other recent SMR protocols also make use of this approach. 
The Pipes framework of Lewis-Pye, Nayak and
Shrestha~\cite{lewis2025pipes} analyses the throughput
implications of erasure coding in leader-based protocols, showing
that the data expansion rate directly governs the throughput
bottleneck. Our work builds on this line of research, establishing
that the $2.5$ rate is optimal for 2-round finality and showing
how to circumvent this bound with 3-round finality. Shoup~\cite{shoup2023sing} has previously shown how to achieve a data expansion rate of 1.5 with 3-round finality.

\vspace{0.2cm}
\noindent \textbf{Protocols with 2-round finality.}
A long line of
work~\cite{brasileiro2001consensus,friedman2005simple,guerraoui2007refined,kursawe2002optimistic,martin2006fast,song2008bosco}
considers protocols with a `fast path', allowing quick
finalisation under favourable conditions. FaB~\cite{martin2006fast}
introduced a parameterised model with $n \geq 3f + 2p + 1$
processors, achieving 2-round finality when at most $p$ processors
are Byzantine. Kuznetsov et
al.~\cite{kuznetsov2021revisiting} later showed that the optimal
bound is $n \geq 3f + 2p - 1$ (see
also~\cite{abraham2021good}). Recent protocols in this family
include Minimmit and
E-Minimmit~\cite{chou2025minimmit},
Alpenglow~\cite{alpen},
Kudzu~\cite{shoup2025kudzu},
Hydrangea~\cite{shrestha2025hydrangea},
Banyan~\cite{vonlanthen2024banyan}, and
ChonkyBFT~\cite{francca2025chonkybft}. Among these,
E-Minimmit, Kudzu and Alpenglow incorporate erasure coding, achieving data
expansion rates of approximately $2.5$. Our impossibility result
(Section~\ref{imposs}) shows that this rate cannot be improved
for protocols with 2-round finality.

\vspace{0.2cm}
\noindent \textbf{Optimistic proposals and multi-view leaders.}
In rotating-leader protocols, a standard bottleneck is that the
leader of view $v+1$ must wait for a certificate from view $v$
before proposing.  Moonshot~\cite{doidge2024moonshot} introduced the
concept of \emph{optimistic proposals}, in which the leader of view
$v+1$ proposes a block as soon as it receives the proposal for
view $v$, without waiting for the corresponding certificate. 
Hydrangea++~\cite{shresthahydrangeaplusplus} applies the same technique to
Hydrangea. In this paper, we use  \emph{multi-view leaders} (where a single leader proposes
across multiple consecutive views) to allow the next leader to
begin proposing before the current superview has concluded,
eliminating the inter-superview gap. A number of previous protocols (e.g., PBFT~\cite{castro1999practical} and DispersedSimplex~\cite{shoup2023sing}) have incorporated multi-view leaders, often under the name `stable leaders'.\footnote{We also note that a previous version of this paper (still available on the arXiv) shows how optimistic proposals can be integrated with Carnot~1.}

\vspace{0.2cm}
\noindent \textbf{DAG-based protocols.} It has often been observed
that low communication complexity does not necessarily translate
into high throughput in practice. Narwhal and
Tusk~\cite{danezis2022narwhal} demonstrated that building on a
DAG-based dissemination layer, in which all processors share
responsibility for transmitting transactions, can yield
significantly higher throughput than leader-based approaches.
Narwhal further improves throughput at the systems level by
allowing each processor to employ multiple \emph{worker nodes}
for data transmission, effectively trading additional hardware
resources (CPUs and network bandwidth) for performance. While
increased bandwidth can be incorporated into our model, the use
of multiple CPUs per processor is not. Because data and protocol
messages are transmitted by many processors concurrently,
DAG-based protocols typically incur higher communication
complexity than their leader-based counterparts. Beyond serving
as a dissemination layer, a number of
works~\cite{gkagol2018aleph,keidar2021all,spiegelman2024shoal,shrestha2025sailfish,keidar2022cordial,babel2023mysticeti}
use the DAG structure itself to reach consensus. Early DAG-based
protocols suffered from high round-latency, and reducing this to
match leader-based protocols has been an active area of research.

\vspace{0.2cm}
\noindent \textbf{Small block times.} A recent line of work aims to
minimise the inter-proposal time, pushing it below the message delay
bound $\Delta$. Gatling~\cite{scaffino2026gatling} runs $K$ staggered
parallel instances of a black-box atomic broadcast protocol and
interleaves their outputs via a deterministic merge rule, achieving
inter-proposal times of $T_{\text{ipt}}/K$ for a component protocol
with inter-proposal time $T_{\text{ipt}}$.
Cadence~\cite{elsheimy2026cadence} similarly decouples the block
interval from the network delay by finalising each slot in its own
independent consensus instance, with multiple concurrent proposers
per slot. Both protocols require synchronised clocks, and both, in
their basic forms, sacrifice \emph{predictable validity}: when
proposals are issued at sub-$\Delta$ intervals by rotating proposers,
a proposer cannot have seen the immediately preceding blocks at
proposal time, and so cannot validate the transactions it includes
against an up-to-date state of the log. The authors
of~\cite{scaffino2026gatling} present two variants of Gatling that
retain predictable validity, but both operate under what they term
slowly rotating leader schedules: a single leader issues multiple
consecutive proposals, with consecutive leader windows separated by
gaps of length at least $\Delta$ (either left empty, or filled with a
second tier of state-independent transactions). This is precisely the
multi-view-leader regime in which Carnot operates, and in this regime
the parallel composition offers no advantage over direct pipelining
within a superview. Moreover, because a single leader now
disseminates every (prime) block in its window, and because the
composition treats the component protocol as a closed box (without
erasure coding), the leader's outgoing bandwidth is reinstated as the
throughput bottleneck---the very problem that multi-proposer designs
set out to address, and which Carnot addresses with erasure coding.
As established in Section~\ref{anal2}, Carnot~2 achieves equilibrium
block times of $(13+\log n)n\lambda/(S-Dd)$---of the order of a
millisecond for realistic parameter values---while retaining
predictable validity (a Carnot leader always builds on blocks whose
payloads it holds), requiring no synchronised clocks, and allowing
data expansion rates approaching~$1$. Cadence, for comparison,
disseminates proposals as erasure-coded chunks of which $f+1$ suffice
for reconstruction, giving a data expansion rate of approximately
$3$; its multiple concurrent proposers target censorship resistance
properties that are orthogonal to the concerns of this paper.

\section{Discussion}  \label{disc} 

We have shown that protocols with 2-round finality cannot
achieve a data expansion rate below $2.5$, and that protocols
with 3-round finality can push the data expansion rate
arbitrarily close to $1$. Two protocols realising this were
presented: Carnot~1, which assumes $n \geq 4f+1$, solves Extractable SMR,  and avoids any
extra fragment dissemination, and Carnot~2, which solves SMR  under
the optimal assumption $n \geq 3f+1$ at the cost of additional
fragment dissemination when Byzantine processors interfere.

Beyond bandwidth efficiency, we have argued that multi-view leaders
give small block times essentially for free: at equilibrium, the
interval between successive proposals is determined by bandwidth and
per-view overhead alone (Section~\ref{anal2}), and is of the order of
a millisecond for realistic parameter values. In light of recent work
targeting sub-$\Delta$ inter-proposal
times~\cite{scaffino2026gatling,elsheimy2026cadence}, we regard this as
a further argument for the multi-view-leader design: it achieves
comparable block times without synchronised clocks and without
sacrificing predictable validity, and---since the wall-clock length
of a leader's tenure at such block times is a fraction of a
second---at little cost to the concerns, such as censorship, that
motivate rapid leader rotation.

Several questions remain open. First, our impossibility result
applies to protocols with 2-round finality (one round of voting).
It would be interesting to establish tight bounds on the
achievable data expansion rate for protocols with 3-round finality
in the \emph{worst case}. Carnot~1 achieves approximately $1.33$ and
Carnot~2 achieves approximately $1.5$, but we do not know whether
these rates are optimal for their respective resilience
assumptions.

Second, while the Pipes model analysis of
Section~\ref{anal2}  provides formulas relating 
erasure coding parameters and maximum throughput, an interesting
practical question is how to set these parameters dynamically in
realistic settings where network conditions and fault rates are in
flux. More broadly, one might wish to move between protocols with
2-round finality (which require fewer rounds of communication but a higher data
expansion rate) and 3-round finality (which can achieve lower data
expansion rates and thus higher throughput) depending on current
demand. Understanding how to smoothly transition between these
regimes is an appealing direction for future work.

\section*{Acknowledgements}

We thank Sunghyeon Jo for a note improving Theorem~\ref{2rBBT}: an earlier version of this paper established the theorem for sets $P$ of size $n-2f-f^*+1$, and the observation that the broadcaster may be kept correct in the pivotal execution, allowing the tightening to $n-2f-f^*$, is due to him.


\appendix 

\section{Carnot 1: analysis in the standard model} \label{anal1} 
Throughout this section, we assume $n\geq 4f+1$, and we consider the standard model of partial synchrony, as described in Section \ref{setup}. We say a block $b$ \emph{receives a stage-$2$ certificate} if some processor receives a stage-$2$ certificate for $b$, and we also use similar terminology for stage-2 M-certificates and N-certificates.  We say $b$ \emph{receives a stage-1 notarisation} if at least $n-\frac{3}{2}f$ processors send stage-1 votes for $b$.

\subsection{Consistency} \label{econsec}
The proof of consistency follows a similar structure to that for Simplex~\cite{chan2023simplex}.

\begin{lemma}[One vote per view] \label{singlevote}
Correct processors disseminate at most one stage-1 vote and at most one stage-2 vote in each view.
\end{lemma}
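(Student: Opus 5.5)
The plan is to read the claim directly off the pseudocode of Algorithm~\ref{alg1}, using the flags $1\mathtt{voted}(v)$ and $2\mathtt{voted}(v)$ as guards. First we identify every line at which a correct processor $p_i$ disseminates a vote. Stage-1 votes are sent only at line~\ref{vote1}. Stage-2 votes are sent only at lines~\ref{vote2} and~\ref{vote2b}. Each such vote is for a block of view $\mathtt{v}$, where $\mathtt{v}$ is $p_i$'s current view at that timeslot. The stage-1 vote is for a votable fragment of a view $\mathtt{v}$ block. The stage-2 vote is for a view $\mathtt{v}$ block in $\mathtt{blocks}^*$ or $\mathtt{blocks}$.

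For stage-1 votes, we argue that line~\ref{vote1} executes only inside the test at line~\ref{evotecheck}, which requires $1\mathtt{voted}(\mathtt{v})=\text{false}$. That line also sets $1\mathtt{voted}(\mathtt{v}):=\text{true}$ in the same step as dissemination. The flags are initially false and are never reset to false anywhere in the pseudocode. Hence, for any given $v$, line~\ref{vote1} can execute at most once with $\mathtt{v}=v$, so $p_i$ sends at most one stage-1 vote for a view $v$ block.

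The stage-2 argument is identical, except that there are two sending lines. Both line~\ref{13} and line~\ref{13b} require $2\mathtt{voted}(\mathtt{v})=\text{false}$. Lines~\ref{vote2} and~\ref{vote2b} each set $2\mathtt{voted}(\mathtt{v}):=\text{true}$ immediately upon sending. Since these lines run sequentially within a timeslot, the second check sees the flag already set once the first has fired, whether in the same timeslot or a later one. So at most one of them ever fires for a given $v$.

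The only point requiring care is that the flag index must match the view of the block voted for, and that $p_i$ cannot re-enter a view. Both hold because votes are only for view-$\mathtt{v}$ blocks and $\mathtt{v}$ only increases (lines~\ref{enewview2} and~\ref{enewview1}). Even if a view were somehow revisited, the flags are indexed by view number rather than tied to the current view, so the argument would not depend on it. I do not expect a real obstacle; this is bookkeeping.
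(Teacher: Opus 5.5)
Your proof is correct and is essentially the paper's argument: the only vote-sending lines (line~\ref{vote1} for stage~1, lines~\ref{vote2} and~\ref{vote2b} for stage~2) are each guarded by the corresponding flag $1\mathtt{voted}(v)$ or $2\mathtt{voted}(v)$ being false, and each sets that flag to true upon sending. Your additional remarks make explicit what the paper leaves implicit: the flags are never reset, $\mathtt{v}$ only increases, and the two stage-2 branches execute sequentially.
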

\begin{proof}
For stage-1 votes, note that a correct processor $p_i$ only disseminates a stage-1 vote for a view $v$ block in line~\ref{vote1}, and only if $1\mathtt{voted}(v) = \text{false}$. Upon doing so, $p_i$ sets $1\mathtt{voted}(v) := \text{true}$. For stage-2 votes, observe that $p_i$ only disseminates a stage-2 vote for a view $v$ block in lines~\ref{vote2} and \ref{vote2b}, and only if $2\mathtt{voted}(v) = \text{false}$. Upon doing so, $p_i$ sets $2\mathtt{voted}(v) := \text{true}$.
\end{proof}

\begin{lemma}[No stage-2 vote and nullify] \label{stage2ornull}
No correct processor both disseminates a stage-2 vote for a view $v$ block and sends a \emph{nullify}$(v)$ message.
\end{lemma}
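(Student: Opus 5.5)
The plan is a direct inspection of Algorithm~\ref{alg1}, tracking the flags $\mathtt{nullified}(v)$ and $2\mathtt{voted}(v)$. The claim is that each of the two actions checks that neither flag is set and then sets its own flag. Then whichever action happens first blocks the other, and the statement follows.

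\emph{Stage-2 votes.} A correct processor $p_i$ disseminates a stage-2 vote for a view $v$ block only in line~\ref{vote2} or line~\ref{vote2b}. Both are guarded by the condition $\mathtt{nullified}(v)=\text{false}$ and $2\mathtt{voted}(v)=\text{false}$, with $v=\mathtt{v}$ at that moment. Both lines immediately set $2\mathtt{voted}(v):=\text{true}$. So once $p_i$ has sent a nullify$(v)$ message it never sends a stage-2 vote for a view $v$ block, provided every nullify$(v)$ message is accompanied by setting $\mathtt{nullified}(v):=\text{true}$. This is immediate from line~\ref{time-out}, which is the only place nullify messages are sent.

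\emph{Nullify messages.} Nullify messages are sent only in line~\ref{time-out}. Here the direction needing care is to show that a processor which has already sent a stage-2 vote for view $v$ does not later send nullify$(v)$. This is the main obstacle, because line~\ref{time-out} nullifies not only the current view $\mathtt{v}$ but every $v\geq \mathtt{v}$ in the same superview. The guard also checks only $\mathtt{nullified}(\mathtt{v})=\text{false}$ and $\mathtt{TimeoutReady}=\text{true}$, and does not test $2\mathtt{voted}(v)$ for each $v$.

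I would split into two cases. For $v=\mathtt{v}$, each of the four conditions (a)--(d) of $\mathtt{TimeoutReady}$ requires $2\mathtt{voted}(\mathtt{v})=\text{false}$, so no vote was previously sent. For $v>\mathtt{v}$, I would argue that $p_i$ cannot yet have sent a stage-2 vote for view $v$. Stage-2 votes are only sent for the current view $\mathtt{v}$, and $\mathtt{v}$ is non-decreasing, being changed only by increments in lines~\ref{enewview2} and~\ref{enewview1}. So at any earlier time the current view was at most the present $\mathtt{v}<v$, and no stage-2 vote for view $v$ could have been sent.

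After line~\ref{time-out} executes, $\mathtt{nullified}(v)=\text{true}$ for all those $v$. Any later arrival at view $v$ therefore finds the stage-2 guard false. Combining the two directions gives the lemma. I would double-check that $\mathtt{nullified}$ and $2\mathtt{voted}$ are never reset; they are only ever set to true, so monotonicity holds.
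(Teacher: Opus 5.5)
Your proposal is correct and follows essentially the same argument as the paper: stage-2 votes are only sent for the current view and are guarded by $\mathtt{nullified}(v)=\text{false}$, while a nullify$(v)$ message is only sent from some view $v'\leq v$ when $\mathtt{TimeoutReady}$ forces $2\mathtt{voted}(v')=\text{false}$, and since $2\mathtt{voted}(v)$ can only be set while in view $v$, it is still false at that point. Your explicit appeal to the monotonicity of $\mathtt{v}$ simply spells out a step the paper leaves implicit.
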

\begin{proof}
A correct processor $p_i$ only disseminates a nullify$(v)$ message while in view $v' \leq v$ (line~\ref{time-out}) if $\mathtt{TimeoutReady}=\text{true}$. From the definition of the latter predicate, $\mathtt{TimeoutReady}=\text{true}$ implies  $2\mathtt{voted}(v')=\text{false}$. Since $p_i$ only ever sets $2\mathtt{voted}(v):=\text{true}$ while in view $v$, this also implies that $2\mathtt{voted}(v)=\text{false}$. Processor $p_i$ sets $\mathtt{nullified}(v) := \text{true}$ upon  disseminating the nullify$(v)$ message. Furthermore, $p_i$ only disseminates a stage-2 vote for a view $v$ block while in view $v$ (lines~\ref{vote2} and \ref{vote2b}), and only if $\mathtt{nullified}(v) = \text{false}$. 
\end{proof}

\begin{lemma}[Unique stage-1 notarisation per view] \label{1certlem}
If block $b$ receives a stage-1 notarisation, then no block $b' \neq b$ with $b'.\text{view} = b.\text{view}$ receives a stage-1 notarisation.
\end{lemma}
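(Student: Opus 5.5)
The plan is a standard quorum intersection argument, combined with Lemma~\ref{singlevote} (one stage-1 vote per view). In this appendix, ``$b$ receives a stage-1 notarisation'' means that at least $\lceil n-\frac{3}{2}f\rceil$ processors send stage-1 votes for $b$. So suppose, for contradiction, that $b$ and $b'\neq b$ are both view $v$ blocks and each receives a stage-1 notarisation. Let $Q$ and $Q'$ be the sets of processors voting for $b$ and $b'$ respectively, each of size at least $n-\frac{3}{2}f$.

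First, bound the intersection:
\[ |Q\cap Q'|\geq |Q|+|Q'|-n \geq 2\left(n-\tfrac{3}{2}f\right)-n = n-3f. \]
Using $n\geq 4f+1$, this gives $|Q\cap Q'|\geq f+1$. Since at most $f$ processors are Byzantine, some processor $p\in Q\cap Q'$ is correct.

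Next, $p$ disseminated a stage-1 vote for $b$ and a stage-1 vote for $b'$, both blocks for view $v$. This contradicts Lemma~\ref{singlevote}, which says a correct processor disseminates at most one stage-1 vote per view. To apply it, I note that a stage-1 vote for a view $v$ block is only sent at line~\ref{vote1} while $\mathtt{v}=v$, and the flag guarding that line is $1\mathtt{voted}(v)$.

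The main point to check is the arithmetic when $\frac{3}{2}f$ is not an integer, where the threshold is $\lceil n-\frac{3}{2}f\rceil\geq n-\frac{3}{2}f$. The bound above only uses the lower bound $n-\frac{3}{2}f$, so it still holds. Since $|Q\cap Q'|$ is an integer and $n-3f\geq f+1$, the conclusion is unaffected. Beyond that, the argument is routine.
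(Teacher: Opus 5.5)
Your proof is correct and is essentially the paper's own argument: the intersection bound $2(n-\frac{3}{2}f)-n = n-3f \geq f+1$ under $n\geq 4f+1$ forces some correct processor to have cast stage-1 votes for both $b$ and $b'$, contradicting Lemma~\ref{singlevote}. Your remark about the ceiling when $\frac{3}{2}f$ is not an integer is a harmless extra check that the paper leaves implicit.
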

\begin{proof}
This follows from Lemma~\ref{singlevote} by the standard quorum intersection argument. Suppose, towards a contradiction, that $b$ receives a stage-1 notarisation $Q$ and $b' \neq b$ with $b'.\text{view} = b.\text{view}$ receives a stage-1 notarisation $Q'$. Let $P$ and $P'$ be the sets of processors contributing to $Q$ and $Q'$, respectively. Then $|P \cap P'| \geq (n-\frac{3}{2}f) + (n-\frac{3}{2}f) - n = n - 3f \geq f + 1$. Thus $P \cap P'$ contains at least one correct processor, contradicting Lemma~\ref{singlevote}.
\end{proof}

\begin{lemma}[Stage-2 certificate precludes N-certificate] \label{2certlem}
If block $b$ receives a stage-2 certificate, then view $v := b.\text{view}$ does not receive an N-certificate.
\end{lemma}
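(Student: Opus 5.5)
The plan is a direct quorum-intersection argument, mirroring the proof of Lemma~\ref{1certlem} but combining stage-2 votes with nullify messages and using Lemma~\ref{stage2ornull} in place of Lemma~\ref{singlevote}.

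\textbf{Step 1: unpack the certificates.} First I would note that a stage-2 certificate for $b$ contains an $(n-f)$-of-$n$ threshold certificate on $(\text{vote},b,2)$. Under our assumption that cryptographic schemes are perfect, such a certificate can only be formed from valid signature shares of at least $n-f$ distinct processors. By the definition of stage-2 votes, these shares are included only in stage-2 votes for $b$. So a set $P$ of at least $n-f$ processors sent stage-2 votes for $b$.

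Similarly, an N-certificate for view $v$ is a $(2f+1)$-of-$n$ threshold certificate on $(\text{nullify},v)$. It therefore requires a set $P'$ of at least $2f+1$ processors that sent nullify$(v)$ messages.

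\textbf{Step 2: intersect the quorums.} Suppose, towards a contradiction, that both certificates exist. Then
\[ |P\cap P'| \geq (n-f)+(2f+1)-n = f+1, \]
so $P\cap P'$ contains at least one correct processor $p$. That processor disseminated a stage-2 vote for the view $v$ block $b$ and also sent a nullify$(v)$ message. This contradicts Lemma~\ref{stage2ornull}, and completes the proof.

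\textbf{Main obstacle.} The only delicate point is in Step 1: justifying that a threshold share on $(\text{vote},b,2)$ from a correct processor implies that the processor actually disseminated a stage-2 vote for $b$. Correct processors only produce the $(n-f)$-of-$n$ share $\rho_i'$ inside a stage-2 vote (lines~\ref{vote2} and \ref{vote2b}), so this holds. The same reasoning applies to nullify shares, which are produced only at line~\ref{time-out}. No dependence on the current view is needed, since Lemma~\ref{stage2ornull} already handles nullify messages sent from earlier views of the same superview.
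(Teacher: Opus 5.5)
Your proof is correct and is essentially the paper's own argument: you intersect the at least $n-f$ stage-2 voters with the at least $2f+1$ nullifiers, obtain at least $f+1$ common processors and hence a correct one, and so contradict Lemma~\ref{stage2ornull}. The one addition is your Step~1 justification that threshold shares from correct processors arise only inside actual stage-2 votes and nullify messages, a point the paper leaves implicit.
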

\begin{proof}
This follows from Lemma~\ref{stage2ornull} by a quorum intersection argument. Towards a contradiction, suppose $b$ receives a stage-2 certificate $Q$ and that  $v := b.\text{view}$ receives an N-certificate $Q'$.  Let $P$ and $P'$ be the sets of processors contributing to $Q$ and $Q'$, respectively. Then $|P \cap P'| \geq (n-f) + (2f+1) - n =f + 1$. Thus $P \cap P'$ contains at least one correct processor, contradicting Lemma~\ref{stage2ornull}.
\end{proof}


\begin{lemma}[Stage-2 certificates imply stage-1 notarisation] \label{2imp1}
If a block $b$ receives a stage-2 certificate or a stage-2 M-certificate, then $b$ receives a stage-1 notarisation.
\end{lemma}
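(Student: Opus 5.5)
Both a stage-2 certificate (at least $n-f$ stage-2 votes) and a stage-2 M-certificate (at least $f+1$ stage-2 votes) for $b$ require stage-2 votes for $b$ from at least $f+1$ distinct processors. Since at most $f$ processors are faulty, at least one correct processor $p_i$ disseminates a stage-2 vote for $b$. So it suffices to show the following: if a correct processor disseminates a stage-2 vote for $b$, then $b$ receives a stage-1 notarisation, i.e.\ at least $n-\frac{3}{2}f$ processors send stage-1 votes for $b$.

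Stage-2 votes are disseminated only at line~\ref{vote2} or line~\ref{vote2b}, so I split into these two cases. At line~\ref{vote2}, $b\in\mathtt{blocks}^*$ for $p_i$. By condition~(i) of the definition of $\mathtt{blocks}^*$, $p_i$ has then received a stage-1 notarisation for $b$. That is a set of $\lceil n-\frac{3}{2}f\rceil$ stage-1 votes for $b$ by distinct processors. Since signatures cannot be forged, each of those processors did send a stage-1 vote for $b$, so $b$ receives a stage-1 notarisation in the sense defined at the start of this appendix.

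At line~\ref{vote2b}, $b\in\mathtt{blocks}$ for $p_i$, so $p_i$ holds an M-certificate for $b$, but $p_i$ need not itself have seen a stage-1 notarisation. This is the step needing care, and I plan a well-founded induction on the order in which stage-2 votes for $b$ are sent. The M-certificate is an $(f+1)$-of-$n$ threshold certificate, so at least $f+1$ processors produced stage-2 vote shares for $b$. At least one of them, say $p_j$, is correct, and $p_j$ sent its stage-2 vote strictly before $p_i$ did, since $p_i$ had to receive the certificate first. So consider the earliest correct processor to disseminate a stage-2 vote for $b$, breaking ties arbitrarily. That processor cannot have voted via line~\ref{vote2b}: it would already hold an M-certificate for $b$ containing a share from some correct processor that voted strictly earlier, contradicting minimality. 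Hence it voted via line~\ref{vote2}, and the first case applies. Equivalently, one can argue by induction on time that every correct stage-2 vote for $b$ is preceded by a correct line~\ref{vote2} vote for $b$.

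The main obstacle is this minimality argument. In particular I must check that receipt of a certificate happens strictly after the signing processor's dissemination; this holds because message delivery takes at least one timeslot. Given that, the lemma follows.
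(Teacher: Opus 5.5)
Your proof is correct and follows the same route as the paper: take the first correct processor to send a stage-2 vote for $b$, observe that it must have voted with $b\in\mathtt{blocks}^*$ (line~\ref{13}), and apply condition~(i) of that definition. Your minimality argument ruling out a first vote via line~\ref{vote2b}, using the fact that delivery takes at least one timeslot, spells out a step the paper leaves implicit.
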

\begin{proof}
In either case, at least one correct processor disseminates a stage-2 vote for $b$. Consider the first correct processor to do so. This processor must add $b$ to its local value $\mathtt{blocks}^*$ (see line~\ref{13}). Condition (i) in the definition of $\mathtt{blocks}^*$ requires that $p_i$ has received a stage-1 notarisation for $b$. Thus $b$ receives a stage-1 notarisation.
\end{proof}

\begin{lemma}[Consistency]
Carnot  1 satisfies Consistency.
\end{lemma}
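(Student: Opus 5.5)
The plan is to follow the Simplex template: first show that any two blocks receiving stage-2 certificates are consistent, and then deduce that $\mathcal{F}$ is monotone. The central claim is as follows. If $b$ is a view $v$ block that receives a stage-2 certificate and $b'$ is any block with $b'.\text{view}=v'\geq v$ that receives a stage-2 M-certificate, a stage-2 certificate, or a stage-1 notarisation, then $b$ is an ancestor of $b'$. I would prove this by strong induction on $v'$.

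\emph{Base case $v'=v$.} By Lemma~\ref{2imp1}, $b$ receives a stage-1 notarisation. If $b'$ receives a stage-2 (M-)certificate, Lemma~\ref{2imp1} gives $b'$ a stage-1 notarisation as well. In all cases Lemma~\ref{1certlem} then forces $b'=b$.

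\emph{Inductive step $v'>v$.} Since $b'$ receives at least a stage-1 notarisation, and $n-\frac32 f\geq f+1$, some correct processor $p$ sent a stage-1 vote for $b'$. Processor $p$ therefore regarded a certified fragment of $b'$ as votable. So $p$ had the parent $b''$ of $b'$ in its $\mathtt{blocks}$, with $b''.\text{view}<v'$, and held N-certificates for every view in $(b''.\text{view},v')$. Membership of $b''$ in $\mathtt{blocks}$ means that $b''$ is genesis or has received an M-certificate. By Lemma~\ref{2certlem}, view $v$ has no N-certificate, so $v\notin(b''.\text{view},v')$, which gives $b''.\text{view}\geq v$. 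Since $b''$ has an M-certificate (genesis is excluded because its view $0<v$) and $v\leq b''.\text{view}<v'$, the induction hypothesis makes $b$ an ancestor of $b''$, and hence of $b'$.

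The main obstacle is getting the induction hypothesis strong enough. It has to cover M-certificates, since the parent is only guaranteed one through $\mathtt{blocks}$. For M-certificates and stage-2 certificates I route everything through Lemma~\ref{2imp1} to reach a stage-1 notarisation, and then use a correct stage-1 voter's votability check. I also need to state explicitly that views of ancestors strictly decrease, and that a stage-1 notarisation for a block (a quorum of at least $n-\frac32 f>f$ votes) always contains a correct voter.

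For Consistency itself, take $M_1\subseteq M_2\subseteq M^*$. Every block in either candidate chain for $\mathcal{F}(M_j)$ is an ancestor of a block with a stage-2 certificate in $M^*$. By the claim, all blocks with stage-2 certificates in $M^*$ lie on a single chain, so all such candidate sequences are prefixes of one chain, and the longest sequence is unique whenever it exists. Enlarging $M$ only adds certificates and fragments. The qualifying conditions (a later-or-equal block with a stage-2 certificate, plus enough decodable certified fragments, where Decode's correctness is fixed by the tag) are preserved under enlargement. Hence the longest valid sequence for $M_1$ is a prefix of that for $M_2$, and $\mathcal{F}(M_1)\preceq\mathcal{F}(M_2)$.
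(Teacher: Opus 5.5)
Your proof is correct and is essentially the paper's argument. Your strong induction on $v'$ is the paper's least-$v_2$ counterexample recast as an induction, and it uses the same ingredients: Lemma~\ref{2imp1} to pass from the parent's M-certificate to a stage-1 notarisation, the two votability clauses, Lemma~\ref{2certlem} to exclude an N-certificate for view $v$, and Lemma~\ref{1certlem} for the equal-view case.

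Your final paragraph on the monotonicity of $\mathcal{F}$ goes beyond what the paper writes. It is sound only if the sequence $b_1,\dots,b_m$ in the definition of $\mathcal{F}$ is read as anchored at genesis. As literally written, candidate sequences are arbitrary segments of the chain, so two disjoint segments can tie for longest, or a later one can overtake an earlier one. Your uniqueness and prefix claims therefore rely on the anchored reading.
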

\begin{proof}
Towards a contradiction, suppose some block $b_1$ with $b_1.\text{view}=v_1$ receives a stage-2 certificate, and that for some \emph{least} $v_2\geq v_1$ some block $b_2$ satisfies:
\begin{enumerate}
\item $b_2.\text{view}=v_2$;
\item $b_1$ is not an ancestor of $b_2$, and;
\item $b_2$ receives a stage-1 notarisation.
\end{enumerate}
From Lemmas~\ref{1certlem} and \ref{2imp1}, it follows that $v_2>v_1$. According to clause (i) from the definition of a votable fragment, correct processors will not disseminate a stage-1 vote for $b_2$ (line~\ref{vote1}) until adding the parent, $b_0$ say, to their local value $\mathtt{blocks}$. From Lemma  \ref{2imp1}, we conclude that  $b_0$ receives a stage-1 notarisation.  By our choice of $v_2$, it follows that $b_0.\text{view}<v_1$. This gives a contradiction, because by clause (ii) from the definition of a votable fragment, correct processors would not disseminate a stage-1 vote for $b_2$ without receiving an N-certificate for view $v_1$. By Lemma~\ref{2certlem}, such an N-certificate cannot exist. So, block $b_2$ cannot receive a stage-1 notarisation.
\end{proof}

\subsection{Liveness} \label{elivesec}

The proof of liveness is also straightforward. First we prove that correct processors progress through all views. Then we prove that correct leaders finalise new blocks after GST. 
Recall that $\delta$ is the (unknown) least upper bound on message delay after GST.  

\begin{lemma}[Timely view entry] \label{timev} 
If correct $p_i$ enters view $v$ at $t$, then all correct processors enter view $v$ by $t':=\max \{ t, \text{GST} \}+ \delta$. 
\end{lemma}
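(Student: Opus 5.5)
Fix correct $p_i$ entering view $v$ at $t$. I would prove the lemma by induction on $v$, as in Simplex, exploiting the fact that every way of entering a new view is triggered by a certificate that correct processors disseminate upon first receipt. The base case $v=1$ is immediate: all processors start in view 1 at the start of the execution.

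For the induction step ($v>1$), note that $p_i$ enters $v$ either via line~\ref{enewview2} or via line~\ref{enewview1}. In the second case $\mathtt{S}$ contains an N-certificate for $v-1$. In the first case $p_i$ has just added a view $v-1$ block $b$ to $\mathtt{blocks}$, so $\mathtt{S}$ contains an M-certificate for $b$ and, recursively, M-certificates for every ancestor of $b$ other than genesis. By lines~\ref{Ndis} and~\ref{2dis}, $p_i$ disseminates each of these certificates at the moment it becomes new in $\mathtt{S}$, which is no later than $t$. Hence every correct $p_j$ holds them all by $t'=\max\{t,\text{GST}\}+\delta$. In particular, by $t'$ each correct $p_j$ holds either an N-certificate for $v-1$ or M-certificates for the entire chain from $b$ back to genesis. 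In the latter case $b$ lies in $p_j$'s $\mathtt{blocks}$ by $t'$, since membership in $\mathtt{blocks}$ is defined inductively from the parent and requires only M-certificates.

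It remains to show that $p_j$ actually advances past each view up to $v$ by $t'$, and this is the main obstacle. Possessing the right certificate for view $v-1$ only lets $p_j$ leave $v-1$ if $p_j$ is already in view $v-1$, so we also need $p_j$ to be in view $\ge v-1$ by $t'$. I would handle this with the same transmission argument applied to every earlier view: $p_i$ passed through each view $u<v$ by time $t$, and each such passage was triggered either by an N-certificate for $u$ or by a view-$u$ block entering its $\mathtt{blocks}$. In both cases the relevant certificates were disseminated by $p_i$ at or before $t$, so every correct $p_j$ has, by $t'$, an N-certificate or a $\mathtt{blocks}$-member for every view $u<v$.

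The final step checks that the loop cannot get stuck before $t'$. The two progression clauses are evaluated each timeslot and only require membership of $\mathtt{blocks}$ or $\mathtt{S}$, and a processor may advance through several views within one timeslot. So if $p_j$ is in view $u<v$ at $t'$, it advances, and repeating this it reaches view $v$ by $t'$. One caveat: I am assuming that line-by-line advancing within a single timeslot is permitted, or equivalently that the timeslot can be treated as instantaneous. If it is not, I would state the lemma with respect to loop iterations, in the same idealisation used for Simplex.
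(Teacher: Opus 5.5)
Your argument is correct and is essentially the paper's proof: induction on $v$, using that $p_i$ disseminated on first receipt, by time $t$, the N-certificate or the chain of M-certificates behind each of its view changes, so every correct processor holds them by $t'$. The one difference is that your ``main obstacle'' (getting each correct $p_j$ into view $v-1$ by $t'$) is exactly what the induction hypothesis gives, since $p_i$ entered $v-1$ at some time $\le t$, so re-running the transmission argument for every earlier view is unnecessary; your caveat about several view changes in one timeslot is an idealisation the paper's proof also relies on implicitly.
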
 
\begin{proof} 
The proof is by induction on $v$. For $v=1$, the claim is immediate since all correct processors begin in view $1$. Suppose $v>1$ and that the claim holds for all previous views. Let $p_i$, $t$ and $t'$ be as in the statement of the lemma. Processor $p_i$ either enters view $v$ at $t$ upon adding a view $v-1$ block to $\mathtt{blocks}$, or upon receiving an N-certificate for view $v-1$. In the former case, $p_i$ has received a stage-2 M-certificate for some view $v-1$ block $b$ and for all non-genesis ancestors of $b$. In either case, $p_i$ disseminates all relevant certificates (lines~\ref{Ndis} and~\ref{2dis}), which are received by all correct processors by $t'$. By the induction hypothesis, all correct processors have entered view $v-1$ by $t'$, and so they all enter view $v$ by $t'$.
\end{proof}

\begin{lemma}[Progression through views] \label{progress} Every correct processor enters every view $v\in \mathbb{N}_{\geq 1}$.
\end{lemma}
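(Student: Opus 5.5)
We argue by induction on $v$, the case $v=1$ being immediate. Suppose, towards a contradiction, that some correct processor never leaves view $v$, while (by the induction hypothesis) every correct processor enters $v$. By Lemma~\ref{timev}, it then suffices to show that \emph{some} correct processor leaves view $v$. So assume that no correct processor ever leaves view $v$. Then all correct processors remain in view $v$ forever, and their local timers $\mathtt{T}$ run without being reset.

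\textbf{Case split on stage-2 votes.} Following property~(b) of Section~\ref{lower}, we consider the number of correct processors that disseminate a stage-2 vote for a view $v$ block.

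If at least $f+1$ correct processors do so, then by Lemma~\ref{singlevote} and Lemma~\ref{1certlem} they all vote for the same block $b$. This holds because each correct stage-2 voter first put $b$ in $\mathtt{blocks}^*$ and so saw a stage-1 notarisation, and at most one view $v$ block has one. Their $f+1$ votes are disseminated, so every correct processor eventually holds an M-certificate for $b$. To conclude that $b$ enters $\mathtt{blocks}$, we also need its parent in $\mathtt{blocks}$. Condition~(iii) of $\mathtt{blocks}^*$ gives that the parent was in the voter's $\mathtt{blocks}$. That voter disseminated the M-certificates for the parent and for all of its ancestors (line~\ref{2dis}), so after GST every correct processor eventually receives them. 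Hence every correct processor adds $b$ to $\mathtt{blocks}$ and leaves view $v$, a contradiction.

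Otherwise, at most $f$ correct processors ever send a stage-2 vote in view $v$. The remaining at least $n-2f\geq 2f+1$ correct processors never set $2\mathtt{voted}(v)$ to true. Since they stay in view $v$ forever, their timer reaches $2\Delta$ (non-initial view) or $3\Delta$ (initial view). At that point clause~(b) or~(d) of $\mathtt{TimeoutReady}$ holds, so if $\mathtt{nullified}(v)$ is still false they send nullify$(v)$. Alternatively, they already nullified $v$ earlier: either within view $v$, or via the superview-wide nullification in line~\ref{time-out} while in an earlier view of the same superview. In every case, $2f+1$ nullify$(v)$ messages from correct processors are eventually delivered. Each correct processor then forms an N-certificate for $v$ and leaves the view, again a contradiction.

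\textbf{Main obstacle.} The step needing the most care is the bookkeeping in the second case. We must check that a processor which nullified $v$ while in an earlier view of the superview still counts, and that a processor which set $\mathtt{nullified}(v)$ true without ever entering $v$ is consistent with the assumption. The latter is harmless, because messages are still sent and delivered after GST. In the first case, we must confirm that ancestors' M-certificates are actually propagated, which line~\ref{2dis} guarantees because each certificate is disseminated when new.
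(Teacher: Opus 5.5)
Your proof is correct and follows essentially the same route as the paper. You reduce via Lemma~\ref{timev} to showing that some correct processor leaves view $v$, then split on whether at least $f+1$ correct processors send stage-2 votes, obtaining an M-certificate in one case and, since $n-2f\geq 2f+1$, an N-certificate in the other. The differences are minor: you get the ancestors' M-certificates from condition~(iii) of $\mathtt{blocks}^*$ for a stage-2 voter rather than from the votability condition for stage-1 voters, and your claim that every correct stage-2 voter went through $\mathtt{blocks}^*$ holds because voting via line~\ref{vote2b} would immediately move that processor to view $v+1$.
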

\begin{proof}  Towards a contradiction, suppose that some correct processor $p_i$ enters view $v$, but never enters view $v+1$.  From Lemma \ref{timev}, it follows that:
\begin{itemize}
\item All correct processors enter view $v$;
\item No correct processor leaves view $v$.
\end{itemize}
Suppose first that at least $f+1$ correct processors disseminate stage-2 votes while in view $v$. No correct processor disseminates a stage-2 vote for any block $b$ before either receiving a stage-1 notarisation for $b$, or else adding $b$ to $\mathtt{blocks}$ (meaning that some correct processor has already disseminated a stage-2 vote for $b$). It follows from Lemma \ref{1certlem} that all stage-2 votes disseminated by correct processors while in view $v$ are for the same block $b$. The block $b$ therefore receives a stage-2 M-certificate, which is received by all correct processors. Since $b$ receives a stage-1 notarisation, and no correct processor would disseminate a stage-1 vote for $b$ without receiving and disseminating stage-2 M-certificates for all non-genesis ancestors of $b$, it follows that all correct processors enumerate $b$ into $\mathtt{blocks}$ and proceed to view $v+1$. This contradicts the claim that no correct processor leaves view $v$. 

So, suppose instead that at most $f$ correct processors disseminate stage-2 votes during view $v$. In this case, $\mathtt{TimeoutReady}$ will eventually be true for at least $n-2f$ correct processors. Those processors will therefore disseminate nullify$(v)$ messages. Since $n-2f \geq 2f+1$, all correct processors will receive an N-certificate for view $v$ and leave the view. This gives the required contradiction. 
\end{proof}

\begin{lemma}[Correct leaders finalise blocks] \label{L1} 
Suppose view $v$ is initial. If $p_i=\mathtt{lead}(v)$ is correct, and if the first correct processor to enter view $v$ does so at $t\geq \text{GST}$, then all correct processors receive a stage-2 certificate for some view $v$ block $b$ by $t+4\delta$, and also enter view $v+1$ by this time.  It also holds that no correct processor disseminates a nullify$(v)$ message in this case.  
\end{lemma}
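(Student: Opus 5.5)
We follow the Simplex timeline argument of Section~\ref{simplex}, tracking which timeout clauses of $\mathtt{TimeoutReady}$ apply to an initial view. Let $t\geq \text{GST}$ be the time the first correct processor enters view $v$. By Lemma~\ref{timev}, every correct processor, and in particular the leader $p_i$, enters view $v$ by $t+\delta$. Each correct processor's timer for view $v$ starts at some time $\geq t$. Since $v$ is initial, only clauses (c) and (d) can fire, and they fire no earlier than $t+2\Delta$ and $t+3\Delta$ respectively. Most of the work is checking that each correct processor has already set $1\mathtt{voted}(v)$ by timer value $2\Delta$, and $2\mathtt{voted}(v)$ by timer value $3\Delta$.

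\textbf{Step 1 (proposal and votability).} Upon entering view $v$, $p_i$ runs ProposeBlock at once (Case 1 of $\mathtt{ProposeReady}$). Its parent $b^*=\mathtt{b}$ is the block $p_i$ last added to $\mathtt{blocks}$ when advancing views. Every view strictly between $b^*.\text{view}$ and $v$ was left via an N-certificate: by the code, $\mathtt{b}$ is updated whenever a view is left through $\mathtt{blocks}$. By lines~\ref{Ndis} and~\ref{2dis}, $p_i$ disseminated the M-certificates for $b^*$ and its ancestors, and those N-certificates, no later than it entered $v$, so by $t+\delta$.

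Hence every correct $p_j$ receives its certified fragment of $b$ and everything needed to make it votable by $t+2\delta$. At that point $p_j$ is in view $v$: it entered by $t+\delta$, and it cannot have left, since leaving requires an N-certificate or an M-certificate for a view $v$ block. We argue the former cannot arise below, and the latter would have to be for $b$ itself, which is harmless. So $p_j$ stage-1 votes for $b$ and echoes its fragment by $t+2\delta\le t+2\Delta$, before clause (c) could fire.

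\textbf{Step 2 (reconstruction and stage-2).} By $t+3\delta$, each correct processor has the stage-1 votes of all $n-f\geq n-\tfrac{3}{2}f$ correct processors. It also has fragments from all $n-f-1$ correct non-leaders, which is exactly $k=g(\mathtt{S},v)=n-f-1$ since $v$ is initial. Decoding succeeds because the leader encoded honestly. So $b\in\mathtt{blocks}^*$, and $p_j$ sends a stage-2 vote at line~\ref{vote2} by $t+3\delta\le t+3\Delta$, before clause (d) could fire; this needs $\mathtt{nullified}(v)$ still false, which the next step secures. At $t+4\delta$, all correct processors hold $n-f$ stage-2 votes, and so a stage-2 certificate and an M-certificate for $b$. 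Since $b$'s parent is in $\mathtt{blocks}$, $b$ enters $\mathtt{blocks}$ and they move to view $v+1$.

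\textbf{Step 3 (no nullify; main obstacle).} The subtle point is the claim that no correct processor sends nullify$(v)$. Besides its own timer, a processor can send nullify$(v)$ by timing out in an earlier view $v'<v$ of the same superview. But $v$ is the initial view of its superview, so no earlier view shares it, and the loop at line~\ref{time-out} only covers views $\ge\mathtt{v}$ within the current superview. So nullify$(v)$ can come only from clauses (c) or (d) while in view $v$. Steps 1--2 show that each correct processor sets $1\mathtt{voted}(v)$ before timer value $2\Delta$ and $2\mathtt{voted}(v)$ before $3\Delta$, so neither clause fires.

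Consequently no N-certificate for view $v$ forms: that needs $2f+1$ nullify messages, but at most $f$ come from Byzantine processors. This closes the circularity used in Step 1. Care is also needed with timing at boundaries, e.g.\ ensuring votes occur strictly before the timer equals $2\Delta$ or $3\Delta$. Where $\delta=\Delta$ is tight, this relies on the rule that messages received and conditions checked within the same timeslot are processed in line order, so voting (line~\ref{vote1}) precedes the timeout check.
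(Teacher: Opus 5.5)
Your timeline is the same as the paper's. Your Step~3 observation is correct and more explicit than the paper: since $v$ is initial, nullify$(v)$ can only come from clauses (c) and (d) while in view $v$.

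\textbf{The gap.} It lies in how you handle a correct processor that leaves view $v$ on an M-certificate for $b$. You raise this possibility and call it harmless. You then still conclude that every correct $p_j$ stage-1 votes and echoes its fragment by $t+2\delta$, and Step~2 counts stage-1 votes from all $n-f$ correct processors and fragments from all $n-f-1$ correct non-leaders. That count fails in the early-exit case. Lines~\ref{evotecheck}--\ref{fdis} only act on the current view $\mathtt{v}$, so a processor that has already moved to $v+1$ via line~\ref{enewview2} never stage-1 votes for $b$ and never echoes its fragment. This can really happen, because individual delays may be much shorter than $\delta$:
\begin{itemize}
\item one correct processor can collect $k=n-f-1$ fragments with the help of selective Byzantine echoes;
\item its stage-2 vote, together with $f$ Byzantine ones, forms an M-certificate;
\item that certificate can reach other correct processors before their own fragments from the leader do.
\end{itemize}
If enough correct processors exit this way, the remaining ones need not collect $n-f-1$ fragments. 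Your route into $\mathtt{blocks}^*$ then breaks for them.

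\textbf{The fix.} What is missing is the paper's case split on whether some correct processor holds an M-certificate for $b$ by $t+2\delta$.
\begin{itemize}
\item If one does, it disseminates the certificate at line~\ref{2dis}, so every correct processor has it by $t+3\delta$. By your Step~1, $b$'s parent is already in every correct processor's $\mathtt{blocks}$ by $t+2\delta$. So $b$ enters $\mathtt{blocks}$, and each correct processor stage-2 votes at line~\ref{vote2b} on doing so (if it has not already voted at line~\ref{vote2}); this uses that $\mathtt{nullified}(v)$ is false.
\item If none does, then no correct processor has left view $v$ by $t+2\delta$. All of them therefore stage-1 vote and echo, and your Step~2 count is valid.
\end{itemize}
In both cases every correct processor stage-2 votes by $t+3\delta$, and the rest of your argument goes through unchanged.
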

\begin{proof} 
Suppose the conditions in the statement of the lemma hold. From Lemma~\ref{timev}, it follows that all correct processors (including $p_i$) enter view $v$ by $t+\delta$. Since $v$ is initial, $\mathtt{ProposeReady}$ is true when $p_i$ enters view $v$, so $p_i$ proposes a new block $b$ by $t+\delta$, and all correct processors receive their certified fragments of $b$ by $t+2\delta$. 

Let $b'$ be the parent of $b$ and set $v':=b'.\text{view}$. Since all ancestors of $b'$ (including $b'$) are in $p_i$'s local value $\mathtt{blocks}$ when $p_i$ proposes $b$, and since $p_i$ disseminates new stage-2 M-certificates (line~\ref{2dis}), every correct processor has all ancestors of $b'$ in their local value $\mathtt{blocks}$ by $t+2\delta$. Since $p_i$ has entered view $v$ by $t+\delta$, it must also have received N-certificates for all views in the open interval $(v',v)$ by this time. Since $p_i$ disseminates new N-certificates (line~\ref{Ndis}), all correct processors receive these by $t+2\delta$. 

All correct processors therefore have a votable fragment for $b$ (as defined in Section~\ref{formal}) by $t+2\delta$.   Since $\delta \leq \Delta$, this occurs before $\mathtt{TimeoutReady}$ becomes true via clause (c) in the definition of that predicate. Suppose first that no correct processor receives a stage-2 M-certificate for $b$ by $t+2\delta$. Then all correct processors disseminate stage-1 votes for $b$ (line~\ref{vote1}) by this time. In Section~\ref{formal}, we stipulated that, when $v$ is initial, the reconstruction parameter $k$ should be set to $k:=n-f-1$. Since all correct processors other than $p_i$ disseminate their certified fragments of $b$ upon voting (line~\ref{fdis}), all correct processors receive at least $n-f-1$ certified fragments of $b$, as well as a stage-1 notarisation for $b$, by $t+3\delta$. They therefore add $b$ to $\mathtt{blocks}^*$ by this time, and so disseminate stage-2 votes for $b$ by this time (either via line \ref{vote2} or \ref{vote2b}). 

If any correct processor receives a stage-2 M-certificate for $b$ by $t+2\delta$, it disseminates that certificate (line \ref{2dis}), and so all correct processors receive it by $t+3\delta$. All correct processors therefore  disseminate stage-2 votes for $b$ by this time (either via line \ref{vote2} or \ref{vote2b}). 

All correct processors therefore disseminate stage-2 votes for $b$ by $t+3\delta$. Since $\delta \leq \Delta$, this occurs before $\mathtt{TimeoutReady}$ becomes true via clause (d). All correct processors therefore receive a stage-2 certificate for $b$ by $t+4\delta$. They also receive a stage-2 M-certificate for $b$ and add $b$ to $\mathtt{blocks}$, and so enter view $v+1$ by this time. 
\end{proof}

For the following lemma, it is useful to introduce some terminology. Consider a view $v$ with correct leader $p_i$, and suppose $p_i$ proposes a view $v$ block $b$. Let $k$ be the reconstruction parameter for $b$, and let $I$ be the set of processors $p_j$ that disseminate a certified fragment of $b$ at $j$ upon receiving a votable fragment for view $v$. If $|I|\geq k$, then we say \emph{fragments are well disseminated in view $v$}. 

\begin{lemma}[Correct leaders finalise many blocks] \label{L2}
Suppose view $v$ is not initial, $p_i=\mathtt{lead}(v)$ is correct, and let $v_0$ be the first view in the same superview as $v$. Suppose the first correct processor to enter view $v_0$ does so at or after $\text{GST}$, and the first correct processor to enter view $v$ does so at $t$. Suppose further that fragments are well disseminated in all views in the interval $(v_0,v]$. Then all correct processors receive a stage-2 certificate for some view $v$ block by $t+3\delta$, and enter view $v+1$ by this time. Furthermore, no correct processor disseminates a nullify$(v)$ message.
\end{lemma}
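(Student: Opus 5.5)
I will prove the statement by induction on the position of $v$ within its superview, i.e. on $v-v_0$, with Lemma~\ref{L1} supplying the base case $v_0$. The induction hypothesis for a non-initial view $u\in(v_0,v]$ is the conclusion of the lemma for $u$. Suppose the first correct processor enters $u$ at $t_u$. Then all correct processors hold a stage-2 certificate for some view $u$ block by $t_u+3\delta$, all have entered view $u+1$ by then, and no correct processor sends nullify$(u)$.

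From the hypothesis for $u-1$, I extract three facts.
\begin{itemize}
\item Every view in $[v_0,u)$ has no nullify message from a correct processor.
\item All correct processors enter $u$ by $t_u+\delta$, by Lemma~\ref{timev} and the fact that $t_u\geq\text{GST}$.
\item The leader $p_i$ has proposed the view $u$ block $b$ before anyone enters $u$. This is the $\mathtt{ProposeReady}$ assumption: some correct processor enters $u$ upon adding the view $u-1$ block to $\mathtt{blocks}$, since no N-certificate for $u-1$ can exist, as fewer than $2f+1$ processors could send nullify$(u-1)$.
\end{itemize}

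\textbf{Proposal delivery.} Since $p_i$ proposes $b$ by time $t_u$, every correct processor receives its certified fragment of $b$ by $t_u+\delta$. The parent of $b$ is the view $u-1$ block $b'$ that $p_i$ itself proposed. The views in the open interval $(b'.\text{view},u)$ are empty, so no N-certificates are needed. By the hypothesis for $u-1$, every correct processor has $b'$ (with its ancestors) in $\mathtt{blocks}$ by $t_{u-1}+3\delta$.

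\textbf{Timing.} I need $b'\in\mathtt{blocks}$ by $t_u+\delta$. Here I use that the first processor to enter $u$ does so upon adding $b'$ to $\mathtt{blocks}$ and disseminating the M-certificates. Hence every correct processor holds $b'$ in $\mathtt{blocks}$ by $t_u+\delta$, so the fragment is votable for view $u$ by $t_u+\delta$ for every correct processor in view $u$.

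\textbf{Clause (a).} Non-initial views time out under clause (a) only at $\mathtt{T}=\Delta$ after entry, and processors enter $u$ no earlier than $t_u$. So every correct processor votes before clause (a) fires. The only exception is a processor that already added $b$ to $\mathtt{blocks}$, which sends its stage-2 vote via line~\ref{vote2b} anyway.

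\textbf{Stage-1 voting and reconstruction.} All correct processors send stage-1 votes and echo their fragments by $t_u+\delta$. Every correct processor therefore receives a stage-1 notarisation (at least $n-f\geq n-\frac32 f$ votes) by $t_u+2\delta$. Here the well-dissemination hypothesis is essential. With reconstruction parameter $k$ possibly up to $n-1$, the correct processors' fragments alone may not suffice. But "fragments are well disseminated in view $u$" gives at least $k$ disseminated certified fragments, all consistent with the honest leader's tag, so Decode succeeds. Thus $b\in\mathtt{blocks}^*$ by $t_u+2\delta$.

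\textbf{Stage-2 and finalisation.} Stage-2 votes are sent by $t_u+2\delta<t_u+2\Delta$, so clause (b) does not fire first. No correct processor sends nullify$(u)$, since clause (a) and clause (b) both require the corresponding vote flag to be false. All correct processors then receive $n-f$ stage-2 votes, hence a stage-2 certificate and an M-certificate, by $t_u+3\delta$. They add $b$ to $\mathtt{blocks}$ and enter $u+1$ by $t_u+3\delta$.

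\textbf{Main obstacle.} I expect the hardest part to be the votability timing. The concern is a processor entering $u$ (for instance via the leader's own earlier view) in a way that lets its $\Delta$ timer run out before it holds $b'$ in $\mathtt{blocks}$. I would settle this by using Lemma~\ref{timev} together with the observation that every correct entry into $u$ happens through the M-certificate for $b'$, or through an N-certificate that cannot exist.

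A second subtlety is the nullify messages sent "for all $v\geq\mathtt{v}$ in the same superview" after a timeout in an earlier view. These are excluded because the hypothesis for earlier views says no correct processor times out there. Equivalently, clause (a) of the timeout is never triggered in any view of $[v_0,v]$.
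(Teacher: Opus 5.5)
Your proof follows the paper's argument essentially step for step. The paper likewise uses induction over the views of the superview with Lemma~\ref{L1} as base case, Lemma~\ref{timev} for entry by $t_u+\delta$, and the $\mathtt{ProposeReady}$ assumption (no N-certificate for $u-1$ can exist) to get the proposal out before $t_u$. It then has votability by $t_u+\delta$, well dissemination giving $b\in\mathtt{blocks}^*$ by $t_u+2\delta$, and a stage-2 certificate by $t_u+3\delta$. Your ``main obstacle'' is in fact immediate: any correct processor in view $u$ entered it by adding $b'$ to $\mathtt{blocks}$, so its fragment is votable as soon as it arrives.

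There is one step that fails as written.

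\begin{itemize}
\item \emph{The faulty inference.} You note that a processor which has already added $b$ to $\mathtt{blocks}$ is an exception to stage-1 voting. You then assert that all correct processors send stage-1 votes by $t_u+\delta$, and hence that every correct processor holds a stage-1 notarisation by $t_u+2\delta$.
\item \emph{Why it fails.} With uneven delays, an M-certificate for $b$ can form from fast correct voters plus Byzantine votes. A correct processor that receives it before its own fragment adds $b$ to $\mathtt{blocks}$ and moves to view $u+1$. It therefore never sends a stage-1 vote for $b$, since stage-1 votes are only sent for the current view $\mathtt{v}$. If enough correct processors do this, the remaining ones cannot count on $\lceil n-\frac{3}{2}f\rceil$ stage-1 votes. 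Your exception covers only the jumping processor's own stage-2 vote, not this knock-on effect on everyone else.
\item \emph{The fix: the paper's case split.} If some correct processor receives an M-certificate for $b$ by $t_u+\delta$, it disseminates it as new (line~\ref{2dis}), so every correct processor receives it by $t_u+2\delta$. Each already holds $b'$ in $\mathtt{blocks}$, so it adds $b$ and sends its stage-2 vote via line~\ref{vote2b} before clause (b) can fire. Otherwise, no correct processor leaves view $u$ before stage-1 voting, and your argument goes through unchanged.
\end{itemize}
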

\begin{proof} 
Suppose the conditions in the statement of the lemma hold. The proof is by induction on $v$. Suppose the claim holds for all non-initial views $v' < v$ in the same superview as $v$. Combined with Lemma~\ref{L1}, this means that no correct processor disseminates a nullify$(v')$ message for any $v' \in [v_0, v)$, and hence no correct processor disseminates a nullify$(v)$ message prior to entering view $v$. 

From Lemma~\ref{timev}, it follows that all correct processors enter view $v$ by $t+\delta$. By the induction hypothesis (or by Lemma~\ref{L1} if $v = v_0 + 1$), they must do so upon adding a view $v-1$ block to $\mathtt{blocks}$. In Section~\ref{formal}, we specified that $\mathtt{ProposeReady}$ must satisfy the following condition: there exists $t' < t$ such that $\mathtt{ProposeReady}=\text{true}$ as locally defined for $p_i$ at $t'$, and $p_i$ proposes a block $b$ for view $v$ at $t'$. It follows that all correct processors receive a votable fragment for view $v$ by $t+\delta$. Since $\delta \leq \Delta$, this occurs before $\mathtt{TimeoutReady}$ becomes true via clause (a). 

Suppose first that no correct processor receives a stage-2 M-certificate for $b$ by $t+\delta$. Then all correct processors  disseminate stage-1 votes for $b$ by $t+\delta$. 
Since fragments are well disseminated in view $v$, at least $k$ processors disseminate their certified fragments of $b$ by $t+\delta$, where $k$ is the reconstruction parameter for $b$. All correct processors therefore receive a stage-1 notarisation for $b$ and at least $k$ certified fragments of $b$ by $t+2\delta$, and so add $b$ to $\mathtt{blocks}^*$ by this time. Since $\delta \leq \Delta$, this occurs before $\mathtt{TimeoutReady}$ becomes true via clause (b). All correct processors therefore disseminate stage-2 votes for $b$ by $t+2\delta$ (via either line \ref{vote2} or \ref{vote2b}). 

If any correct processor receives a stage-2 M-certificate for $b$ by $t+\delta$, then all correct processors receive it by $t+2\delta$. Once again, this means all correct processors disseminate stage-2 votes for $b$ by $t+2\delta$ (via either line \ref{vote2} or \ref{vote2b}). 

In either case, all  correct processors receive a stage-2 certificate for $b$ by $t+3\delta$. They also receive a stage-2 M-certificate for $b$ and add $b$ to $\mathtt{blocks}$, and so enter view $v+1$ by this time, as claimed.
\end{proof}

\begin{lemma}[Liveness] \label{liveness}
Carnot 1 satisfies Liveness.
\end{lemma}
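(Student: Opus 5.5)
The plan is to show that any transaction $\text{tr}$ received by a correct processor $p_j$ is eventually in $\mathcal{F}(M_c(t))$ for some $t$. We use Lemma~\ref{progress} to guarantee that every view is reached, and Lemmas~\ref{L1} and~\ref{L2} to guarantee that correct leaders finalise blocks after GST. Since leaders rotate between superviews via $\mathtt{lead}(v)=p_{j+1}$ with $j\equiv w \bmod n$, there are infinitely many superviews whose leader is correct. Among these, we choose an initial view $v$ whose superview begins after GST (first correct entry at or after GST) and after the time at which $p_j$ received $\text{tr}$. Since every processor, and in particular the leader, receives transactions from the environment, we should treat $\text{tr}$ as received by $p_j$. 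There is a subtlety here: liveness requires that some leader includes $\text{tr}$. I would handle this as in the Simplex-style argument, by observing that the definition of Liveness only speaks of $p_i$ receiving $\text{tr}$. So I would pick a superview whose leader is $p_j$ itself, which happens once every $n$ superviews because $p_j$ is correct.

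Next, I apply Lemma~\ref{L1} to the initial view $v$ of that superview. All correct processors receive a stage-2 certificate for the leader's block $b$ by $t+4\delta$. The leader $p_j$ forms the payload $C$ of $b$ from all received transactions not known to be included in ancestors of the parent $b^*$. Hence either $\text{tr}$ is in $C$, or $\text{tr}$ is already included in the payload of an ancestor of $b^*$. In both cases, $\text{tr}$ lies in the payload of $b$ or one of its ancestors.

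It remains to show that $b$ and all its ancestors satisfy the two conditions in the definition of $\mathcal{F}$ with respect to $M_c(t)$ for large $t$.

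\emph{Stage-2 certificate condition.} $b$ has a stage-2 certificate received by correct processors, and every ancestor of $b$ lies on the chain below it. So the condition ``some $j\ge i$ has a certificate'' is met by $b$ itself.

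\emph{Fragment condition.} Every non-genesis ancestor $b'$ of $b$ is in $p_j$'s $\mathtt{blocks}$, so it has a stage-2 M-certificate. By the argument of Lemma~\ref{2imp1}, at least one correct processor sent a stage-2 vote for $b'$. That processor had $b'\in\mathtt{blocks}^*$, so it had received $k$ certified fragments of $b'$ that decode correctly. Those messages therefore lie in $M_c(t)$, and the decoding condition holds. The same argument applies to $b$ itself, since correct processors stage-2 voted for it.

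\emph{Uniqueness of the longest sequence.} We also need $\mathcal{F}(M_c(t))$ not to collapse to the empty sequence because of non-uniqueness. I expect this to be the main obstacle. I would argue that any qualifying chain consists of blocks each having a descendant with a stage-2 certificate. The Consistency proof shows that every stage-2 certified block is an ancestor of every later stage-1 notarised block, so all such chains are linearly ordered by the ancestor relation. Since any two chains are then prefixes of one another, the longest is unique. Finally, monotonicity of $M_c(t)$ together with Consistency gives that $\text{tr}\in\mathcal{F}(M_c(t))$ for all sufficiently large $t$.
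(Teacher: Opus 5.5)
Your proposal is correct and follows the paper's proof essentially step for step. You choose an initial view led by the processor that received $\text{tr}$, entered after GST, apply Lemma~\ref{L1}, and then use the M-certificates on $b$ and its ancestors to get the fragment condition in $M_c(t)$. For that step the decoder must be the \emph{first} correct stage-2 voter, which is what the argument of Lemma~\ref{2imp1} gives you; later correct voters may vote via line~\ref{vote2b} without ever decoding. Your extra argument for uniqueness of the longest sequence in $\mathcal{F}$ is sound (reading the sequences as starting at $b_{\text{gen}}$), and it fills in a detail the paper's proof leaves implicit.
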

\begin{proof}
Suppose correct processor $p_i$ receives the transaction $\text{tr}$. By Lemma~\ref{progress}, all correct processors enter every view. Let $v$ be an initial view such that $p_i = \mathtt{lead}(v)$ and the first correct processor to enter view $v$ does so at or after $\text{GST}$, and after $p_i$ has received $\text{tr}$.  By Lemma~\ref{L1}, all correct processors receive a stage-2 certificate for some view $v$ block $b$ proposed by $p_i$.

From the definition of the ProposeBlock procedure,  $\text{tr}$ is included in the payload of $b$ or an ancestor of $b$. It remains to show that the payloads of $b$ and all ancestors of $b$ can be reconstructed from messages received by correct processors, i.e., that $\text{tr} \in \mathcal{F}(M_c(t))$ for some $t$.

 Each non-genesis ancestor $b'$ of $b$ (including $b$) is added to $ \mathtt{blocks}$ by every correct processor. By the definition of $\mathtt{blocks}$, $b'$ receives  a stage-2 M-certificate, which means at least $f+1$ processors disseminate stage-2 votes for $b'$ before seeing a stage-2 M-certificate for $b'$. If it has not already received a stage-2 M-certificate, no correct processor disseminates a stage-2 vote for $b'$ without first adding $b'$ to $\mathtt{blocks}^*$ (line~\ref{13}), which by condition (ii) in the definition of $\mathtt{blocks}^*$ requires successfully decoding the payload of $b'$. Thus at least one correct processor has received sufficient certified fragments to reconstruct the payload of $b'$, and these fragments are included in $M_c(t)$ for sufficiently large $t$.

By the definition of $\mathcal{F}$ in Section~\ref{formal}, it follows that $\text{tr} \in \mathcal{F}(M_c(t))$ for some $t$, as required.
\end{proof}

\subsection{Message and communication complexity}  \label{mc1} 
Within each view, each processor sends a constant-bounded number of messages to all others, meaning that message complexity is $O(n^2)$ per view. This is true for most standard protocols (PBFT, Tendermint, Simplex etc), with the notable exception of Hotstuff, which relays all messages via the leader, and so trades communication complexity for round complexity. 

\vspace{0.2cm} 
\noindent To analyse the communication complexity within each view, suppose hash values and signatures are of constant-bounded length. Then certified fragments for a block with payload of size $B$ are of size $O(Bd/n+\log n)$, where $d$ is the data expansion rate. So, the  leader sending certified fragments to all other processors induces communication complexity $O(Bd+n \log n)$. Then all other processors must disseminate their own fragments, inducing further communication complexity $O(Bdn+n^2 \log n)$. As analysed in Section \ref{pipes_analysis}, this round of fragment echoing not does really impact latency any more than the leader's initial sending of the block proposal, because the communication cost is shared between the processors. 

\vspace{0.2cm} 
\noindent In the formal specification  of Section \ref{2spec}, a vote for $b$ includes $b$. Replacing $ b$ with its hash value leads to votes of constant-bounded size, if hash values and signatures are of fixed length. Similarly, certificates are of constant-bounded length if hash lengths are fixed. So, the dissemination of these messages contributes communication complexity $O(n^2)$ per view. Whether or not votes include $b$ or the hash of $b$, the overall communication complexity per view is  $O(Bdn+n^2 \log n)$.

\section{Carnot 2: Analysis in the standard model} \label{anal3} 

In this section, we assume $n\geq 3f+1$, and consider the standard model of partial synchrony from Section \ref{setup}. 

\subsection{Consistency} \label{Ceconsec}
The proof of consistency follows a similar structure to that for  Carnot~1.

\begin{lemma}[One vote per view] \label{Csinglevote}
Correct processors disseminate at most one stage-1 vote and at most one stage-2 vote in each view.
\end{lemma}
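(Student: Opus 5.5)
The argument inspects the pseudocode of Algorithm~\ref{alg3}, exactly as in the proof of Lemma~\ref{singlevote} for Carnot~1. Since the stage-1 and stage-2 votes are handled by separate lines, the two cases are treated separately. For each, I identify every line at which a correct processor $p_i$ can disseminate a vote of that stage for a view $v$ block. I then check that the line is guarded by the relevant flag being false, and that the flag is set to true in the same step.

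\emph{Stage-1 votes.} These are disseminated only at line~\ref{2vote1}. That line is reached only inside the loop of line~\ref{2votecheck}, which ranges over views $v$ in superview $\mathtt{w}$ with $1\mathtt{voted}(v)=\text{false}$. Upon disseminating, $p_i$ immediately sets $1\mathtt{voted}(v):=\text{true}$. No instruction ever resets this flag to false, so $p_i$ disseminates at most one stage-1 vote per view. This holds even though the loop processes several views per timeslot, because the flag is indexed by $v$.

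\emph{Stage-2 votes.} These are disseminated only at line~\ref{2vote2}, in the loop over new blocks $b\in\mathtt{blocks}$. The guard there is $\mathtt{nullified}(v)=2\mathtt{voted}(v)=\text{false}$ with $v=b.\text{view}$, and $2\mathtt{voted}(v):=\text{true}$ is set in the same step.

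The only point needing care is that several new blocks of the same view $v$ could, in principle, be added to $\mathtt{blocks}$ in a single timeslot. The loop processes them sequentially, so once the first such block triggers a vote, the flag blocks every subsequent one. Again no instruction resets $2\mathtt{voted}(v)$. I expect no real obstacle beyond confirming, by scanning Algorithms~\ref{alg3} and~\ref{alg4}, that no other line (in particular none in the recovery-fragment procedure) sends votes.
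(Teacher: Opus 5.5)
Your proposal is correct and follows essentially the same route as the paper. You identify line~\ref{2vote1} and line~\ref{2vote2} of Algorithm~\ref{alg3} as the only places where votes are disseminated, and observe that each is guarded by the flag $1\mathtt{voted}(v)$ or $2\mathtt{voted}(v)$ being false, which is then set to true and never reset. Your remarks about several views, or several same-view blocks, being handled within one timeslot are sound details that the paper leaves implicit.
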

\begin{proof}
For stage-1 votes, note that a correct processor $p_i$ only disseminates a stage-1 vote for a view $v$ block in line~\ref{2vote1}, and only if $1\mathtt{voted}(v) = \text{false}$. Upon doing so, $p_i$ sets $1\mathtt{voted}(v) := \text{true}$. For stage-2 votes, observe that $p_i$ only disseminates a stage-2 vote for a view $v$ block in line~\ref{2vote2}, and only if $2\mathtt{voted}(v) = \text{false}$. Upon doing so, $p_i$ sets $2\mathtt{voted}(v) := \text{true}$.
\end{proof}

\begin{lemma}[No stage-2 vote and nullify] \label{Cstage2ornull}
No correct processor both disseminates a stage-2 vote for a view $v$ block and sends a \emph{nullify}$(v)$ message.
\end{lemma}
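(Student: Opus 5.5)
The plan is to inspect Algorithm~\ref{alg3}, which is the only place nullify messages and stage-2 votes are sent, and show that the two flags $\mathtt{nullified}(v)$ and $2\mathtt{voted}(v)$ are mutually exclusive. This mirrors the argument for Lemma~\ref{stage2ornull}, but is simpler: Carnot~2 has no per-view variable $\mathtt{v}$, and both actions are guarded directly by per-view flags.

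\emph{Step 1: stage-2 votes.} A correct $p_i$ sends a stage-2 vote for a view $v$ block only at line~\ref{2vote2}. That line requires $\mathtt{nullified}(v)=2\mathtt{voted}(v)=\text{false}$, and $p_i$ sets $2\mathtt{voted}(v):=\text{true}$ in the same step.

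\emph{Step 2: nullify messages.} A correct $p_i$ sends a nullify$(v)$ message only at line~\ref{2time-out}, for views $v'\geq v$ in the current superview. That line requires $\mathtt{nullified}(v')=2\mathtt{voted}(v')=\text{false}$, and $p_i$ sets $\mathtt{nullified}(v'):=\text{true}$ in the same step.

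\emph{Step 3: neither flag is ever reset.} Both flags start at false and are only ever set to true, never back to false.

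Combining the steps: whichever of the two actions for view $v$ happens first sets its flag. Each action requires the other action's flag to be false, so the second action is then blocked forever.

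The only delicate point is the case where both actions could be triggered in the same timeslot. Here I would appeal to the sequential execution of the pseudocode within a timeslot. The stage-2 vote block comes before the timeout block in Algorithm~\ref{alg3}, and flags are updated immediately upon each action. So a stage-2 vote sent earlier in the timeslot sets $2\mathtt{voted}(v)$, and the later timeout block then skips view $v$. The reverse order is handled by the same flag check, so no further case analysis is needed.
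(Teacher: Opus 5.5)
Your proof is correct and is the same argument the paper gives: each action (line~\ref{2vote2} or line~\ref{2time-out}) requires the other action's per-view flag, $\mathtt{nullified}(v)$ or $2\mathtt{voted}(v)$, to be false, and sets its own flag when taken. Your remarks that the flags are never reset and about ordering within a single timeslot spell out what the paper leaves implicit.
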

\begin{proof}
A correct processor $p_i$ only disseminates a nullify$(v)$ message  (line~\ref{2time-out}) if  $2\mathtt{voted}(v)=\text{false}$, and sets $\mathtt{nullified}(v):=\text{true}$ upon doing so.  Similarly, a correct processor only disseminates a stage-2 vote for a view $v$ block (line \ref{2vote2}) if $\mathtt{nullified}(v)=\text{false}$, and sets $2\mathtt{voted}(v):=\text{true}$ upon doing so. 
\end{proof}

\begin{lemma}[Unique stage-1 notarisation per view] \label{C1certlem}
If block $b$ receives a stage-1 notarisation, then no block $b' \neq b$ with $b'.\text{view} = b.\text{view}$ receives a stage-1 notarisation.
\end{lemma}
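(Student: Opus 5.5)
The argument is the standard quorum intersection, exactly as in Lemma~\ref{1certlem} for Carnot~1, now using the Carnot~2 thresholds and Lemma~\ref{Csinglevote}. Suppose, towards a contradiction, that $b$ receives a stage-1 notarisation $Q$ and that some $b'\neq b$ with $b'.\text{view}=b.\text{view}=v$ receives a stage-1 notarisation $Q'$. Let $P$ and $P'$ be the sets of processors whose stage-1 votes make up $Q$ and $Q'$ respectively. By the definition of a stage-1 notarisation in Section~\ref{2spec}, each of $P$ and $P'$ has size at least $n-f$.

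Next I bound the intersection: $|P\cap P'|\geq (n-f)+(n-f)-n = n-2f \geq f+1$, using $n\geq 3f+1$. At most $f$ processors are Byzantine, so $P\cap P'$ contains at least one correct processor $p$. This processor disseminated a stage-1 vote for $b$ and a stage-1 vote for $b'$, which are two distinct stage-1 votes for view $v$ blocks. That contradicts Lemma~\ref{Csinglevote}.

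There is one point I will check rather than expect to be an obstacle. In Carnot~2, stage-1 votes are sent in line~\ref{2vote1} for views within the current superview, possibly out of order. The flag $1\mathtt{voted}(v)$, however, is still indexed by the individual view $v$, so Lemma~\ref{Csinglevote} applies per view irrespective of the order in which votes are cast. Since the Byzantine processors cannot forge signatures of correct processors (under the perfect cryptography assumption), each vote attributed to $p$ in $Q$ and $Q'$ was genuinely disseminated by $p$. This completes the argument.
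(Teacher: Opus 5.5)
Your proof is correct and matches the paper's argument. The quorum intersection bound $(n-f)+(n-f)-n = n-2f \geq f+1$ puts a correct processor in both notarisations, contradicting Lemma~\ref{Csinglevote}; your extra check that $1\mathtt{voted}(v)$ is indexed per view, so out-of-order voting within a superview is harmless, is a sound observation the paper leaves implicit.
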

\begin{proof}
This follows from Lemma~\ref{Csinglevote} by the standard quorum intersection argument. Suppose, towards a contradiction, that $b$ receives a stage-1 notarisation $Q$ and $b' \neq b$ with $b'.\text{view} = b.\text{view}$ receives a stage-1 notarisation $Q'$. Let $P$ and $P'$ be the sets of processors contributing to $Q$ and $Q'$, respectively. Then $|P \cap P'| \geq (n-f) + (n-f) - n = n - 2f \geq f + 1$. Thus $P \cap P'$ contains at least one correct processor, contradicting Lemma~\ref{Csinglevote}.
\end{proof}

\begin{lemma}[Stage-2 certificate precludes N-certificate] \label{C2certlem}
If block $b$ receives a stage-2 certificate, then view $v := b.\text{view}$ does not receive an N-certificate.
\end{lemma}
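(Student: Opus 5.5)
The plan is a quorum intersection argument, mirroring the proof of Lemma~\ref{2certlem} for Carnot~1 but with the Carnot~2 thresholds. Suppose, towards a contradiction, that $b$ receives a stage-2 certificate $Q$ and that view $v := b.\text{view}$ receives an N-certificate $Q'$. Let $P$ be the set of processors whose signature shares contribute to $Q$, and $P'$ the set whose shares contribute to $Q'$.

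The first step is to pin down the sizes. In Carnot~2 both certificates are $(n-f)$-of-$n$ threshold certificates: the stage-2 certificate is on $(\text{vote},b,2)$ and the N-certificate is on $(\text{nullify},v)$. Under the perfect-cryptography assumption of Section~\ref{setup}, forming such a certificate requires shares from at least $n-f$ distinct processors, so $|P|,|P'|\geq n-f$. Hence $|P\cap P'| \geq 2(n-f)-n = n-2f \geq f+1$, using $n\geq 3f+1$. So $P\cap P'$ contains at least one correct processor $p$.

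The final step is to note that $p$ has then both disseminated a stage-2 vote for a view $v$ block and sent a nullify$(v)$ message. This contradicts Lemma~\ref{Cstage2ornull}.

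There is no real obstacle. The only care needed is the step from a threshold certificate to $n-f$ distinct signers. This rests on the threshold scheme's unforgeability, together with the fact that each processor's share is attributable to that processor. The argument goes through unchanged if one phrases it in terms of notarisations and nullifications rather than certificates.
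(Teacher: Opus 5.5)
Your proposal is correct and follows essentially the same argument as the paper: both $(n-f)$-sized quorums intersect in at least $n-2f \geq f+1$ processors, so some correct processor both stage-2 voted for a view $v$ block and sent nullify$(v)$, contradicting Lemma~\ref{Cstage2ornull}. Your extra remark about deriving $n-f$ distinct signers from a threshold certificate under the perfect-cryptography assumption makes explicit a step the paper leaves implicit.
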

\begin{proof}
This follows from Lemma~\ref{Cstage2ornull} by the quorum intersection argument. Towards a contradiction, suppose $b$ receives a stage-2 certificate $Q$ and that  $v := b.\text{view}$ receives an N-certificate $Q'$.  Let $P$ and $P'$ be the sets of processors contributing to $Q$ and $Q'$, respectively. Then $|P \cap P'| \geq (n-f) + (n-f) - n =n-2f \geq f + 1$. Thus $P \cap P'$ contains at least one correct processor, contradicting Lemma~\ref{Cstage2ornull}.
\end{proof}


\begin{lemma}[Stage-2 certificates imply stage-1 notarisation] \label{C2imp1}
If a block $b$ receives a stage-2 certificate, then $b$ receives a stage-1 notarisation.
\end{lemma}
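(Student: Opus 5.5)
I will argue exactly as in the proof of Lemma~\ref{2imp1} for Carnot~1, adapted to the Carnot~2 pseudocode. Suppose $b$ receives a stage-2 certificate. A stage-2 certificate is an $(n-f)$-of-$n$ threshold certificate on $(\text{vote},b,2)$, so at least $n-f$ distinct processors produced stage-2 votes for $b$. Since at most $f$ processors are Byzantine and $n-f\geq 2f+1>f$, at least one correct processor disseminated a stage-2 vote for $b$. I fix one such correct processor $p_j$.

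The key step is to locate where $p_j$ can send this vote. In Algorithm~\ref{alg3}, a correct processor disseminates a stage-2 vote for a block $b$ only at line~\ref{2vote2}. That line is reached only inside the loop \emph{for each new} $b\in \mathtt{blocks}$. So $p_j$ must have added $b$ to its local value $\mathtt{blocks}=\mathtt{blocks}(\mathtt{S})$. Condition~(i) in the definition of $\mathtt{blocks}(M)$ then says that $\mathtt{S}$, the set of messages received by $p_j$, contains a stage-1 certificate for $b$.

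It remains to go from a stage-1 certificate to a stage-1 notarisation, in the sense of ``at least $n-f$ processors send stage-1 votes for $b$''. A stage-1 certificate contains an $(n-f)$-of-$n$ threshold certificate on $(\text{vote},b,1)$. By unforgeability of threshold signatures, which we treat as perfect, such a certificate can only be formed from $n-f$ shares by distinct processors. Each share is carried in a stage-1 vote for $b$, so $b$ receives a stage-1 notarisation. I would also note the degenerate case $b=b_{\text{gen}}$: it is in $\mathtt{blocks}$ by fiat, but no stage-2 vote is ever sent for it by a correct processor (it is never ``new''), so it needs no separate treatment.

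The only point needing care, and so the main obstacle, is checking that no other path lets a correct processor send a stage-2 vote without $b$ being in $\mathtt{blocks}$. Unlike Carnot~1, there is no analogue of line~\ref{vote2b}, so line~\ref{2vote2} is the unique path. The remaining subtlety is just the cryptographic assumption that a certificate implies the underlying votes.
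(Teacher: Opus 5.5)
Your proof is correct and follows the paper's argument. Some correct processor must have sent a stage-2 vote for $b$. In Carnot~2 this happens only at line~\ref{2vote2}, after $b$ has been added to $\mathtt{blocks}$, and condition~(i) of $\mathtt{blocks}$ requires a stage-1 certificate for $b$. You also spell out two steps the paper's one-line proof leaves implicit: that a correct stage-2 voter exists, and that a threshold certificate implies $n-f$ underlying stage-1 votes.
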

\begin{proof}
This follows since no correct processor disseminates a stage-2 vote for $b$ before receiving a stage-1 certificate for $b$ (required for addition to $\mathtt{blocks}$). 
\end{proof}

\begin{lemma}[Consistency]
Carnot  2 satisfies Consistency.
\end{lemma}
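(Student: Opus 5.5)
The plan follows the Carnot~1 consistency proof, adapted to the Carnot~2 definition of $\mathtt{blocks}$. The key difference is the source of the ``no inconsistent block can follow'' guarantee. In Carnot~2 it comes from condition~(iv) in the definition of $\mathtt{blocks}(M)$, not from votability, because stage-1 votes may now be cast before the parent is in $\mathtt{blocks}$.

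\textbf{Reduction to a statement about $\mathtt{blocks}$.} Consistency of $\mathcal{F}$ will follow once I show the following. For any messages $M \subseteq M^*$, if $b_1\in\mathtt{blocks}(M)$ with $v_1 := b_1.\text{view}$ receives a stage-2 certificate, then every $b_2 \in \mathtt{blocks}(M')$ (for any $M'\subseteq M^*$) with $b_2.\text{view}\ge v_1$ has $b_1$ as an ancestor.

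Given this, the finalised chains used to define $\mathcal{F}(M)$ are all comparable. Hence for $M_1\subseteq M_2$ the unique longest sequence for $M_1$ is a prefix of that for $M_2$, using that $\mathtt{blocks}(M)$ is monotone in $M$. I will need to check uniqueness of the longest sequence carefully: two maximal certified tips must lie on one chain by the claim applied to the lower-view one.

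\textbf{Main argument.} Suppose towards a contradiction that $b_2$ is a block in some $\mathtt{blocks}(M')$ of least view $v_2\ge v_1$ with $b_1$ not an ancestor of $b_2$.

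First, $v_2 \neq v_1$. By condition~(i), $b_2$ has a stage-1 certificate, and by Lemma~\ref{C2imp1} so does $b_1$. Lemma~\ref{C1certlem} then gives $b_2=b_1$, which is a contradiction.

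So $v_2>v_1$. Let $b_0$ be the parent of $b_2$. By condition~(iii), $b_0\in \mathtt{blocks}(M')$ and $b_0.\text{view}<v_2$. If $b_0.\text{view}\ge v_1$, minimality of $v_2$ makes $b_1$ an ancestor of $b_0$, hence of $b_2$, a contradiction. Hence $b_0.\text{view}<v_1<v_2$. Condition~(iv) then requires $M'$ to contain an N-certificate for view $v_1$, contradicting Lemma~\ref{C2certlem}.

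\textbf{Expected obstacle.} The main subtlety is the bookkeeping in the reduction: $\mathcal{F}$ is defined via $\mathtt{blocks}(M)$, which is built recursively, so the minimal-counterexample induction must be phrased over blocks in $\mathtt{blocks}(M')$ for arbitrary $M'\subseteq M^*$. I must also make sure the induction never relies on stage-1 votes implying that the parent is in $\mathtt{blocks}$, which fails in Carnot~2. Condition~(iii) ensures the parent is in $\mathtt{blocks}(M')$, and this keeps the induction within that set.
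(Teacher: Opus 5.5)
Your proposal is correct and follows essentially the same route as the paper. Both take a least-view counterexample $b_2$, rule out $v_2=v_1$ via Lemmas~\ref{C1certlem} and~\ref{C2imp1}, use clause~(iii) and minimality to force the parent's view below $v_1$, and then use clause~(iv) to demand an N-certificate for $v_1$, contradicting Lemma~\ref{C2certlem}. The differences are cosmetic: the paper works directly with $\mathtt{blocks}(M^*)$, which suffices by monotonicity, and leaves implicit the reduction from this claim to the prefix property of $\mathcal{F}$ (uniqueness of the longest chain), which you spell out explicitly.
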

\begin{proof}
Let $M^*$ be the set of all messages received by at least one processor during the protocol execution. Towards a contradiction, suppose there exist inconsistent blocks $b_1$ and $b_3$ in $\mathtt{blocks}(M^*)$, both of which receive a stage-2 certificate. Without loss of generality, suppose $b_3.\text{view}\geq b_1.\text{view}$, and set $v_1:=b_1.\text{view}$. Then there exists some least $v_2\geq v_1$ such that, for  some block $b_2$: 
\begin{enumerate}
\item $b_2.\text{view}=v_2$;
\item $b_1$ is not an ancestor of $b_2$, and;
\item $b_2\in \mathtt{blocks}(M^*)$.
\end{enumerate}
From Lemmas~\ref{C1certlem} and \ref{C2imp1}, it follows that $v_2>v_1$. According to clause (iii) from the definition of $\mathtt{blocks}$, the parent of $b_2$, $b_0$ say, must be in  $\mathtt{blocks}(M^*)$.  By our choice of $v_2$, it follows that $b_0.\text{view}<v_1$. This gives a contradiction, because by clause (iv) from the definition of $\mathtt{blocks}$, $b_2$ cannot be added to $\mathtt{blocks}(M^*)$ unless $M^*$ contains an N-certificate for view $v_1$. By Lemma~\ref{C2certlem}, such an N-certificate cannot exist. 
\end{proof}

\subsection{Liveness} 

Let $\mathtt{blocks}_i$ be the variable $\mathtt{blocks}$ as locally defined for $p_i$, and let $\mathtt{blocks}_i(t)$ be $\mathtt{blocks}_i$ as defined at the end of timeslot $t$. We use similar notation w.r.t.\ the local value N-$\mathtt{certificates}$. 

\begin{lemma}[Agreement on $\mathtt{blocks}$] \label{2blockslem} 
If $p_i$ and $p_j$ are correct and $t$ is the least timeslot such that $b\in \mathtt{blocks}_i(t)$, then for $t':=\text{max} \{ t, \text{GST} \}+ 2\delta +s$, $b\in  \mathtt{blocks}_j(t')$. 
\end{lemma}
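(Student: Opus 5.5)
The plan is to argue by induction on the view of $b$ (equivalently, on the length of the chain of ancestors of $b$), checking one at a time that the four conditions (i)--(iv) in the definition of $\mathtt{blocks}(M)$ hold for $p_j$ by $t'$. Throughout, write $t_G:=\max\{t,\text{GST}\}$ and use that a message sent at time $u$ arrives by $\max\{u,\text{GST}\}+\delta$. For $b=b_{\text{gen}}$ there is nothing to prove.

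\emph{Conditions (i), (iii), (iv).} For (iii), the parent $b'$ of $b$ lies in $\mathtt{blocks}_i(t)$, so it entered $\mathtt{blocks}_i$ at some least $t_0\le t$. The induction hypothesis then gives $b'\in\mathtt{blocks}_j(\max\{t_0,\text{GST}\}+2\delta+s)\subseteq \mathtt{blocks}_j(t')$. For (i) and (iv), $p_i$ holds a stage-1 certificate for $b$ and N-certificates for all views in $(b'.\text{view},b.\text{view})$ by time $t$. By line~\ref{2Ndis}, $p_i$ disseminated each of these upon first receipt. Hence $p_j$ receives them by $t_G+\delta$. All correct processors therefore hold the stage-1 certificate for $b$ by $t_G+\delta$, which is the precondition (iv) for echoing in Algorithm~\ref{alg4}.

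\emph{Condition (ii), fragments.} This is the substantive part. If $p_i$ received a stage-2 vote for $b$ from $p_j$ before $\mathtt{T}(b)$ expired, then $p_j$ had already added $b$ to $\mathtt{blocks}_j$, since correct processors vote only after adding, and we are done. Otherwise, at $t+s$ the timer fires (lines~\ref{f1a}--\ref{f1b}). Consider any correct $p_m\neq\mathtt{lead}(w)$ from which $p_i$ has no stage-2 vote. By $t_G+s+\delta$, $p_m$ holds its own certified recovery fragment $F_m$: either $p_i$ sends it, or it was already exchanged between them and so $p_m$ already has it. Since $p_m$ also has the stage-1 certificate, the rule at line~\ref{sendowns} makes $p_m$ send $F_m$ to $p_j$, unless $p_m$ already sent it or $p_j$ already sent it to $p_m$. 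In the exceptional cases $p_j$ already holds $F_m$ (for $k=n-f-1$ this covers the case where fragments were echoed at stage-1 voting). In all cases $p_j$ holds $F_m$ by $t_G+s+2\delta$. In addition, $p_i$ sends $p_j$ its own fragment $F_i$. Every correct non-leader processor thus contributes a distinct recovery fragment, giving at least $n-f-1$ certified fragments of the recovery tag $b.\text{rtag}$. By condition (ii) at $p_i$, this tag commits to the same payload as $b.\text{tag}$, so $p_j$ decodes and verifies both tags. Hence $b\in\mathtt{blocks}_j(t')$.

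\emph{Main obstacle.} The delicate step is the case in the previous paragraph where $p_i$ \emph{has} received a stage-2 vote from some correct $p_m$. Then $p_i$ sends $p_m$ nothing. Moreover, $p_m$ may not echo $F_m$ to $p_j$ within the deadline, because its own timer $\mathtt{T}(b)$ may fire only at (its addition time)$+s$. I would try first to fix this by observing that such a $p_m$ added $b$ before $p_i$'s timer fired, at a time $t_m<t+s$. Then I would either (a) apply the statement to $p_m$ recursively, taking the least time any correct processor added $b$ (so $t$ is minimal and $t_m\ge t$ is not needed), or (b) note that $p_m$ already had $F_m$ from the primary or recovery encoding together with the stage-1 certificate, so line~\ref{sendowns} already made it send $F_m$ to $p_j$ without any timer. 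If neither closes the timing gap, I would count only processors lacking $b$ plus those whose fragments $p_j$ obtains through $p_m$'s stage-2-vote-free relationship with $p_j$. The aim is still to reach $n-f-1$ fragments by $t_G+2\delta+s$, checking carefully the ``unless already sent or received'' clauses in lines~\ref{f1a}--\ref{f2b}.
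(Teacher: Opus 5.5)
\textbf{There is a genuine gap: the step you flag as the main obstacle is left open, and it cannot be closed for the deadline as stated.} Your decomposition is the same as the paper's: induction on $b.\text{view}$, conditions (i), (iii), (iv) from line~\ref{2Ndis} and the induction hypothesis, and (ii) from the timer rule followed by the echo rule of Algorithm~\ref{alg4}. None of your proposed repairs works:
\begin{itemize}
\item \emph{Repair (a).} Applying the statement to a later adder $p_m$ only gives the deadline $\max\{t_m,\text{GST}\}+2\delta+s$. Minimality of $t$ bounds $t_m$ from below, not from above. Passing to the first correct processor to add $b$ is harmless, but that processor is in exactly the same position.
\item \emph{Repair (b).} This fails when $k>n-f-1$. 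Then $F_m$ is a $(\text{rec},b,m,\dots)$ fragment of the second encoding, not the primary fragment $p_m$ echoed when it voted. So $p_m$ has \emph{received} $F_m$ only if some processor's timer for $b$ expired before that processor saw $p_m$'s stage-2 vote. Reading `received' as `computable after decoding' would make every decoder push recovery fragments in the good case, which the design rules out.
\end{itemize}
So your count at $p_j$ by $\max\{t,\text{GST}\}+s+2\delta$ covers only $F_i$, $F_j$, and the processors from which $p_i$ lacked a stage-2 vote at $t+s$.

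This is not a missing trick. The paper's own proof passes over the same point with the unjustified claim that lines~\ref{sendowns}--\ref{f2b} make every correct non-leader forward its recovery fragment by $\max\{t+s,\text{GST}\}+\delta$. In fact the stated deadline can fail:
\begin{itemize}
\item Take $f\geq 2$ and a Byzantine leader that uses $k=n-1$ and gives every correct processor its primary fragment. All correct processors then vote and echo.
\item Let the other $f-1$ Byzantine processors send their primary fragments to $p_i$ early, so $p_i$ adds $b$ at some $t\geq\text{GST}$.
\item Let them send these fragments late to every correct processor other than $p_i,p_j$. These all add $b$ at a common $t_A$ with $t+\delta<t_A\leq t+s-\delta$, so their stage-2 votes reach $p_i$ and each other by $t+s$. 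This is possible whenever $s>2\delta$, and $\delta$ is unknown and may be far below $s$.
\item Let them never send these fragments to $p_j$.
\end{itemize}
At $t+s$, among correct processors $p_i$ sends recovery fragments (positions $i$ and $j$) only to $p_j$. No one sends the others their own recovery fragments, so line~\ref{sendowns} is idle for them. They send $F_m$ to $p_j$ only when their own timers expire at $t_A+s$, and it may arrive at $t_A+s+\delta>t+s+2\delta$. Hence at $t'$, $p_j$ holds at most $n-f<k$ primary fragments and two recovery fragments, which is fewer than $n-f-1$. So $b\notin\mathtt{blocks}_j(t')$.

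What your case split does establish is the weaker deadline $\max\{t,\text{GST}\}+2s+2\delta$:
\begin{itemize}
\item A processor without a stage-2 vote at $p_i$ receives $F_m$ from $p_i$ (or already had and forwarded it) and echoes it to $p_j$ by $\max\{t,\text{GST}\}+s+2\delta$.
\item A processor whose vote reached $p_i$ by $t+s$ added $b$ before $t+s$. Its own timer therefore fires before $t+2s$ and sends $F_m$ directly to $p_j$, arriving by $\max\{t,\text{GST}\}+2s+\delta$.
\end{itemize}
Together with $F_i$, this covers every correct non-leader position, hence at least $n-f-1$ fragments. Lemma~\ref{2timev} and the timeout constants of Section~\ref{2spec}, which are built on the $2\delta+s$ skew, would then need an extra $s$.
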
 
\begin{proof} 
The proof is by induction on $b.\text{view}$. The result holds trivially for $b_{\text{gen}}$. So, suppose the conditions in the statement of the lemma hold, $v:=b.\text{view}>0$,  and that the claim holds for all previous views (and so, for the parent of $b$). 
 In this case,  
$p_i$ has received  a stage-1 certificate for $b$ by $t$, and so disseminates this by $t$ (line \ref{2Ndis}). It follows that all correct processors receive a stage-1 certificate for $b$ by $\text{max} \{ t, \text{GST} \}+ \delta$. If $b'$ is the parent of $b$ with $b'.\text{view}=v'$, then $p_i$ has also received N-certificates for all views in $(v',v)$, and has disseminated these by $t$  (line \ref{2Ndis}). At $t+s$, $p_i$ executes lines \ref{f1a}-\ref{f1b} of Algorithm \ref{alg4}  with respect to $b$. This ensures that all correct processors receive their certified recovery fragments of $b$ (or already know the block)   by  $\text{max} \{ t+s, \text{GST} \}+ \delta$. Lines \ref{sendowns}-\ref{f2b} ensure that correct processors other than $\mathtt{lead}(v)$ disseminate their recovery fragments to those who need it for reconstruction by the latter timeslot. All correct processors therefore receive a stage-1 certificate for $b$, N-certificates for all views in $(v',v)$, and fragments sufficient to decode the payload of $b$ by $\text{max} \{ t+s, \text{GST} \}+ 2\delta$. By the induction hypothesis, and since the parent of $b$ is in $\mathtt{blocks}_i(t)$, it follows that for $t':=\text{max} \{ t, \text{GST} \}+ 2\delta +s$ and for any correct processor $p_j$, $b\in  \mathtt{blocks}_j(t')$ as required. 
\end{proof} 

\begin{lemma}[Timely superview entry] \label{2timev} 
If correct $p_i$ enters superview $w$ at $t$, then all correct processors enter superview $w$ by $t':=\text{max} \{ t, \text{GST} \}+ 2\delta +s$. 
\end{lemma}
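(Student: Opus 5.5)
The plan is to use induction on $w$, mirroring Lemma~\ref{timev} for Carnot~1, with Lemma~\ref{2blockslem} doing the work that dissemination of M-certificates did there.

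\textbf{Base case.} For $w=1$ the claim is immediate, since all correct processors start in superview~1.

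\textbf{Inductive step.} Suppose $w>1$ and the claim holds for superview $w-1$. Let correct $p_i$ enter superview $w$ at $t$.

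By line~\ref{newwclause}, at $t$ the set N-$\mathtt{certificates}_i(t)\cup \mathtt{blocks}_i(t)$ contains, for each view $v$ of superview $w-1$, either an N-certificate for $v$ or a view $v$ block. For each such view I would do the following.

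\emph{N-certificate case.} Processor $p_i$ disseminates each new N-certificate upon receipt (line~\ref{2Ndis}). So every correct $p_j$ has it by $\max\{t,\text{GST}\}+\delta \le t'$.

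\emph{Block case.} Let $b$ be the view $v$ block and $t_b\le t$ the least timeslot with $b\in\mathtt{blocks}_i(t_b)$. Lemma~\ref{2blockslem} gives $b\in\mathtt{blocks}_j(\max\{t_b,\text{GST}\}+2\delta+s)\subseteq \mathtt{blocks}_j(t')$. This uses that $\mathtt{blocks}(M)$ is monotone in $M$ and that $\mathtt{S}$ only grows.

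So by $t'$ every correct $p_j$ satisfies the content condition of line~\ref{newwclause} for superview $w-1$.

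\textbf{Main obstacle.} The condition in line~\ref{newwclause} is checked only relative to $p_j$'s current $\mathtt{w}$. So I also need $p_j$ to have reached superview $w-1$ by $t'$.

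To get this, note that $p_i$ itself entered $w-1$ at some $t_0\le t$. The induction hypothesis then puts every correct processor in $w-1$ by $\max\{t_0,\text{GST}\}+2\delta+s\le t'$.

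At the first timeslot $\le t'$ at which $p_j$ is in $w-1$ and the condition holds, $p_j$ executes line~\ref{2newsuperview} and moves to $w$. If $p_j$ was already in a later superview, the claim holds trivially, since superviews are entered sequentially.

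One subtlety to handle explicitly: a correct processor may have voted or nullified differently. This does not matter, because membership in $\mathtt{blocks}$ and possession of N-certificates depend only on received messages. Also, by Lemma~\ref{C2certlem}-style reasoning there is no conflict: line~\ref{newwclause} accepts either certificate type.
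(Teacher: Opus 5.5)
Your proof is correct and follows the paper's own argument: induction on $w$, using the induction hypothesis to place every correct processor in superview $w-1$ by $t'$, then Lemma~\ref{2blockslem} (with monotonicity of $\mathtt{blocks}$) for the views witnessed by blocks, and the dissemination of new N-certificates (line~\ref{2Ndis}) for the remaining views. You make explicit details that the paper leaves implicit, and the closing remark about conflicting certificate types is unnecessary, since the lemma concerns only view progression and not consistency.
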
 
\begin{proof} 
The proof is by induction on $w$. For $w=1$, the claim is immediate. Suppose $w>1$ and that the claim holds for all previous superviews.  Let $p_i$, $t$ and $t'$ be as in the statement of the lemma. By the induction hypothesis, it follows that all correct processors enter superview $w-1$ by $t'$. Processor $p_i$ enters superview $w$ at $t$ because N-$\mathtt{certificates}_i \cup \mathtt{blocks}_i$ contains either an N-certificate or a block for each view in superview $w-1$ (line \ref{newwclause}). The induction step then follows from Lemma \ref{2blockslem}, and since  $p_i$ disseminates new N-certificates upon receipt (line \ref{2Ndis}). 
\end{proof}

\begin{lemma}[Progression through superviews] \label{2progress} Correct processors enter every superview $w\in \mathbb{N}_{\geq 1}$.
\end{lemma}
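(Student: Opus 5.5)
We argue by contradiction, mirroring Lemma~\ref{progress} for Carnot~1 but at the level of superviews. Suppose some correct processor enters superview $w$ but never enters $w+1$. By Lemma~\ref{2timev}, all correct processors enter $w$ and none leaves it. By line~\ref{newwclause}, a correct processor leaves $w$ once N-$\mathtt{certificates}\cup\mathtt{blocks}$ contains an N-certificate or a block for every view of $w$. So there must be some view $v$ in $w$ such that no correct processor ever obtains either an N-certificate for $v$ or a view $v$ block in $\mathtt{blocks}$. Fix such a $v$; we will show that an N-certificate for $v$ eventually forms, which is a contradiction.

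First, no correct processor ever adds a view $v$ block to $\mathtt{blocks}$. Indeed, by Lemma~\ref{2blockslem}, if one did then all would, and they would then have a block for view $v$ in $\mathtt{blocks}$. Hence no correct processor ever disseminates a stage-2 vote for a view $v$ block, because line~\ref{2vote2} is only reached for $b\in\mathtt{blocks}$. So every correct processor has $2\mathtt{voted}(v)=\text{false}$ forever.

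Next, every correct processor remains in superview $w$ with its timer $\mathtt{T}$ running and never reset. Suppose $v$ is the $j^{\text{th}}$ view of $w$. When $\mathtt{T}$ reaches $4\Delta+2s+js^*$, clause~(b) makes $\mathtt{TimeoutReady}(v)$ true, since $2\mathtt{voted}(v)=\text{false}$. The main point to check is that the processor then actually nullifies $v$. The instruction nullifies every $v'\geq v^*$ in $w$ with $\mathtt{nullified}(v')=2\mathtt{voted}(v')=\text{false}$, where $v^*$ is the least view with $\mathtt{TimeoutReady}(v^*)$ true. Here $v^*\leq v$, so $v$ lies in the nullified range. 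If instead $\mathtt{nullified}(v)$ was already true, the processor has already sent nullify$(v)$. Either way, every correct processor sends nullify$(v)$.

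There are at least $n-f$ correct processors, so all correct processors eventually receive $n-f$ nullify$(v)$ messages and form an N-certificate for $v$. This contradicts the choice of $v$. The step needing most care is the subtlety that $\mathtt{TimeoutReady}(v)$ is evaluated only at exact timer values, so we must confirm those values are reached. They are, because $\mathtt{T}$ is reset only upon entering a new superview, which by assumption never happens. Since every correct processor is in $w$ at that time, the instruction applies to $v$.
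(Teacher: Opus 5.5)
Your proof is correct and takes the same approach as the paper: argue by contradiction, use Lemma~\ref{2timev} and Lemma~\ref{2blockslem} to find a view $v$ of $w$ for which no correct processor ever adds a view $v$ block to $\mathtt{blocks}$ (so none sends a stage-2 vote for view $v$), then use the timeout to force every correct processor to send nullify$(v)$ and form an N-certificate. You also check details the paper leaves implicit: that the timer really reaches the clause (b) threshold because it is never reset, and that the nullification loop starting at the least timed-out view covers $v$.
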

\begin{proof}  Towards a contradiction, suppose that some correct processor $p_i$ enters superview $w$, but never enters superview $w+1$.  From Lemma \ref{2timev}, it follows that:
\begin{itemize}
\item All correct processors enter superview $w$;
\item No correct processor leaves superview $w$.
\end{itemize}
From Lemma  \ref{2blockslem}, and since correct processors disseminate new N-certificates upon receipt, it follows that for some view $v$ in superview $w$, no correct processor ever adds a view $v$ block to their local value $\mathtt{blocks}$, and no correct processor ever receives an N-certificate for view $v$. 
Then no correct processor disseminates a stage-2 vote for any view $v$ block, since they would add a view $v$ block to $\mathtt{blocks}$ upon doing so. $\mathtt{TimeoutReady}$ is therefore eventually  true for each  correct processor. All correct processors therefore disseminate nullify$(v)$ messages (line \ref{2time-out}), and so receive an N-certificate for view $v$. This gives the required contradiction. 
 \end{proof}

\begin{lemma}[Correct leaders finalise blocks] \label{2L1} 
Suppose view $v$ is the initial view of superview $w$. If $p_i=\mathtt{lead}(w)$ is correct, and if the first correct processor to enter superview $w$ does so at $t\geq \text{GST}$, then all correct processors add a view $v$ block $b$ to their local value $\mathtt{blocks}$ by  $t+4\delta+2s+s^*$, and receive a stage-2 certificate for $b$ by $t+5\delta+2s+s^*$.  It also holds that $\mathtt{TimeoutReady}(v)$ is never true for any correct processor in this case, so that no correct processor disseminates a nullify$(v)$ message.  
\end{lemma}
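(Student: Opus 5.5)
We follow the timeline of a correct leader's first block, mirroring Lemma~\ref{L1}, but using Lemma~\ref{2timev} for synchronisation and the timeout thresholds of $\mathtt{TimeoutReady}(v)$ with $j=1$. By Lemma~\ref{2timev}, every correct processor, including $p_i$, enters superview $w$ by $t+2\delta+s$. Since $\mathtt{ProposeReady}$ is true for $p_i$ upon entry (Case~1), and by our standing assumption on proposal timing, $p_i$ proposes the view $v$ block $b$ by $t+2\delta+s+s^*$. Hence every correct processor receives its certified fragment of $b$ by $t+3\delta+s+s^*$. Each processor's timer $\mathtt{T}$ was reset upon entering $w$ at some time $\geq t$, so threshold (a), namely $3\Delta+s+s^*$, has not been reached. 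Every correct processor therefore disseminates a stage-1 vote for $b$, and also its fragment if it is not the leader (lines~\ref{2vote1}--\ref{2fdis}), by $t+3\delta+s+s^*$.

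Next we establish addition to $\mathtt{blocks}$. All $n-f$ correct processors vote, so every correct processor has a stage-1 notarisation (hence certificate) for $b$ by $t+4\delta+s+s^*$. Since $g(\mathtt{S},v)=n-f-1$ for initial views, the $n-f-1$ correct non-leaders echo enough certified fragments for decoding by the same time. Because $p_i$ is correct, the two tags are consistent, which gives condition (ii). For (iii) and (iv), the parent $b^\ast=\mathtt{b}$ of $b$ lies in $p_i$'s $\mathtt{blocks}$, and $p_i$ holds N-certificates for all intervening views, since it advanced through earlier superviews via line~\ref{newwclause}. By Lemma~\ref{2blockslem}, all correct processors have $b^\ast$ in $\mathtt{blocks}$ by $\max\{t_0,\text{GST}\}+2\delta+s$, where $t_0\le t+2\delta+s$ is when $p_i$ added it. Here lies the main obstacle: this naive bound gives $t+4\delta+2s$, and we also need the N-certificates, which $p_i$ disseminated upon receipt. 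I would argue more carefully that $p_i$ holds these items no later than its entry into $w$, and that the first correct entrant at $t$ must itself already have $b^\ast$ or the relevant N-certificates. Applying Lemma~\ref{2blockslem} from a time $\le t$ then places the parent in every correct $\mathtt{blocks}$ by $t+2\delta+s$. If this tighter argument fails, we fall back on the parent arriving by $t+4\delta+2s$, which still lies within the stated bound $t+4\delta+2s+s^*$. Either way, all correct processors add $b$ to $\mathtt{blocks}$ by $t+4\delta+2s+s^*$, giving the first claim.

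Each correct processor then disseminates a stage-2 vote at the moment of addition (line~\ref{2vote2}), provided it has not nullified $v$. The addition time $t+4\delta+2s+s^*$, measured from an entry time $\geq t$, is within $4\Delta+2s+s^*$, so clause (b) never fires; clause (a) has already been ruled out. Consequently $\mathtt{nullified}(v)$ stays false and every correct processor votes. All correct processors then receive $n-f$ stage-2 votes, and hence a stage-2 certificate for $b$, by $t+5\delta+2s+s^*$. Since $b\in\mathtt{blocks}$ already, clause (c) with threshold $5\Delta+2s+s^*$ is never triggered. So $\mathtt{TimeoutReady}(v)$ is never true, and no correct processor sends nullify$(v)$. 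The delicate points are the comparisons between absolute times and local timer values: we use that every correct processor enters $w$ at or after $t$ and that $\delta\le\Delta$.
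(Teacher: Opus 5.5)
Your proof is correct and follows essentially the same route as the paper. You synchronise entry into $w$ via Lemma~\ref{2timev}, place the proposal by $t+2\delta+s+s^*$, bound the parent's appearance in every correct $\mathtt{blocks}$ by $t+4\delta+2s$ via Lemma~\ref{2blockslem} and the N-certificates by $t+3\delta+s$, and then check timeout clauses (a)--(c) using $\delta\le\Delta$. Your fallback bound is exactly the one the paper uses, so you can simply drop the speculative ``tighter'' argument. It is not needed, and it need not hold: $p_i$ may add its parent to $\mathtt{blocks}$ only after $t$, for a view that the first entrant had merely seen nullified.
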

\begin{proof} 
Suppose the conditions in the statement of the lemma hold. From Lemma~\ref{2timev}, it follows that all correct processors (including $p_i$) enter superview $w$ by $t+2\delta+s$. $\mathtt{ProposeReady}$ is true when $p_i$ enters superview $w$, and  $\mathtt{b}$ as locally defined for $p_i$ is in $\mathtt{blocks}_i$ at this time, and so is $\mathtt{blocks}$ for all correct processors by $t+4\delta+2s$ (by Lemma \ref{2blockslem}). We also stipulated that, when $v$ is initial, the reconstruction parameter $k$ should be set to $k:=n-f-1$. Let $v':=\mathtt{b}.\text{view}$. Since $p_i$ has entered superview $w$, it has received N-certificates for all views in $(v',v)$, and all correct processors receive these by $t+3\delta+s$. As stated in Section \ref{2spec}, our formal assumption is that $p_i$ sends a proposal $b$ by $t+2\delta+s+s^*$.\footnote{In the standard model of partial synchrony, we could set $s^*=0$, but we also wish timing for the superview to make sense in a context where bandwidth is limited.} So  all correct processors receive their certified fragments of $b$ by $t+3\delta+s+s^*$ and disseminate stage-1 votes for $b$  by this time (before their local value $\mathtt{TimeoutReady}(v)$ becomes true through clause (a)). All correct processors other than $p_i$ also disseminate their corresponding fragments of $b$ by this time. 

It follows that, by  $t+4\delta+2s+s^*$, all correct processors add the parent of $b$ to $\mathtt{blocks}$,  receive a stage 1 certificate for $b$ along with certified fragments of $b$ sufficient to decode $b$, and also receive N-certificates for all views in $(v',v)$. All correct processors therefore add $b$ to $\mathtt{blocks}$ and disseminate stage-2 votes for $b$ by this time (before their local value $\mathtt{TimeoutReady}(v)$ becomes true through clause (b)).

All correct processors therefore receive a stage-2 certificate for $b$ by $t+5\delta+2s+s^*$ before their local value $\mathtt{TimeoutReady}(v)$ becomes true through clause (c), meaning that no correct processor disseminates a nullify$(v)$ message. 
\end{proof}

Recall the following terminology from Section \ref{formal}, which we now modify to fit with Carnot ~2. Consider a view $v$ in superview $w$ with correct leader $p_i$, and suppose $p_i$ proposes a view $v$ block $b$. Let $k$ be the reconstruction parameter for $b$, and let $I$ be the set of processors $p_j$ that disseminate a certified fragment of $b$ at $j$ upon receiving such a fragment from $p_i$. If $|I|\geq k$, then we say \emph{fragments are well disseminated in view $v$}. 

\begin{lemma}[Correct leaders finalise many blocks] \label{2L2} Let $v_1,\dots,v_x$ be the views of superview $w$ (in order from least to greatest). Suppose  $p_i=\mathtt{lead}(w)$ is correct, and that the first correct processor to enter superview $w$ does so at $t\geq \text{GST}$. For some $j\in (1,x]$, suppose further that fragments are well disseminated in all views in the interval $(v_1,v_j]$. Then all correct processors  add a view $v_j$ block, $b_j$ say,  to their local value $\mathtt{blocks}$ by $t+4\delta+2s+js^*$, and receive a stage-2 certificate for $b_j$ by $t+5\delta+2s+js^*$. Furthermore, $\mathtt{TimeoutReady}(v_j)$ is never true for any correct processor in this case, so that no correct processor disseminates a nullify$(v_j)$ message.  
\end{lemma}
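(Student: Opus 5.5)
The plan is to prove the claim by induction on $j$, with Lemma~\ref{2L1} supplying the base case $j=1$. The induction hypothesis for $j-1$ says three things: all correct processors add $b_{j-1}$ to $\mathtt{blocks}$ by $t+4\delta+2s+(j-1)s^*$; they receive a stage-2 certificate for $b_{j-1}$ by $t+5\delta+2s+(j-1)s^*$; and no correct processor ever disseminates nullify$(v_{j-1})$. For the last point, I will also check that the nullify instruction (line~\ref{2time-out}) fires only from the least view with $\mathtt{TimeoutReady}$ true. Hence, if no earlier view in the superview ever times out, no correct processor disseminates nullify$(v_j)$ through a cascade from an earlier view. It then suffices to show that $\mathtt{TimeoutReady}(v_j)$ itself never becomes true.

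For the inductive step I will follow the timeline in the proof of Lemma~\ref{2L1}.
\begin{enumerate}
\item By Lemma~\ref{2timev}, all correct processors, including $p_i$, enter superview $w$ by $t+2\delta+s$.
\item By the formal assumption on $\mathtt{ProposeReady}$ stated with the $\mathtt{TimeoutReady}$ predicates, $p_i$ proposes $b_j$, with parent $b_{j-1}$, by $t+2\delta+s+js^*$.
\item Every correct processor therefore receives its certified fragment of $b_j$ by $t+3\delta+s+js^*$. Line~\ref{2vote1} requires only the fragment, not the parent in $\mathtt{blocks}$, so each correct processor disseminates a stage-1 vote (and, if not the leader, its fragment) by then. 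This is before clause (a) can fire, since the timer is measured from superview entry and $\delta\le\Delta$.
\item Since fragments are well disseminated in $v_j$, at least $k$ certified fragments are disseminated. All correct processors thus obtain a stage-1 certificate for $b_j$ and enough fragments to decode by $t+4\delta+s+js^*$.
\end{enumerate}

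Next I verify the four conditions for adding $b_j$ to $\mathtt{blocks}$. Conditions (i) and (ii) hold by step 4. Condition (iii) holds because the parent $b_{j-1}$ is in $\mathtt{blocks}$ by $t+4\delta+2s+(j-1)s^*$, by the induction hypothesis. Condition (iv) holds vacuously, since $v_{j-1}$ and $v_j$ are consecutive views. Two points need care. First, a correct leader's dual tags are consistent, so the re-encoding check in (ii) passes. Second, the decoded payload fits the bound with $js^*$ rather than the looser $t+4\delta+2s+(j-1)s^*$ coming from the parent, which is at most $t+4\delta+2s+js^*$. Hence every correct processor adds $b_j$ to $\mathtt{blocks}$ by $t+4\delta+2s+js^*$.

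At that point $\mathtt{nullified}(v_j)$ is still false, so line~\ref{2vote2} fires, and this happens before clause (b) can trigger at $4\Delta+2s+js^*$ after entry. All correct processors then receive a stage-2 certificate for $b_j$ by $t+5\delta+2s+js^*$, which beats clause (c). The main obstacle is the timeout comparison. Each processor $p$ measures $\mathtt{T}$ from its own superview entry time $t_p\ge t$, so I must check that every event bound, of the form $t+c\delta+\dots$, lies below $t_p+c'\Delta+\dots$. This works because $t_p\ge t$ and each $\delta$-coefficient is at most the corresponding $\Delta$-coefficient. The check must also account for the extra $s$ terms, coming from Lemma~\ref{2blockslem}, in the parent's availability.
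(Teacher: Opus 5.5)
Your proposal is correct and follows essentially the same route as the paper: induction on $j$ with Lemma~\ref{2L1} as the base, then the same timeline (proposal of $b_j$ by $t+2\delta+s+js^*$, fragments and stage-1 votes one $\delta$ later, addition of $b_j$ to $\mathtt{blocks}$ using the parent from the induction hypothesis, and a stage-2 certificate by $t+5\delta+2s+js^*$), with each deadline checked against the matching timeout clause. The paper leaves implicit the extra points you verify: that condition (iv) is vacuous, that the dual tags are consistent, the per-processor timer comparison, and the nullify cascade; for the cascade you need the hypothesis for all $j'<j$, as in the paper's strong induction, not just $j-1$.
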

\begin{proof} 
Suppose the conditions in the statement of the lemma hold. The proof is by induction on $j>1$. Suppose the claim holds for all  $j' \in (1,j)$. Combined with Lemma~\ref{2L1}, this means that no correct processor disseminates a nullify$(v_{j'})$ message for any $j'<j$, and that no correct processor disseminates a  nullify$(v_j)$ message because $\mathtt{TimeoutReady}(v_{j'})$ is true for $j'<j$. All correct processors add blocks $b_1,\dots,b_{j-1}$ to $\mathtt{blocks}$ by $t+4\delta+2s+(j-1)s^*$ (where each $b_{j'}$ is a block for view $v_{j'}$). 

As stated in Section \ref{2spec}, our formal assumption is that $p_i$ sends the proposal $b_j$ by $t+2\delta+s+js^*$. So  all correct processors receive their certified fragments of $b_j$ by $t+3\delta+s+js^*$ and disseminate stage-1 votes for $b_j$  by this time, before their local value $\mathtt{TimeoutReady}(v_j)$ becomes true through clause (a). All correct processors other than $p_i$ also disseminate their corresponding fragments of $b_j$ by this time. 

Since  fragments are well disseminated in view $v_j$, it follows that, by  $t+4\delta+2s+js^*$, all correct processors receive a stage 1 certificate for $b_j$ along with certified fragments of $b_j$ sufficient to decode it.  Since the parent of $b_j$ is already in $\mathtt{blocks}$, correct processors therefore add $b_j$ to $\mathtt{blocks}$ and disseminate stage-2 votes for $b_j$ by this time, before their local value $\mathtt{TimeoutReady}(v_j)$ becomes true through clause (b).

All correct processors therefore receive a stage-2 certificate for $b_j$ by $t+5\delta+2s+js^*$ before their local value $\mathtt{TimeoutReady}(v_j)$ becomes true through clause (c), meaning that no correct processor disseminates a nullify$(v_j)$ message. 
\end{proof}

\begin{lemma}[Liveness] \label{2liveness}
Carnot 2 satisfies Liveness.
\end{lemma}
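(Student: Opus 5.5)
The plan is to mirror the proof of Lemma~\ref{liveness} for Carnot~1, using the Carnot~2 liveness lemmas in place of their Carnot~1 counterparts. Suppose correct $p_i$ receives the transaction $\text{tr}$ at some timeslot $t_0$. By Lemma~\ref{2progress}, all correct processors enter every superview. Since leaders of superviews rotate through all $n$ processors, we may choose a superview $w$ with $\mathtt{lead}(w)=p_i$ such that the first correct processor to enter $w$ does so at some $t\geq \max\{\text{GST},t_0\}$. Let $v$ be the initial view of $w$.

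By Lemma~\ref{2L1}, every correct processor adds some view $v$ block $b$, proposed by $p_i$, to its local value $\mathtt{blocks}$. Every correct processor also receives a stage-2 certificate for $b$. From the ProposeBlock procedure, the payload of $b$ consists of all transactions received by $p_i$ that are not included in the payloads of ancestors of $b^\ast=\mathtt{b}$. Since $p_i$ receives $\text{tr}$ before entering $w$, $\text{tr}$ lies in the payload of $b$ or of some ancestor of $b$.

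It remains to check that $\text{tr}\in \mathcal{F}(M_{p_j}(t'))$ for every correct $p_j$ and sufficiently large $t'$.
\begin{itemize}
\item Since $b\in \mathtt{blocks}(M_{p_j}(t'))$ and, by clause (iii) of the definition of $\mathtt{blocks}$, membership is closed under taking parents, every ancestor of $b$ is also in $\mathtt{blocks}(M_{p_j}(t'))$.
\item Because $M_{p_j}(t')$ contains a stage-2 certificate for $b$, every ancestor $b_i$ of $b$ has the descendant $b$ certified. So the chain from $b_{\text{gen}}$ to $b$ satisfies the conditions in the definition of $\mathcal{F}$.
\end{itemize}

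The main obstacle is that $\mathcal{F}$ is defined via the \emph{unique longest} such sequence, and returns the empty sequence otherwise. I must therefore rule out a competing chain of equal length that is not a prefix-extension of the chain to $b$. For this I will reuse the argument from the Consistency lemma for Carnot~2, applied to $M=M_{p_j}(t')\subseteq M^*$. That argument shows that any two blocks in $\mathtt{blocks}(M^*)$ that each have a stage-2 certificate are consistent, i.e., one is an ancestor of the other.

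Hence all qualifying sequences are prefixes of a single chain, so the longest one is unique. It extends through $b$, which gives $\text{tr}\in\mathcal{F}(M_{p_j}(t'))$. One point to double-check is the case where $\text{tr}$ was already included in an ancestor proposed in an earlier superview by some other leader. This is harmless, since that ancestor lies on the chain to $b$ anyway.
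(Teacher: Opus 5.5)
Your proposal is correct and follows the paper's proof. You choose a superview $w$ led by $p_i$ whose first correct entrant arrives after both $\text{GST}$ and the receipt of $\text{tr}$, apply Lemma~\ref{2L1} to get a view-$v$ block $b$ that every correct processor adds to $\mathtt{blocks}$ and holds a stage-2 certificate for, and use ProposeBlock to place $\text{tr}$ in $b$ or an ancestor. Your extra step---reusing the Consistency argument to show that the longest qualifying chain in the definition of $\mathcal{F}$ is unique and passes through $b$---is sound, and makes explicit what the paper compresses into ``$\text{tr}$ is finalised by all correct processors.''
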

\begin{proof}
Suppose correct processor $p_i$ receives the transaction $\text{tr}$. By Lemma~\ref{2progress}, all correct processors enter every superview. Let $v$ be the initial view of some superview $w$ with $p_i = \mathtt{lead}(w)$,  such that the first correct processor to enter superview $w$ does so at or after $\text{GST}$, and after $p_i$ has received $\text{tr}$.  By Lemma~\ref{2L1}, all correct processors add some view $v$ block $b$ to $\mathtt{blocks}$, and receive  a stage-2 certificate for $b$. 
From the definition of the ProposeBlock procedure,  $\text{tr}$ is included in the payload of $b$ or an ancestor of $b$, meaning that $\text{tr}$ is finalised by all correct processors. 
\end{proof}

\subsection{Communication complexity} 

Message and communication complexity are the same as for Carnot~1, except that each processor may be required to send up to two recovery fragments to each other processor if others are not able to reconstruct the block. For blocks with payload of size $B$ and a data expansion rate $d$, this still induces communication cost $O(Bdn+n^2 \log n)$ per view, meaning that the overall communication complexity per view remains $O(Bdn+n^2 \log n)$. A much more detailed analysis (in the pipes model) appears in Section \ref{anal2}.

\section{Pipes analysis for Dispersed Simplex} \label{pipesDSanal} 
\noindent We carry out the analysis in the pipes model for the
formally specified multi-view-leader version of DispersedSimplex with data
expansion rate $1.5$. As above, we restrict to the good case. In 
this version of the protocol, the leader enters a view at time $t$ 
upon receiving a support certificate for the previous view. Support 
certificates contain a hash and a signature, so we model them as 
being of length $3\lambda$.  Block proposal messages contain a 
fragment together with a hash and a validation path, so we model 
them as being of length $(1+\log n)\lambda + 1.5B/n$, where $B$ is 
the payload size and $1.5$ is the data expansion rate. Support 
shares additionally include a signature share, so we model these as 
being of length $(3+\log n)\lambda + 1.5B/n$. As above, we assume 
that processors begin forwarding fragments immediately upon receipt. 

\vspace{0.2cm} 
\noindent Suppose the leader enters a view at $t$.  It first 
disseminates the support certificate for the previous view, taking 
time $3n\lambda/S$. It then sends out block proposal messages, which 
other processors receive by:
\[ t + \frac{3n\lambda}{S} + \frac{1.5B}{S} + \frac{(1+\log n)n\lambda}{S} + \delta. \] 
By our forwarding assumption, other processors begin forwarding 
upon receipt, but each support share includes an additional 
signature share relative to the block proposal message. The 
additional $2\lambda$ per support share takes a further 
$2n\lambda/S$ to be removed from upload buffers and $\delta$ to be 
received by others. The view therefore takes time 
\[ T:= \frac{1.5B}{S} + \frac{(6+\log n)n\lambda}{S} + 2\delta. \] 
At equilibrium, $B=DT$. Substituting $B = DT$:
\[ T = \frac{1.5DT}{S} + \frac{(6+\log n)n\lambda}{S} + 2\delta, \]
which rearranges to 
\[ T\left( 1 - \frac{1.5D}{S} \right) = \frac{(6+\log n)n\lambda}{S} + 2\delta. \]
\noindent Equivalently, we can write 
\[ T = \left( \frac{(6+\log n)n\lambda}{S} + 2\delta \right) \cdot \frac{1}{1 - 1.5D/S}. \] 
Since $T \to \infty$ as $D \to 2S/3$, the latency bottleneck for 
this protocol is $D = 2S/3$, corresponding to the data expansion 
rate of $1.5$.

\vspace{0.2cm} 
\noindent To obtain total latency, we must account for: (i) the 
time (at most) $T$ that each transaction waits to be included in a 
block; (ii) the time $T$ for the view containing the block to play 
out; and (iii) the time required for the block to be committed. 
Once $B_v$ has been added to the complete block tree, parties 
broadcast a commit share for $v$ --- a message containing a hash and 
a signature share, of size approximately $3\lambda$ --- and the 
block is committed once a commit certificate is formed. This takes 
a further $3n\lambda/S + \delta$. Total latency is therefore: 
\[ 2T + \frac{3n\lambda}{S} + \delta = \left( \frac{(6+\log n)2n\lambda}{S} + 4\delta \right) \cdot \frac{1}{1 - 1.5D/S} + \frac{3n\lambda}{S} + \delta. \]
For reasonable values of $n$ and $S$, the $(6+\log n)n\lambda/S$ and 
$3n\lambda/S$ terms are small. For incoming transaction rates well 
below the bottleneck $2S/3$, the factor $1/(1-1.5D/S)$ is close to 
$1$, and so total latency is approximately $5\delta$. As $D$ 
approaches $2S/3$, the factor $1/(1-1.5D/S)$ tends to infinity, and 
so does latency.

\end{document}